\newcommand{\two}{I\hspace{-1.2pt}I}
\theoremstyle{definition}
\newtheorem{dfn}{Definition} 
\newtheorem{thm}{Theorem}
\newtheorem{example}{Example}
\newtheorem{prop}{Proposition}
\newtheorem{corollary}{Corollary}
\newtheorem{lemma}{Lemma}
\numberwithin{equation}{section}
\title{Classification of global structures of evaporating regular \\ black holes with infinite periods of time}
\author[1]{Kensuke Sueto}
\author[1,2]{Hirotaka Yoshino}
\affil[1]{Department of Physics, Graduate School of Science, Osaka Metropolitan University,, Osaka 558-8585, Japan}
\affil[2]{Nambu Yoichiro Institute of Theoretical and Experimental Physics (NITEP),
Osaka Metropolitan University, Osaka 558-8585, Japan}
\date{}
\begin{document}
\maketitle

\begin{abstract}
  We carry out model independent analyses for global structures of spherically symmetric regular black holes that evaporate and approach the extremal state spending infinite periods of time due to Hawking radiation. 
  We assume the radius of the outer apparent horizon (outer AH) to be a decreasing function of ingoing null coordinate, and consider the three cases, Cases 1, 2, and 3, where the radius of the inner AH is a constant, increases, and decreases,  respectively. 
  A complete classification of the Penrose diagrams is presented in each case, taking account not only of the presence and absence of the event horizon but also of the relative positions of the inner and outer AHs.
  Sufficient conditions that lead to each type are proved, and the examples of each type are summarized. 
  We also study the behavior of outgoing null geodesics in models where the null geodesic equation is solvable, and figure out the important factors to determine the type of a spacetime using concepts in the area of dynamical systems. 
  The formation of a Cauchy horizon is also established in the solvable models when the event horizon is present. 
 \end{abstract}

%
%
 \section{\label{sec-introduction}Introduction}
 The theory of general relativity has broad applicability to gravitational phenomena such as cosmology, black holes and gravitational waves, but it is not expected to be a complete theory of gravity because the singularity theorems predict the formation of spacetime singularities in fairly generic situations such as the inside of black holes~\cite{Penrose:1964wq}. For example, in a Schwarzschild spacetime, a spacelike singularity is located at $r= 0$, and the spacetime is not extendible beyond that singularity. 
 
 Another problem of general relativity arises if we consider the effects of quantum fields in a curved spacetime. Hawking has shown that a black hole emits quantum particles with a thermal spectrum~\cite{Hawking:1975vcx}, --Hawking radiation-- leading, once backreaction is included, to black hole evaporation. This leads to the well-known information loss problem~\cite{Hawking:1976ra}. For these reasons, many physicists think that the theory of general relativity is an incomplete gravitational theory.

 No one knows whether a complete theory of gravity is quantized or can be formulated as a classical effective theory. In the latter case, it is expected that black holes predicted by such a theory would not have any singularities and would have different properties from those currently known. 
 The hypothetical black holes without spacetime singularities are called ``regular black holes,'' and here, we provide a brief review on this topic.
 
 Regular black hole was first introduced by Bardeen~\cite{Bardeen-2677}. 
 Subsequently, Dymnikova and Hayward independently introduced new regular black hole metrics, which established the foundation for contemporary research on regular black holes~\cite{Hayward:2005gi,Dymnikova:1992ux,Dymnikova:2003vt}. One approach to realizing these regular black holes involves incorporating nonlinear electromagnetism into the Lagrangian~\cite{Ayon-Beato:2000mjt,Fan:2016hvf}. However, constructing regular black holes from the Einstein-Hilbert action with some nonlinear electromagnetism action requires fine-tuning of integration constants and coupling constants~\cite{Chinaglia:2017uqd,Maeda:2021jdc}, making it difficult to consider them as natural black hole models whose regularity is realized in generic situations. 
 Notably, in the past year, authors of Ref.~\cite{Bueno:2024dgm} have developed a Lagrangian that derives spherically symmetric regular black holes from pure gravity without matter fields. 
 This theory enables the construction of spherically symmetric regular black holes in five or more dimensions without fine-tuning of integration constants, and thereby, some regular black holes have been derived as the exact solutions of this theory~\cite{Bueno:2024eiga,Bueno:2025gjg,DiFilippo:2024mwm,Frolov:2024hhe,Konoplya:2024kih}. Thus, regular black holes can be predicted from classical theories, and investigation of their properties is expected to enhance our understanding not only of black holes but also of gravity in general. For interested readers, we refer to Refs.~\cite{Maeda:2021jdc,Lan:2023cvz} in which the history and properties of regular black holes are explained.

 The global structure of black holes plays a fundamental role in investigating their properties. 
 This significance stems from the fact that black holes are defined by the global structure of spacetime, and this global structure is essential when discussing issues of predictability such as global hyperbolicity and the information loss problem~\cite{Kodama:1979vm,Lesourd:2018ekn}. 
 For these reasons, in this article, we focus on the global structure of evaporating regular black holes. 
 The global structures of Hayward black holes and charged Hayward black holes that completely evaporate have already been investigated, demonstrating that they are globally hyperbolic and do not give rise to the information loss problem~\cite{Hayward:2005gi,Frolov:2014jva,Sueto:2023ztw}. 
 In those studies, the complete evaporation is assumed by hand based on the expectation that the backreaction effects of the Hawking radiation would become significant to violate the semiclassical approximation at the last stage of evaporation.
 By contrast, it has been established that regular black holes do not completely evaporate if the semiclassical approximation holds throughout the evaporation~\cite{Carballo-Rubio:2018pmi} since a regular black hole generally possesses an even number of horizons, and under the Stefan-Boltzmann law, its surface gravity approaches zero over an infinite time. 
 We have independently confirmed this conclusion by studying the time evolution of an evaporating Hayward spacetime as presented in Appendix~\ref{sec-Hayward}
 (see also \cite{Chen:2014jwq} for a review on black hole remnants). 
 Therefore, to develop our understanding of regular black holes further, in the main text of this paper, we investigate the global structure of spherically symmetric regular black holes that undergo evaporation over infinite periods of time.

 The contents of this paper are as follows.
 In the former part of this paper, we classify the possible types of Penrose diagrams 
 without relying on specific models in the spacetimes of evaporating regular black holes spending infinite periods of time.
 Our argument is fairly general in the sense that the only primary assumption is that
 the radius of the outer apparent horizon (outer AH) is a monotonically decreasing function of the outgoing null coordinate,
 while that of the inner AH is constant, monotonically increasing, and monotonically decreasing (which we call Cases 1, 2, and 3, respectively).
 The classification is carried out taking account not only of whether the event horizon exists or not,
 but also of the relative positions of the outer and inner AHs. 
 There are two distinct types of spacetimes in Case~1, two in Case~2 and four in Case~3.
 Interested readers are suggested to look at the Penrose diagrams in Figs.~\ref{pic-case1-diagrams},
 \ref{pic-case2-diagrams}, and \ref{pic-case3_diagrams}, in advance.
 For each case, we also derive some sufficient conditions to determine the type of a given spacetime,
 and summarize the examples of the functional forms of the radii of the inner and outer AHs
 that lead to each type.

 In the latter part of this paper, we study the models in which the equation for outgoing null geodesics is solvable, and interpret the behavior of the outgoing null geodesics using the concepts in the area of dynamical systems to figure out important factors to determine the type of a spacetime.
 Using these models, we examine whether a Cauchy horizon is formed or not in the types where the event horizon is present.

 This paper is organized as follows. In Section~\ref{sec-settings}, we explain the class of metrics of evaporating regular black holes that are studied in this paper, and introduce the concepts that are necessary for the classification of spacetimes. 
 We then carry out the classification of the spacetime in Cases~1, 2, and 3 in Secs.~\ref{sec-case1}, \ref{sec-case2} and \ref{sec-case3}, respectively, and prove some sufficient conditions to classify the given spacetimes into these types. 
 Examples of each type are also presented.
 In Sec.~\ref{Sec:Solvable-models}, we study the solvable models and interpret the results using the concepts of the dynamical systems, and in Sec.~\ref{sec-extendibility}, we examine the extendibility beyond the Cauchy horizon using the solvable models.
 Section \ref{sec-conclusions} is devoted to a conclusion. 
 In Appendix~\ref{sec-Hayward}, we show an explicit calculation for outgoing null geodesics and the extendibility of the spacetime of the evaporating Hayward black hole. 
 In Appendix~\ref{Examples-proof},  the examples of each type presented in the main text are proved to satisfy the sufficient condition for the corresponding type.
 Appendix~\ref{Finiteness-of-affine-parameters} presents the detailed calculation for the extendibility beyond the Cauchy horizon in solvable models.

\pagebreak


%
%
\section{\label{sec-settings}Settings and strategies}

In this section, we explain the setup of the problem and our strategies.

\subsection{Setup}
 We assume a general dynamically spherically symmetric spacetime whose metric is given by
 \begin{align}
     \label{eq-metric}
     ds^2=-f(v,r)A^2(v,r)dv^2+2A(v,r)dr dv+r^2d\Omega^2,
  \end{align}
 where $d\Omega^2$ refers to the metric of a two-dimensional unit sphere.
 Here, $A(v,r)>0$ is assumed in order to avoid the coordinate singularity. 
 We only treat asymptotically flat spacetimes, and hence, we require the condition 
 $\lim_{r\rightarrow \infty} f(v,r)=\lim_{r\rightarrow \infty} A(v,r)=1$. 
 The regularity conditions ensuring the absence of a curvature singularity at the center $r=0$
 are $f(v,0)=1, \left.\partial_r f(v,r)\right|_{r=0}=0$ and $\left.\partial_r A(v,r)\right|_{r=0}=0$. 
 The conditions $f(v,\infty)=f(v,0)=1$ imply that the equation $f(v,r)=0$ generally has an even number of solutions. 
 The solutions of $f(v,r)=0$ are the locations of the AHs (i.e., the surfaces with zero expansion) 
 since the expansion of the outgoing null geodesics is proportional to $f(v,r)$. 
 Therefore, asymptotically flat regular spacetimes generally have an even number of AHs. 
 In this paper, we assume that there be only two AHs and denote them as $r_\pm(v) \ (r_+(v)>r_-(v))$. 
 We henceforth refer to them simply as the outer AH and the inner AH. 
 From this assumption, the function $f(v,r)$ can always be expressed using a positive-valued function $F(v,r)$ as
 \begin{align}
     f(v,r)=F(v,r)(r-r_+(v))(r-r_-(v)).
     \label{formula-f(v,r)}
 \end{align}
 Since the normal vector to the AHs is 
 \begin{align}
     n\propto \mathrm{d}f=\partial_vf\mathrm{d}v+\partial_rf\mathrm{d}r,
 \end{align}
 where $\mathrm{d}$ is the external derivative, 
 the norm of this vector is 
 \begin{align}
     g^{-1}(n,n)&\propto \frac{2}{A}\partial_rf\partial_vf. 
 \end{align}
 As $\partial_r f$ is positive and negative on the outer and inner AHs, respectively, 
 the characteristics of the AHs are determined by the sign of $\left.\partial_v f\right|_{r_\pm}$, and we can easily evaluate them as
 \begin{align}
     \begin{cases}
         \left.\partial_vf\right|_{r_+}&=-F(v,r_+)(r_+-r_-)\partial_v r_+,\nonumber\\
         \left.\partial_v f\right|_{r_-}&=F(v,r_-)(r_+-r_-)\partial_v r_-.
     \end{cases}
 \end{align}
 Therefore, the outer and inner AHs become null when $\partial_v r_{\pm}=0$, timelike when $\partial_v r_{\pm}<0$, and spacelike when $\partial_v r_{\pm}>0$. 
 From the above discussions, we can classify the spacetimes of evaporating black holes into the following three cases;
 \begin{enumerate}[\textrm{Case} 1:]
     \item $\partial_vr_+<0, \partial_v r_-=0$, the spacetime has a timelike outer AH and a null inner AH;
     \item $\partial_v r_+<0,\partial_v r_->0$, the spacetime has a timelike outer AH and a spacelike inner AH;
     \item $\partial_v r_+<0,\partial_v r_-<0$, the spacetime has a timelike outer AH and a timelike inner AH.
 \end{enumerate}
 The known models such as the Bardeen or Hayward black holes are classified as Case~2
 (see Appendix~\ref{sec-Hayward} for the case of the Hayward black hole). 
 
 In the subsequent sections~\ref{sec-case1}, \ref{sec-case2}, and \ref{sec-case3}, 
 Penrose diagrams for each case will be constructed. 
 By analyzing the behavior of outgoing null geodesics, these diagrams can be classified without assuming concrete expressions for the functions in the metric, such as  $A(v,r), F(v,r)$, and $r_\pm(v)$. 
 The equation for the outgoing null geodesics is given by
 \begin{align}
     \label{eq-outgoing}
     \frac{dr}{dv}=\frac{1}{2}A(v,r)f(v,r),
 \end{align}
and all analyses in this paper are based on this equation. 
This right-hand side is expressed as $A(v,r)F(v,r)(r-r_+(v))(r-r_-(v))$ where $A(v,r)F(v,r)>0$ is satisfied. Therefore, we can define a positive-valued function $h(v,r)$ as 
\begin{equation}
h(v,r)\coloneqq A(v,r)F(v,r)
\label{Def:h(v,r)}
\end{equation} 
and we would analyze the equation
\begin{align}
  \label{eq-null-satisfy}
  \frac{dr}{dv}\,=\,\frac{1}{2}h(v,r)(r-r_+(v))(r-r_-(v))
\end{align}
instead of Eq. \eqref{eq-outgoing} without loss of generality.

Since we are interested in the evaporation phase, we impose the following conditions:
\begin{inparaenum}[(1)]
    \item $\partial_vr_+<0$;
    \item $\lim_{v\rightarrow \infty}r_{\pm}=r_c$; 
    \item The characteristic of each AH is never changed; and
    \item The limits $v\to\infty$ of the functions $A(v,r)$, $F(v,r)$ and $h(v,r)$ behave as analytic functions of $r$.
\end{inparaenum}
The first condition is based on the physical assumption that the radius of the outer AH decreases during the evaporation phase.  The second condition is a representation of the assumption that regular black holes have infinite evaporation time and asymptote to extremal black holes.  The third and fourth conditions are imposed for simplicity; one can discard these conditions when considering more complex evaporation scenarios. 
By the fourth condition, all of $\partial_r^nh(v,r)$ with $n=0,1,2,...$ converge to finite values in the limit $v\to \infty$. In particular,
we define $h(\infty,r):=\lim_{v\rightarrow \infty} h(v,r)$ and
\begin{equation}
h_c:=\lim_{v\to\infty}h(v,r_c)
\end{equation} 
for later convenience.

Here, we introduce a shifted radial coordinate $x$ by
\begin{equation}
x \, = \, r-r_c.
\label{shifted-radial-coordinate}
\end{equation}
Correspondingly, the positions of the outer and inner AHs are given by $x=x_\pm(v)$ with
\begin{equation}
x_\pm(v) \, = \, r_\pm(v)-r_c.
\label{horizons-shifted-radial-coordinate}
\end{equation}
The merit of introducing this shifted radial coordinate is that $x_\pm(v)$ become simple functions, i.e., $\lim_{v\to\infty}x_\pm(v) = 0$.
Hereafter, when we handle the equation of outgoing null geodesics,
the coordinate $x$ will be used rather than $r$ (but we use the original radial coordinate $r$ when we consider the construction of Penrose diagrams). We redefine the functions
$h(v,r(x))$, $F(v,r(x))$, and $A(v,r(x))$ as $h(v,x)$, $F(v,x)$, and $A(v,x)$. 
Equation~\eqref{eq-null-satisfy} for the outgoing null geodesics is rewritten as
\begin{equation}
  \frac{dx}{dv}=\frac{1}{2}h(v,x)(x-x_+(v))(x-x_-(v)).
  \label{eq-null-x}
\end{equation}
For later convenience, we introduce the primitive functions of $x_\pm(v)$ with
\begin{equation}
X_\pm(v):=\int x_\pm(v)dv.
\label{Primitive-function-xpm}
\end{equation}
These primitive functions play important roles to specify the types of Penrose diagrams.
Note that our analyses do not depend on the choice of the integral constants.
In addition, in the case that an event horizon is present, 
we denote its worldline as $x=x_{\mathrm{EH}}(v)$. 

The behavior of the outgoing null geodesics strongly depends on their relative position to the outer and inner AHs. 
For outgoing null geodesics located outside or on the outer AH, the following lemma holds irrespectively of the specific case:
\begin{lemma}
  \label{lemma-1}
  Outgoing null geodesics initially located outside or on the outer AH never intersect the outer AH at any later time and reach the future null infinity $\mathscr{I}^+$.
\end{lemma}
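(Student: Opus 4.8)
\medskip
\noindent\textbf{Proof proposal.}
The plan is to read Eq.~\eqref{eq-null-x} as a scalar non-autonomous ordinary differential equation for $x(v)$ and to compare the outgoing geodesic under consideration with the worldline $x=x_+(v)$ of the outer AH. Let $v_0$ denote the initial time and put $g(v):=x(v)-x_+(v)$, so that the hypothesis ``initially located outside or on the outer AH'' reads $g(v_0)\ge 0$. The remark on which the whole argument turns is that at any time $v_\ast$ at which the geodesic touches the outer AH, i.e.\ $x(v_\ast)=x_+(v_\ast)$, the right-hand side of Eq.~\eqref{eq-null-x} vanishes through the factor $(x-x_+(v))$, hence $x'(v_\ast)=0$; since $\partial_v x_+=\partial_v r_+<0$ by condition~(1), this yields
\begin{equation}
g'(v_\ast)\;=\;-\,\partial_v x_+(v_\ast)\;=\;-\,\partial_v r_+(v_\ast)\;>\;0 .
\end{equation}
In words, every contact with the outer AH is transversal and directed outward.

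First I would prove the non-intersection statement by a barrier argument. If $g(v_0)>0$, then $g$ can never return to $0$: if $v_1>v_0$ were the first time with $g(v_1)=0$, then $g>0$ on $(v_0,v_1)$ would force the left derivative $g'(v_1)\le 0$, contradicting the displayed inequality $g'(v_1)>0$. If instead $g(v_0)=0$ (the geodesic starts on the outer AH), the same inequality at $v_0$ gives $g'(v_0)>0$, so $g>0$ on some interval $(v_0,v_0+\delta)$, and the case just treated, applied with $v_0$ replaced by a point of that interval, shows $g>0$ for all $v>v_0$. Hence an outgoing geodesic that starts on or outside the outer AH lies strictly outside it at every later time and never intersects it again.

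It then remains to show that the geodesic reaches $\mathscr I^+$. For $v>v_0$ we have $x(v)>x_+(v)>x_-(v)$, so all three factors on the right-hand side of Eq.~\eqref{eq-null-x} are positive, $x(v)$ is strictly increasing, and therefore $x(v)\to L$ for some $L\in(x_+(v_0),\infty]$ (and if the maximal future interval of existence is finite, the escape lemma together with monotonicity again forces $x\to\infty$ there). A finite limit $L<\infty$ is excluded as follows. For all large $v$ the geodesic then lies in the strip $c\le x\le L$ with $c:=x_+(v_0)>0$, whereas $x_\pm(v)\to 0$ by condition~(2), so $x-x_-(v)\ge x-x_+(v)\ge c/2$ there; using the positivity of $h$ and of its limit $h(\infty,\,\cdot\,)$, which is guaranteed by condition~(4), the right-hand side of Eq.~\eqref{eq-null-x} is bounded below by a positive constant for all large $v$, contradicting $x(v)\to L$. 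Hence $L=\infty$, and since the spacetime is asymptotically flat the geodesic reaches the future null infinity $\mathscr I^+$.

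The step I expect to be the real obstacle is this last one: the barrier argument is essentially a sign computation, but excluding that an outgoing geodesic ``stalls'' at a finite radius genuinely uses the asymptotic behaviour of $h$ together with the fact that \emph{both} apparent horizons shrink ($x_\pm\to 0$), and one also has to keep track of whether the coordinate $v$ (equivalently an affine parameter) along the geodesic stays finite or runs to infinity as $r\to\infty$, although the conclusion that $\mathscr I^+$ is reached is unaffected either way.
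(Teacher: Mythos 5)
Your proposal is correct and follows essentially the same route as the paper's proof: a sign/transversality argument at any contact with the outer AH (every touching point has $dx/dv=0$ while $dx_+/dv<0$, so crossings can only go outward) to rule out later intersections, followed by monotonicity of $x(v)$ outside the outer AH and a contradiction between a finite limit $L$ and the non-vanishing asymptotic value of $\tfrac12 h\,(x-x_+)(x-x_-)$ to force $x\to\infty$. Your first-crossing-time formulation and the explicit derivation of the existence of the limit from monotonicity are slightly more careful renderings of the paper's appeal to uniqueness of solutions and its assumed limit $x_{\mathrm{lim}}$, but the substance is identical.
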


\begin{proof}
Since in this region $\frac{dx}{dv}\geq 0$ and $\frac{dx_+}{dv}<0$ are satisfied, on every point on the outer AH, the outgoing null geodesics cross the horizon from the inside region to the outside region.
Then, if an outgoing null geodesic initially located outside or on the outer AH 
arrives at some point on the outer AH,
it immediately contradicts the uniqueness of solutions of first-order differential equations. 
Therefore, if an outgoing null geodesic is outside or on the outer AH at a given time, they will not intersect it at later finite advanced times. 
Next, we examine the behavior of the integral curve in the limit $v\rightarrow \infty$. 
For the sake of contradiction, we assume that an outgoing null geodesic satisfy $\lim_{v\rightarrow \infty} x(v)=x_{\mathrm{lim}} (\neq \infty)$. 
Since the right-hand side of Eq.~\eqref{eq-null-x} converges to $h(\infty,x)x^2$ in the limit $v\rightarrow \infty$, the values of $dx/dv$ of these integral curves are
\begin{align}
  \lim_{v\rightarrow \infty}\frac{dx}{dv}
  \,=\,\frac{1}{2}h(\infty,x_{\mathrm{lim}})x_{\mathrm{lim}}^2
  \,>\,0.
\end{align}
This contradicts the assumption that $x_{\mathrm{lim}}$ is the limit value for $v\rightarrow \infty$. This leads to the conclusion that these outgoing null geodesics must satisfy $\lim_{v\rightarrow \infty} x(v)=\infty$, and thus must arrive at the future null infinity $\mathscr{I}^+$.
\end{proof}  

This Lemma indicates that if there is the event horizon, it must be formed inside the outer AH. 
The behavior of outgoing null geodesics in other positions requires case-by-case analyses. Therefore, we will examine Cases 1, 2, and 3 separately in Sects.~\ref{sec-case1}, \ref{sec-case2}, and \ref{sec-case3}.

\subsection{Compactified outgoing null coordinate}
\label{Subsec:compactified-retarded}

The purpose of this paper is to classify all Penrose diagrams in Cases 1, 2, and 3,  focusing attention not only on the existence/nonexistence of the event horizon,
but also on the relative locations of the outer and inner AHs.
In order to characterize the location of the outer and inner AHs,
it is helpful to introduce the compactified outgoing null coordinate $U$.

We consider the part of the spacetime given by $v\ge v_0$ for some value of $v_0$. 
Then, we consider the union of the set $v=v_0$ and $r\ge 0$ and the set $r=0$ and $v\ge v_0$.
These sets give the lower left boundary of the Penrose diagram.
Defining the value of $U$ on these sets as
\begin{equation}
U \, = \begin{cases}
 \frac{2}{\pi}\arctan\left(\frac{v-v_0}{r_c}\right) & (r=0,\, v\ge v_0),\\
 -\frac{2}{\pi}\arctan\left(\frac{r}{r_c}\right) & (r>0,\, v= v_0),
 \end{cases}
\end{equation}
the coordinate $U$ has been introduced on the union of the two sets.
 Then, we require $U$ to be constant along each radially outgoing null geodesic.
By this procedure, the coordinate $U$ is extended to the whole range of $v>v_0$.
Note that $U$-constant lines do not cross each other because of the uniqueness properties
of the first-order differential equations. 
Since we consider Penrose diagrams, 
we require the coordinate range to be $-1\le U\le+1$, by adding the points $U=\pm 1$ that correspond to
 $v=\infty$ on the set $r=0, v\ge v_0$ and to $r=\infty$ on the set $v=v_0, \, r\ge 0$, respectively.
In what follows,  the coordinate range of $U$ is denoted as the interval,
 \begin{equation}
 J\,=\, \left[-1,\,+1\right].
 \end{equation}

We suppose the outer and inner AHs to be located at 
\begin{equation}
U\, = \, U^{(\pm)} (v)
\end{equation}
with the functions $U^{(\pm)}$. 
The domain of definition of these functions is $v_0\le v<\infty$,
and we define
\begin{equation}
U^{(\pm)}_\infty \, = \, \lim_{v\to\infty}U^{(\pm)} (v).
\label{Upm_infty}
\end{equation}
Note that the inside regions of the outer and inner AHs are given by
$U>U^{(\pm)}(v)$, while the outside regions of the outer and inner AHs
are given by $U<U^{(\pm)}(v)$, respectively.

For the outer AH, the function $U^{(+)}(v)$ is a monotonically increasing function
because of its timelike property. We define the range of the function $U^{(+)}(v)$
to be $I^{(+)}$. Introducing $U^{(\pm)}_0 = U^{(\pm)}(v_0)$, the interval $I^{(+)}$ is given by
\begin{equation}
I^{(+)} \, = \,  \left[ U^{(+)}_0, U^{(+)}_\infty \right).
\end{equation}
Similarly, we define the range of the function $U^{(-)}(v)$ to be $I^{(-)}$.
Since the inner horizon is null, spacelike, and timelike in Cases 1, 2, and 3,
the function $U^{(-)}(v)$ is constant, monotonically decreasing,
and monotonically increasing, respectively. Therefore, we have
\begin{equation}
I^{(-)} \, = \,
\begin{cases}
 \left\{U^{(-)}_0\right\} & (\textrm{Case~1});\\
  \left (U^{(-)}_\infty,\,U^{(-)}_0\right] &  (\textrm{Case~2});\\
   \left[U^{(-)}_0,\, U^{(-)}_\infty\right) &  (\textrm{Case~3}),
 \end{cases}
\end{equation}
where $\left\{U^{(-)}_0\right\}$ denotes the singleton set.
Note that since the outer AH is located outside the inner AH,
we have $U^{(+)}(v)<U^{(-)}(v)$, and hence
\begin{equation}
U^{(+)}_0<U^{(-)}_0, \qquad \textrm{and} \qquad U^{(+)}_\infty\le U^{(-)}_\infty.
\label{Upm-constraint}
\end{equation}
The equality in the latter inequality is possible because the values of $U^{(\pm)}_\infty$
are defined by the limit of $v\to\infty$ in Eq.~\eqref{Upm_infty}.

We now have the three sets, $J$ and $I^{(\pm)}$. 
Each of $I^{(\pm)}$ indicates the coordinate range of $U$ where
the outer or inner AH is present. 
Our basic strategy is to define the types of the Penrose diagrams
in each of the three cases based on the relative positions of these three sets.
The concrete procedures will be explained in each of the following three sections.

It is worth pointing out the following lemma:
\begin{lemma}
  \label{lemma-2}
  The event horizon exists if and only if $U^{(+)}_\infty\,<\,+1$. If this condition is satisfied,
  the event horizon is located at $U=U^{(+)}_\infty$. 
\end{lemma}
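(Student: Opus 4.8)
The plan is to characterize the event horizon through the radial outgoing null geodesics, i.e.\ the $U$-constant lines, since the event horizon is the boundary of the causal past $J^-(\mathscr{I}^+)$ of future null infinity and, in this spherically symmetric setting, the radial outgoing null geodesic through a point is the ``most outgoing'' future-directed causal curve issuing from it. Concretely, I would first establish the dichotomy: \emph{the outgoing null geodesic labelled by $U_0\in J$ reaches $\mathscr{I}^+$ if and only if $U_0<U^{(+)}_\infty$.}

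For the forward implication, suppose $U_0<U^{(+)}_\infty$. Since $U^{(+)}(v)$ is strictly increasing with $\lim_{v\to\infty}U^{(+)}(v)=U^{(+)}_\infty$ — so that $I^{(+)}=[U^{(+)}_0,U^{(+)}_\infty)$ does not contain its right endpoint — there is $v_1\ge v_0$ with $U^{(+)}(v_1)>U_0$. At $v=v_1$ the geodesic then lies in the outside region $U<U^{(+)}(v_1)$ of the outer AH, i.e.\ at $r(v_1)>r_+(v_1)$, and Lemma~\ref{lemma-1}, applied from $v_1$ onward, gives that it reaches $\mathscr{I}^+$. For the reverse implication, suppose $U_0\ge U^{(+)}_\infty$. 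Then $U_0\ge U^{(+)}_\infty>U^{(+)}(v)$ for every $v\ge v_0$, since the supremum is not attained, so the geodesic stays strictly inside the outer AH; hence its radius obeys $r(v)<r_+(v)\le r_+(v_0)$, using $\partial_v r_+<0$. Its radius is thus bounded, so it cannot reach $\mathscr{I}^+$.

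Next I would promote this to a statement about the full causal past, using that $U$ is non-decreasing along every future-directed causal curve — the standard property of a retarded null coordinate, being constant along outgoing radial null geodesics and increasing along ingoing ones. Consequently, from a point $p$ with $U(p)\ge U^{(+)}_\infty$ every future-directed causal curve stays in $\{U\ge U^{(+)}_\infty\}$, which by the inside/outside characterization is strictly inside the outer AH, so along it $r<r_+(v_0)$ remains bounded and $\mathscr{I}^+$ is never reached; whereas from a point with $U(p)<U^{(+)}_\infty$ the outgoing null geodesic through it already reaches $\mathscr{I}^+$. Hence $J^-(\mathscr{I}^+)=\{U<U^{(+)}_\infty\}$ and the black hole region is exactly $\{U\ge U^{(+)}_\infty\}$. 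This region contains genuine spacetime points precisely when $U^{(+)}_\infty<+1$ — recall $U=+1$ is the single boundary point $r=0,\,v=\infty$ — which yields the equivalence; and when it holds, the event horizon is its boundary $U=U^{(+)}_\infty$, a $U$-constant line and therefore a radial null geodesic, as claimed.

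I expect the main obstacle to be the step relating trapped outgoing null geodesics to the genuine causal past $J^-(\mathscr{I}^+)$: one must rule out that some cleverer future-directed causal curve from an interior point escapes, even though the outer AH is timelike and can in principle be crossed from the inside. The clean way around this is the monotonicity of $U$ along future-directed causal curves together with the fact that all of $\mathscr{I}^+$ lies at $U<U^{(+)}_\infty$. One should also be slightly careful with the borderline label $U_0=U^{(+)}_\infty$, and with the degenerate case $U^{(+)}_\infty=+1$, in which the ``boundary'' $U=U^{(+)}_\infty$ collapses to the corner $r=0,\,v=\infty$ and no event horizon forms.
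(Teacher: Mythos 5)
Your proposal is correct and follows essentially the same route as the paper: you characterize escape to $\mathscr{I}^+$ by whether the label $U$ lies below $U^{(+)}_\infty$, using the monotone increase of $U^{(+)}(v)$ together with Lemma~\ref{lemma-1}, and then read off existence and location of the event horizon, exactly as in the paper's proof. The one place you go beyond the paper is the explicit justification that a trapped outgoing null geodesic really puts its points outside $J^-(\mathscr{I}^+)$ (via monotonicity of $U$ along all future-directed causal curves), a step the paper asserts implicitly; this is a welcome refinement rather than a different argument.
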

\begin{proof}
If $U^{(+)}_\infty\,<\,+1$, an arbitrary outgoing null geodesic in the range $U^{(+)}_\infty\,<\,U\,<\,+1$
do not cross the outer AH, and hence, remains inside the outer AH.
This means that such a geodesic is trapped and is in the black hole region, indicating the presence of the event horizon.
As for the ``only if'' part, we prove the contraposition. If $U^{(+)}_\infty = +1$, all null geodesic
in the domain $U^{(+)}_0\le U<+1$ cross the outer AH, and hence escape to future null infinity
from Lemma~\ref{lemma-1}. 
Since we know that all null geodesics in the outside domain $-1<U<U^{(+)}_0$ escape to future null infinity as well,
this means that there is no event horizon.

If the condition $U^{(+)}_\infty\,<\,+1$ is satisfied, all null geodesics in the region $U^{(+)}_\infty\le U<+1$
do not escape to infinity. Since the event horizon is the outermost null geodesic congruence, 
it is given by $U=U^{(+)}_\infty$.
\end{proof}
It would be worth presenting the following corollary:
\begin{corollary}
  \label{corollary-Sec2}
  The black hole is absent if and only if $U^{(+)}_\infty\,=\,+1$. 
  \end{corollary}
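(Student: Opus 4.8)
The plan is to read the corollary off directly from Lemma~\ref{lemma-2} together with the standard definition of a black hole region. First I would recall that, in an asymptotically flat spacetime of the kind constructed above, the black hole region is by definition the complement of the causal past of future null infinity, and its boundary is precisely the event horizon; consequently the black hole is absent if and only if there is no event horizon, i.e.\ every point can send an outgoing signal to $\mathscr{I}^+$.

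Next I would invoke Lemma~\ref{lemma-2}, which states that the event horizon is present if and only if $U^{(+)}_\infty<+1$. Taking the logical negation, the event horizon is absent if and only if $U^{(+)}_\infty\not<+1$. Since $U^{(+)}_\infty$ is defined in Eq.~\eqref{Upm_infty} as the limit of the values $U^{(+)}(v)$, all of which lie in the interval $I^{(+)}\subset J=[-1,+1]$, we have $U^{(+)}_\infty\le+1$; hence $U^{(+)}_\infty\not<+1$ is equivalent to $U^{(+)}_\infty=+1$. Combining this with the first paragraph yields the stated equivalence.

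I do not expect a genuine obstacle here, as the corollary is essentially a contrapositive reformulation of Lemma~\ref{lemma-2}; the only mild bookkeeping point is that $U^{(+)}_\infty$ cannot exceed $+1$, which is immediate from the coordinate range $J$. If one prefers an argument that does not lean on the word ``definition,'' one can instead reason directly from the geodesic behavior: when $U^{(+)}_\infty=+1$, Lemma~\ref{lemma-1} together with the fact established in the proof of Lemma~\ref{lemma-2} that every outgoing null geodesic with $U^{(+)}_0\le U<+1$ eventually crosses the outer AH shows that \emph{all} outgoing null geodesics reach $\mathscr{I}^+$, so no region is causally disconnected from $\mathscr{I}^+$ and the spacetime contains no black hole; conversely, when $U^{(+)}_\infty<+1$ the geodesics with $U^{(+)}_\infty\le U<+1$ stay trapped inside the outer AH for all advanced times and hence constitute a nonempty black hole region. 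Either route closes the proof in a few lines.
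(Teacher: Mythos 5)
Your proposal is correct and matches the paper's intent: the paper offers no separate proof, treating the corollary as an immediate consequence (essentially the contrapositive) of Lemma~\ref{lemma-2}, which is exactly the route you take, with the harmless extra bookkeeping that $U^{(+)}_\infty\le+1$ by the coordinate range $J$.
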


We now move to the case-by-case analyses for Cases 1, 2, and 3.

\newpage

%
%
\section{\label{sec-case1} Case~1 : Null inner AH}
In this section, we study Case~1 where the inner AH is a null surface.
Since $x_-(v)=0$ in this case, the analysis becomes simplified 
compared to the other two cases 2 and 3, and this is a good starting point.
For outgoing null geodesics, the following lemma holds:
\begin{lemma}
  \label{lemma-case1}
  The inner AH consists of radially outgoing null geodesics.
  Outgoing null geodesics except on the inner AH never intersect the inner AH. 
  Furthermore, outgoing null geodesics inside the inner AH asymptotically approach $x\to 0$ as $v\rightarrow\infty$. Consequently, an event horizon exists in a Case~1 spacetime.
\end{lemma}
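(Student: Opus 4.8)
The key simplification in Case~1 is that $r_-(v)\equiv r_c$, i.e.\ $x_-(v)\equiv 0$, so Eq.~\eqref{eq-null-x} for the outgoing null geodesics reduces to
\begin{equation*}
\frac{dx}{dv}\,=\,\frac{1}{2}\,h(v,x)\,x\,\bigl(x-x_+(v)\bigr).
\end{equation*}
First I would observe that $x(v)\equiv 0$ is an explicit solution of this equation; since the inner AH is by definition the locus $x=x_-(v)=0$, this shows at once that the inner AH is a radially outgoing null geodesic. For the statement that no other outgoing null geodesic ever meets the inner AH, I would invoke the uniqueness of solutions of the first-order ODE~\eqref{eq-null-x} — the right-hand side being $C^1$ in $(v,x)$ under the regularity assumed on $h$ and $x_\pm$ — so a geodesic passing through $x=0$ at some finite $v$ would have to coincide with the inner AH for all $v$, contradicting that it is distinct. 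This is the same mechanism already used in the proof of Lemma~\ref{lemma-1}.

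Next, for a geodesic lying strictly inside the inner AH, i.e.\ with $x(v_0)<0$, I would argue that it cannot cross $x=0$ and hence remains in the region $x<0$; there $x<0$, $x-x_+(v)<0$ (since $x_+(v)>0$), and $h(v,x)>0$, so $dx/dv>0$. Thus $x(v)$ is monotonically increasing and bounded above by $0$, which already guarantees that the geodesic is defined for all $v\ge v_0$ and converges to some finite limit $x_{\mathrm{lim}}\le 0$. To conclude $x_{\mathrm{lim}}=0$, I would suppose for contradiction that $x_{\mathrm{lim}}<0$: taking $v\to\infty$ and using $x_+(v)\to 0$ gives $dx/dv\to\tfrac{1}{2}h(\infty,x_{\mathrm{lim}})x_{\mathrm{lim}}^2>0$, which is incompatible with $x(v)$ approaching a finite limit. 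Hence $x_{\mathrm{lim}}=0$, establishing the asymptotic behavior claimed.

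Finally, for the existence of the event horizon, I would note that the inner AH — and every outgoing null geodesic starting with $x\le 0$ at $v=v_0$ — stays in the region $r\le r_c$ and therefore never reaches $\mathscr{I}^+$. In the language of the compactified coordinate this means the inner AH sits at the constant value $U=U^{(-)}_0<+1$, so $U^{(+)}_\infty\le U^{(-)}_\infty=U^{(-)}_0<+1$, and Lemma~\ref{lemma-2} then yields the event horizon at $U=U^{(+)}_\infty$. I expect the only point requiring care to be the convergence step above — ruling out that the interior geodesic runs off to $x\to-\infty$ or fails to exist up to $v=\infty$ — but monotonicity together with the bound $x<0$ disposes of this immediately, exactly as in Lemma~\ref{lemma-1}.
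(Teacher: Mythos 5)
Your proposal is correct and follows essentially the same route as the paper: the explicit solution $x(v)=0$ plus ODE uniqueness for the non-intersection claims, monotonicity and the limit $\lim_{v\to\infty}dx/dv=\tfrac{1}{2}h(\infty,x_{\mathrm{lim}})x_{\mathrm{lim}}^2>0$ contradiction for the asymptotic approach to $x=0$. Your only addition is making the event-horizon conclusion explicit via $U^{(+)}_\infty\le U^{(-)}_0<+1$ and Lemma~\ref{lemma-2}, which the paper leaves to the discussion following the lemma; this is a harmless and consistent sharpening.
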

\begin{proof}
  For a Case~1 spacetime, the differential equation \eqref{eq-null-x} of outgoing null geodesics becomes 
  \begin{align}
    \frac{dx}{dv}=\frac{1}{2}h(v,x)(x-x_+(v))x,
  \end{align}
  and therefore, $x(v)=0$ is a solution to this equation. 
  Since the right-hand side of this equation is continuous, the uniqueness of solutions to differential equations holds. 
  Therefore, there is no solution $x(v)$ that becomes zero except the solution $x(v)=0$. 
  This leads to the conclusion that outgoing null geodesics except on the inner AH never intersect the inner AH.

  Combining this result with $x_-(v)=0$, outgoing null geodesics inside the inner AH 
  must stay in the region $x<0$. Since the function $x=x(v)$ for any outgoing null geodesic in this region
  is monotonically increasing, it must converge to a nonpositive value in the limit $v\to\infty$, i.e. 
  $\lim_{v\rightarrow \infty} x(v)\leq 0$. For the sake of contradiction, we assume that one of them satisfy 
  $\lim_{v\rightarrow \infty} x(v)=x_{\mathrm{sup}} (< 0)$. 
  Then, the value of $dx/dv$ satisfies
  \begin{align}
    \lim_{v\rightarrow \infty}\frac{dx}{dv}=\frac{1}{2}h(\infty,x_{\mathrm{sup}})x_{\mathrm{sup}}^2>0.
  \end{align}
  This is a contradiction to the assumption that $x_{\mathrm{sup}}$ is the limit value for $v\rightarrow\infty$. This leads to the conclusion that these outgoing null geodesics satisfy $\lim_{v\rightarrow\infty} x(v)=0$.
\end{proof}

\begin{figure}[t]
  \centering
  \includegraphics[width=0.5\linewidth]{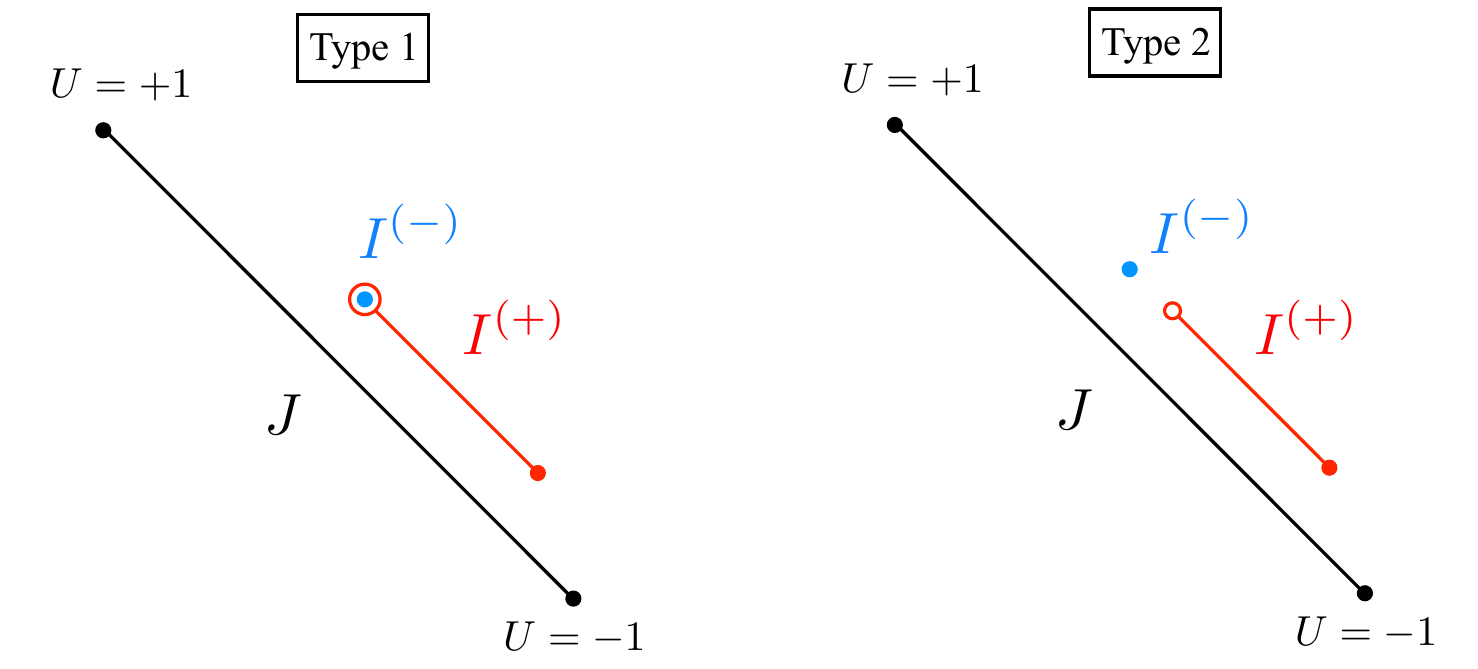}
  \caption{The sets $J$ and $I^{(\pm)}$ in Type~1 (left) and in Type~2 (right) in Case~1. The segments are drawn
  at the angle of $135^{\circ}$ to the horizontal direction 
  to indicate that we are working in the double null coordinates, $(U,v)$.}
  \label{Rod-structure-case1}
\end{figure}

From this lemma, the region $x<0$ is the black hole region because photons inside this region never
reach the future null infinity. Then, we can consider two possibilities: 
One is that $x=0$ is the event horizon, and the other is that an event horizon exists in the region $x>0$.
In fact, it will turn out that which possibility is realized depends on the form of $A(v,x)f(v,x)$. 
To mathematically formulate the difference between the two configurations,
let us recall the three sets, $J$ and $I^{(\pm)}$ introduced in Sec.~\ref{Subsec:compactified-retarded}.
In Case~1, we have shown $I^{(-)}=\left\{U^{(-)}_0\right\}$ and  $I^{(+)}=\left[U^{(+)}_0, U^{(+)}_\infty\right)$.
Since $U=U^{(-)}_0$ corresponds to the outgoing null geodesic on the inner AH
while the value of $U$ that belongs to $I^{(+)}$ corresponds to the outgoing null geodesic
that escapes to infinity, there must not be an overlap between $I^{(-)}$ and $I^{(+)}$.
Therefore, the inequality
\begin{equation}
-1<U^{(+)}_0<U^{(+)}_\infty\le U^{(-)}_0<+1
\end{equation}
must hold. Then, we define Type~1 and Type~2 of the spacetimes in Case~1 as follows:
\begin{dfn}
The spacetime of Case~1 is defined to be of Type~1 if  $U^{(+)}_\infty= U^{(-)}_0$ holds,
while the spacetime of Case~1 is defined to belong to Type~2 if  $U^{(+)}_\infty< U^{(-)}_0$ holds.
\label{Def:Case1-Type1-Type2}
\end{dfn}
The intervals $J$ and $I^{(\pm)}$ in Type~1 and Type~2 are depicted
in Fig.~\ref{Rod-structure-case1}. 
Since $U=U^{(+)}_\infty$ corresponds to the event horizon from Lemma~\ref{lemma-2}, the inner AH becomes the event horizon in a Type~1 spacetime,
while there is an event horizon outside the inner AH in a Type~2 spacetime.
Since there is no possibility other than Type~1 and Type~2, we have the following theorem;
\begin{thm}
  \label{thm-case1}
  The classification of Definition~\ref{Def:Case1-Type1-Type2} is complete
  in the sense that any spacetime of Case~1 belongs to either Type~1 or Type~2.
  \end{thm}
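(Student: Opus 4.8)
The plan is to reduce the statement to a trichotomy on the two real numbers $U^{(+)}_\infty$ and $U^{(-)}_0$, which have already been shown to be well-defined and to obey a definite ordering in the discussion preceding the theorem. Definition~\ref{Def:Case1-Type1-Type2} separates Type~1 from Type~2 precisely according to whether $U^{(+)}_\infty = U^{(-)}_0$ or $U^{(+)}_\infty < U^{(-)}_0$, so the theorem will follow as soon as one verifies that no other relation between these two quantities is possible.

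First I would recall that both quantities are finite and that their ordering is fixed. Because the outer AH is timelike in Case~1, the function $U^{(+)}(v)$ is monotonically increasing on $[v_0,\infty)$; because the inner AH is null, $U^{(-)}(v)\equiv U^{(-)}_0$ is constant, so $U^{(-)}_\infty = U^{(-)}_0$. The constraint $U^{(+)}_\infty\le U^{(-)}_\infty$ of Eq.~\eqref{Upm-constraint} then reads $U^{(+)}_\infty\le U^{(-)}_0<+1$, so $U^{(+)}(v)$ is monotonically increasing and bounded above, whence $U^{(+)}_\infty$ exists as a finite number with $U^{(+)}_\infty\le U^{(-)}_0$. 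The same bound also follows from the no-overlap argument already given before Definition~\ref{Def:Case1-Type1-Type2}: $I^{(-)}=\{U^{(-)}_0\}$ is the outgoing null geodesic on the inner AH, whereas every $U\in I^{(+)}$ labels a geodesic escaping to $\mathscr{I}^+$ by Lemma~\ref{lemma-1}, and an event horizon is present by Lemma~\ref{lemma-case1}, sitting at $U=U^{(+)}_\infty$ by Lemma~\ref{lemma-2}.

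Finally, the single inequality $U^{(+)}_\infty\le U^{(-)}_0$ is the disjunction of the two mutually exclusive alternatives $U^{(+)}_\infty = U^{(-)}_0$ and $U^{(+)}_\infty < U^{(-)}_0$, which are by Definition~\ref{Def:Case1-Type1-Type2} exactly Type~1 and Type~2; hence every Case~1 spacetime belongs to one of the two, and the classification is complete. I do not expect a genuine obstacle here: all the substantive content lies in Lemma~\ref{lemma-case1} and in the paragraph establishing the chain $-1<U^{(+)}_0<U^{(+)}_\infty\le U^{(-)}_0<+1$, and the only care needed is to confirm that $U^{(+)}_\infty$ is an honest finite limit (from monotonicity plus boundedness above) and that the two defining conditions leave no gap in the established ordering.
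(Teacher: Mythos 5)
Your proposal is correct and follows essentially the same route as the paper: the paper treats the theorem as immediate from the chain $-1<U^{(+)}_0<U^{(+)}_\infty\le U^{(-)}_0<+1$ established (via Lemmas~\ref{lemma-1}, \ref{lemma-2}, and \ref{lemma-case1} and the no-overlap of $I^{(\pm)}$) just before Definition~\ref{Def:Case1-Type1-Type2}, with the dichotomy of equality versus strict inequality exhausting all possibilities. Your added remark that $U^{(+)}_\infty$ is a finite limit by monotonicity and boundedness is a harmless explicit check of what the paper leaves implicit.
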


\begin{figure}[t]
  \centering
  \includegraphics[width=0.5\linewidth]{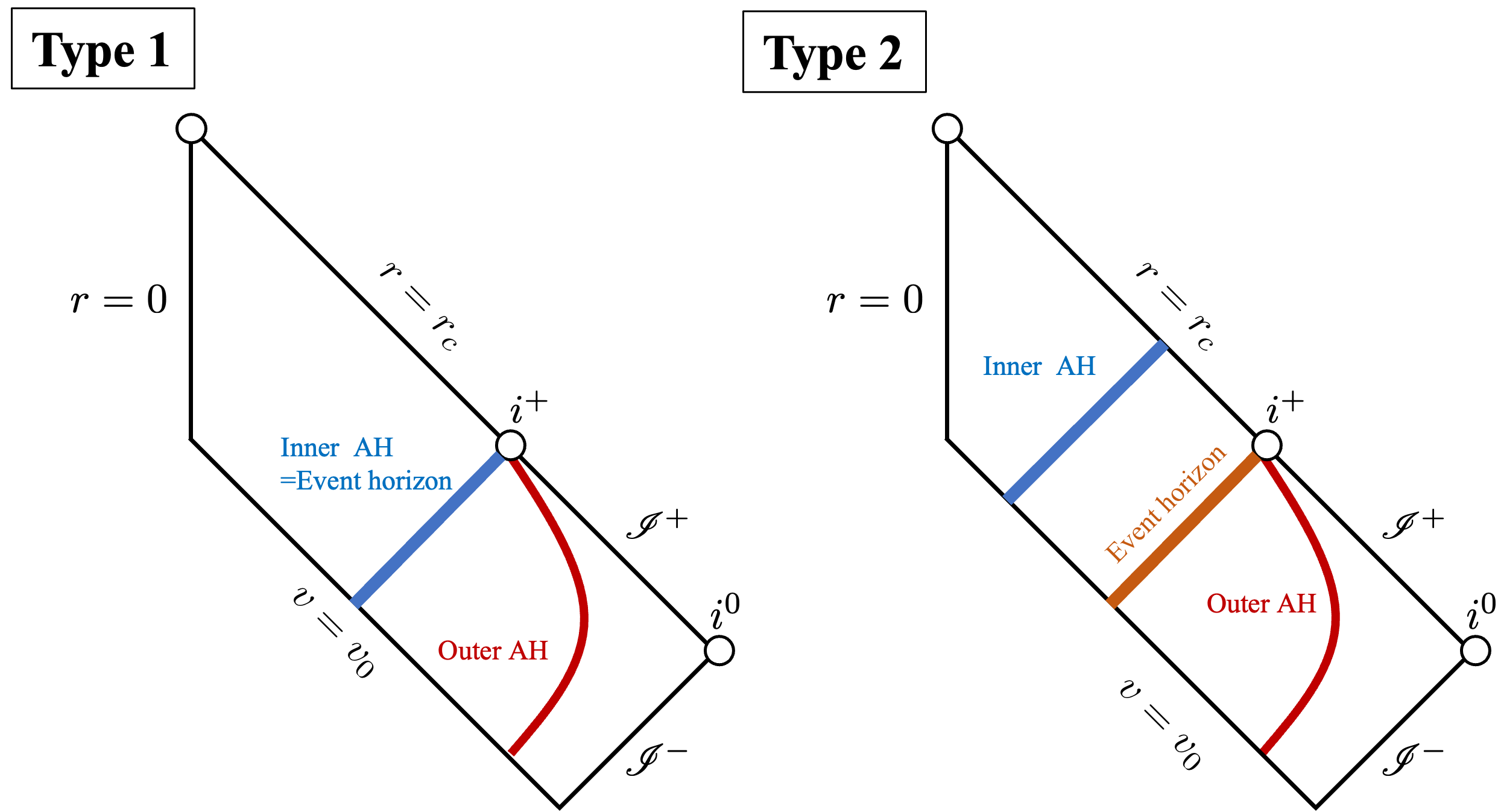}
  \caption{The possible Penrose diagrams of Case~1 spacetimes of Type~1 (left panel) and of Type~2 (right panel). 
  The red and blue curves represent the outer and inner AHs, and the orange curves represent the event horizon, respectively. }
  \label{pic-case1-diagrams}
\end{figure}

Consequently, we can construct two types of Penrose diagrams of Case~1 spacetimes. 
To draw the diagram, firstly, we consider a null hypersurface $v=v_0$ (the lower line intersecting the horizontal line at the angle of $135^{\circ}$) 
and a timelike worldline of the central point, $r=0$ (the vertical line). 
Secondly, we consider outgoing null geodesics emitted from the hypersurface $v=v_0$. 
These null geodesics are drawn as lines each of which crosses the horizontal line at the angle of $45^{\circ}$.
These lines are extended up to the surface $v=\infty$, which is the upper line parallel to the line of $v=v_0$. 
From Lemma \ref{lemma-1}, Lemma \ref{lemma-case1} and Fig.~\ref{Rod-structure-case1}, we know 
the qualitative behavior of all outgoing null geodesics, particularly in their relative positions to the AHs,
once its Type is specified.
In the case of Type~1, the inner AH is the event horizon, 
and all null geodesics between the inner and outer AHs cross the outer AH
which is timelike. 
On the other hand, in the case of Type~2, the event horizon is located at a separate position from the inner AH.
In the limit $v\to\infty$, the values of $r$ of all outgoing null geodesics in the event horizon asymptote to $r_c$.
Therefore, in the black hole region, the surface $v=\infty$ satisfies $r=r_c$ at the same time. 
The resultant Penrose diagrams are shown in Figure \ref{pic-case1-diagrams}.

Note that there are two points that correspond to timelike infinity, $i^+$. 
One is the point at which $v=\infty$ and the event horizon intersect.
The other is the point at which $v=\infty$ and the worldline of $r=0$ intersect, because 
the proper time along $r=0$ diverges in the limit $v\to\infty$.  
To complete the Penrose diagrams, we have to determine whether the surface $v=\infty$ in the black hole region
is a Cauchy horizon or null infinity.
We postpone this issue for a while and give brief discussions on this point
later in Sec.~\ref{sec-extendibility}.

\subsection{\label{subsec-case1-1}Sufficient conditions to classify as Type~1 or Type~2.}
Here, we derive some sufficient conditions to determine which Type of the Penrose diagrams corresponds to a given metric.
Our arguments frequently rely on the $(\varepsilon, \delta)$-definition of limit. In the next proposition, for example, 
we use the fact that $\lim_{v\to\infty}h(v,x(v))=h_c$ holds along any curve $x=x(v)$ in the region $0<x<x_+(v)$
(because $\lim_{v\to\infty}x(v)=0$ holds),
and express this condition in the $(\varepsilon,\delta)$-form as ``for any $\varepsilon$, there exists $v_0$ such that for all $v>v_0$, 
the inequality $|h(v,x)-h_c|<\varepsilon h_c$ holds.''

Recall that the difference between Type~1 and Type~2 is whether all of the outgoing null geodesics outside the inner AH intersect the outer AH or not. It would also be instructive to recall the definitions
of $X_\pm(v)$ given in Eq.~\eqref{Primitive-function-xpm}. 
The following proposition gives the condition of the Case~1 spacetime to be of Type~1:
\begin{prop}
  \label{prop-Case1-Type1}
  If there exists a positive constant $\varepsilon$ for which 
  \begin{align}
    \label{eq-prop-Case1-Type1}
    \lim_{v\rightarrow\infty}x_+(v)\exp\left[\frac{1+\varepsilon}{2}h_cX_+(v)\right]\,=\,0
  \end{align}
  is satisfied, then all outgoing null geodesics initially located between the two AHs intersect the outer AH eventually, i.e., the spacetime is of Type~1.
\end{prop}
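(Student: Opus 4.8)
The plan is to argue by contradiction. Suppose the claim fails, so that some outgoing null geodesic $x=x(v)$ with $0<x(v_0)<x_+(v_0)$ never meets the outer AH. By Lemma~\ref{lemma-case1} it also cannot meet the inner AH $x=0$, so it must stay in the region $0<x(v)<x_+(v)$ for all $v\ge v_0$. There $\frac{dx}{dv}=\frac{1}{2} h(v,x)\,x\,(x-x_+)<0$, so $x(v)$ decreases monotonically; being positive and bounded above by $x_+(v)\to 0$, it converges to $0$. Hence $h(v,x(v))\to h_c$, and by the $(\varepsilon,\delta)$ statement recalled just before the proposition there exists $v_1\ge v_0$ such that $h(v,x(v))<(1+\varepsilon)h_c$ for all $v\ge v_1$, with $\varepsilon$ the constant supplied by the hypothesis.

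The key step is a Gr\"onwall-type lower bound on $x(v)$. Discarding the undetermined factor via $x-x_+\ge -x_+$ (legitimate because $x\ge 0$) and dividing by $x>0$ gives, for $v\ge v_1$,
\[
\frac{d}{dv}\ln x(v)=\frac{1}{2} h(v,x(v))\,(x(v)-x_+(v))\ \ge\ -\frac{1}{2} h(v,x(v))\,x_+(v)\ \ge\ -\frac{1+\varepsilon}{2}\,h_c\,x_+(v).
\]
Integrating from $v_1$ to $v$ and using $X_+'=x_+$ yields $x(v)\ge C\exp\!\left[-\frac{1+\varepsilon}{2}h_c X_+(v)\right]$ with a constant $C>0$ (the choice of integration constant in $X_+$ merely rescales $C$, as noted after Eq.~\eqref{Primitive-function-xpm}). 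Feeding in the standing assumption $x(v)<x_+(v)$ gives
\[
0<C\ \le\ x_+(v)\exp\!\left[\frac{1+\varepsilon}{2}h_c X_+(v)\right]\qquad\text{for all } v\ge v_1,
\]
whose right-hand side tends to $0$ by hypothesis~\eqref{eq-prop-Case1-Type1} --- a contradiction.

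Therefore no outgoing null geodesic initially between the two AHs can remain there forever; each must reach $x=x_+(v)$ at some finite $v$, i.e. intersect the outer AH, and then escapes to $\mathscr{I}^+$ by Lemma~\ref{lemma-1}. Since this leaves no trapped outgoing null geodesic outside the inner AH, the event horizon coincides with the inner AH, so $U^{(+)}_\infty=U^{(-)}_0$, which is precisely Type~1 by Definition~\ref{Def:Case1-Type1-Type2}. I expect the only subtle points to be the sign bookkeeping in the differential inequality --- recognizing that $x-x_+\ge -x_+$ is the right way to remove the undetermined factor so that the exponent is controlled by $X_+(v)$ --- and the careful justification of $h(v,x(v))\to h_c$ along the curve through the $(\varepsilon,\delta)$ formulation; the integration and the final comparison with the hypothesis are then routine.
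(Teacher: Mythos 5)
Your proposal is correct and follows essentially the same route as the paper: a contradiction argument assuming a geodesic remains in $0<x<x_+$, the bound $h(v,x)<(1+\varepsilon)h_c$ for large $v$, the Gr\"onwall-type estimate $\frac{d}{dv}\ln x\ge -\frac{1+\varepsilon}{2}h_c x_+$, and the comparison $x<x_+$ contradicting Eq.~\eqref{eq-prop-Case1-Type1}. Your extra remarks (monotonic decrease of $x(v)$, the explicit appeal to Lemma~\ref{lemma-case1}, and the translation into $U^{(+)}_\infty=U^{(-)}_0$) only make explicit what the paper leaves implicit.
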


\begin{proof}
  Consider outgoing null geodesics which initially satisfy $0<x(v_0)<x_+(v_0)$. 
  For the sake of contradiction, we assume the existence of an outgoing null geodesic
  that stays in the range $0<x(v)< x_+(v)$. 
  This null geodesic satisfies
  \begin{align}
    \frac{dx}{dv}\,=\,\frac{1}{2}h(v,x)(x-x_+)x\,>\,-\frac{1}{2}h(v,x)x_+x\,>\,-\frac{1+\varepsilon}{2}h_cx_+x
  \end{align}
  in the region $v\ge v_0$
  for a sufficiently large $v_0$, because $h(v,x)<(1+\varepsilon)h_c$ holds in the region $0<x(v)< x_+(v)$. 
  Since $x(v)>0$ holds, we can divide both sides with $x(v)$ and integrate the resultant inequality as
  \begin{align}
    \ln{\left(\frac{x}{x_0}\right)}\,>\,-\frac{1+\varepsilon}{2}h_c\int_{v_0}^{v}{x_+(v^\prime)dv^\prime},\nonumber
  \end{align}
  where $x_0\coloneqq x(v_0)$. This inequality is equivalent to
  \begin{align}
    x(v)>x_0\exp{\left[-\frac{1+\varepsilon}{2}h_c(X_+(v)-X_+(v_0))\right]},
  \end{align}
  because the exponential function is a monotonically increasing function. 
    Since $x_+(v)>x(v)$ holds from the assumption, we have
  \begin{align}
    x_+(v)\exp{\left[\frac{1+\varepsilon}{2}h_cX_+(v)\right]}\,>\,x_0\exp{\left[\frac{1+\varepsilon}{2}h_cX_+(v_0)\right]}.
  \end{align}
  However, from the assumption of Eq.~\eqref{eq-prop-Case1-Type1}, the left-hand side of this inequality converges to zero for $v\rightarrow\infty$. This is a contradiction. Since this contradiction is caused by the assumption
  of the presence of an outgoing null geodesic staying in the region $0<x<x_+(v)$,
  such an outgoing null geodesic must not exist. Thus, any
  outgoing null geodesic with $0<x_0<x_+(v_0)$ must intersect the outer AH, i.e., the spacetime is of Type~1.
\end{proof}

Next, we prove the proposition that gives the sufficient condition of the Case~1 spacetime to be of Type~2:
\begin{prop}
  \label{prop-Case1-prop2}
  If there exists a parameter $\alpha$ in the range $0<\alpha<1$ which satisfies
  \begin{align}
    \label{eq-Case1-sufficient-2-1}
    \lim_{v\rightarrow \infty}x_+(v)\exp{\left[\frac{1-\alpha}{2}h_c X_+(v)\right]}\,>\,0,
  \end{align}
    then the spacetime is of Type~2. Here, Eq.~\eqref{eq-Case1-sufficient-2-1}
    includes the case that the left-hand side diverges to $+\infty$.
\end{prop}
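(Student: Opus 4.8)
The plan is to prove that the spacetime is of Type~2 by exhibiting a \emph{single} outgoing null geodesic $x=x(v)$ that stays strictly between the two apparent horizons at all advanced times, i.e.\ with $0<x(v)<x_+(v)$ on all of $[v_0,\infty)$. As in the discussion preceding Definition~\ref{Def:Case1-Type1-Type2}, such a geodesic carries a fixed label $U=U^{*}$; since it remains inside the outer AH ($x<x_+$) we have $U^{*}>U^{(+)}(v)$ for every $v$ and hence $U^{*}\ge U^{(+)}_\infty$, while since it remains outside the null inner AH $x=0$ we have $U^{*}<U^{(-)}_0$. Together these force $U^{(+)}_\infty<U^{(-)}_0$, which is exactly Type~2.

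Two consequences of the hypothesis \eqref{eq-Case1-sufficient-2-1} will be used. First, because $x_+>0$ the primitive $X_+(v)=\int x_+\,dv$ of \eqref{Primitive-function-xpm} is strictly increasing, and if it converged the left-hand side of \eqref{eq-Case1-sufficient-2-1} would tend to $0\cdot(\text{finite})=0$, contradicting the hypothesis; hence $X_+(v)\to+\infty$. Second, for \emph{any} constant $\lambda>\tfrac{1-\alpha}{2}h_c$ one may write $x_+(v)e^{\lambda X_+(v)}=\bigl[x_+(v)e^{\frac{1-\alpha}{2}h_c X_+(v)}\bigr]\,e^{(\lambda-\frac{1-\alpha}{2}h_c)X_+(v)}$, where the bracketed factor tends to a positive limit by \eqref{eq-Case1-sufficient-2-1} and the second factor tends to $+\infty$ because $X_+\to+\infty$; thus $x_+(v)e^{\lambda X_+(v)}\to+\infty$, and in particular $\mu:=\inf_{v\ge v_1}x_+(v)e^{\lambda X_+(v)}$ is finite and strictly positive for every $v_1$.

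The heart of the proof is a comparison (barrier) argument. Since $0<\alpha<1$ and $h_c>0$, fix $\lambda:=\tfrac{2-\alpha}{4}h_c$, which lies strictly between $\tfrac{1-\alpha}{2}h_c$ and $\tfrac12 h_c$, and fix $\eta\in(0,\alpha/2)$, so that $\delta:=\tfrac12(1-\eta)h_c-\lambda>0$ and $\kappa:=2\delta/\bigl((1+\eta)h_c\bigr)\in(0,1)$. Using the $(\varepsilon,\delta)$-fact stated in the paragraph before Proposition~\ref{prop-Case1-Type1}, pick $v_1$ so large that $|h(v,x)-h_c|<\eta h_c$ for all $v\ge v_1$ and $0\le x\le x_+(v)$; then put $c:=\tfrac12\kappa\mu>0$ and $\phi(v):=c\,e^{-\lambda X_+(v)}$. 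The choice of $c$ and $\mu$ guarantees $\phi(v)\le\tfrac12\kappa\,x_+(v)<x_+(v)$, hence $0<\phi<x_+$ on $[v_1,\infty)$. A short computation using $X_+'=x_+$ (so $\phi'=-\lambda x_+\phi$) shows that at each such point the strict supersolution inequality $\phi'>\tfrac12 h(v,\phi)\,\phi(\phi-x_+)$ reduces to $x_+\bigl(\tfrac12 h-\lambda\bigr)>\tfrac12 h\,\phi$, and this holds because $\tfrac12 h-\lambda>\delta$ (as $h>(1-\eta)h_c$) while $x_+\ge 2\phi/\kappa=(1+\eta)h_c\phi/\delta$, so that $x_+\bigl(\tfrac12 h-\lambda\bigr)>\delta x_+\ge(1+\eta)h_c\phi>\tfrac12 h\,\phi$; thus $\phi$ is a \emph{strict} supersolution of the Case~1 null equation \eqref{eq-null-x} on $[v_1,\infty)$. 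Now let $x(v)$ be the outgoing null geodesic with $x(v_1)=\phi(v_1)\in(0,x_+(v_1))$; since $x\equiv0$ solves \eqref{eq-null-x} in Case~1, uniqueness forces $x(v)>0$ for all $v$. If $x(v)>\phi(v)$ for some $v>v_1$, the first such instant $v_2$ satisfies $x(v_2)=\phi(v_2)\in(0,x_+(v_2))$ and $x'(v_2)\ge\phi'(v_2)$ (the difference $x-\phi$ has a left maximum $0$ there), contradicting strict supersolutionality, which gives $\phi'(v_2)>\tfrac12 h(v_2,\phi(v_2))\,\phi(v_2)(\phi(v_2)-x_+(v_2))=x'(v_2)$. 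Hence $0<x(v)\le\phi(v)<x_+(v)$ for all $v\ge v_1$; by the contrapositive of Lemma~\ref{lemma-1} together with positivity the same geodesic lies strictly between the two AHs on $[v_0,v_1]$ as well, so it is trapped between them on all of $[v_0,\infty)$, and the $U$-argument of the first paragraph yields Type~2.

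The step I expect to be the genuine obstacle is precisely this barrier construction, because the obvious shortcut fails: integrating $\tfrac{d}{dv}\ln x=-\tfrac12 h(x_+-x)$ with only the crude bound $x_+-x\le x_+$ gives, at a putative escape time, an estimate of the form $x_+(v)\,e^{\frac{1+\varepsilon}{2}h_c X_+(v)}\gtrsim x(v_0)$, whose left-hand side (essentially the quantity appearing in Proposition~\ref{prop-Case1-Type1}) already diverges to $+\infty$ under \eqref{eq-Case1-sufficient-2-1}, yielding no contradiction at all. One genuinely has to use that a trapped geodesic satisfies $x\ll x_+$, so that $x_+-x\approx x_+$ is far from sharp near escape; the supersolution $\phi=c\,e^{-\lambda X_+}$ with $\tfrac{1-\alpha}{2}h_c<\lambda<\tfrac12 h_c$ is exactly the device that quantifies this separation (the room $\lambda<\tfrac12 h_c$ coming from $\alpha>0$). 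A minor secondary point is the touching-point subtlety in the comparison, handled here by building in strict inequality (the harmless factor $\tfrac12$ in $c=\tfrac12\kappa\mu$) rather than invoking a finer Gronwall-type estimate.
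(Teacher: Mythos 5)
Your proof is correct, but it takes a genuinely different route from the paper's. The paper argues by contradiction: assuming every geodesic between the AHs reaches the outer AH, it follows such a geodesic up to its first crossing of an intermediate curve $x=\bar\alpha x_+(v)$, where the bound $x_+-x>(1-\bar\alpha)x_+$ (valid below that curve) gives $x<x_0\exp\left[-\tfrac{1-\alpha}{2}h_c\left(X_+(v)-X_+(v_0)\right)\right]$ with $1-\alpha=(1-\varepsilon)(1-\bar\alpha)$; an intermediate-value argument at the crossing then forces $x_0=\bar\alpha x_+(\hat v)\exp\left[\tfrac{1-\alpha}{2}h_c\left(X_+(\hat v)-X_+(v_0)\right)\right]$ for some $\hat v$, and the hypothesis \eqref{eq-Case1-sufficient-2-1} gives this right-hand side a positive infimum $x_0^{(\mathrm{min})}$, so initial data with $x_0<x_0^{(\mathrm{min})}$ cannot cross --- a contradiction. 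You instead work constructively: from \eqref{eq-Case1-sufficient-2-1} you first deduce $X_+\to\infty$ and hence $x_+e^{\lambda X_+}\to\infty$ for any $\lambda>\tfrac{1-\alpha}{2}h_c$, and then build the explicit strict supersolution $\phi=c\,e^{-\lambda X_+}$ with $\tfrac{1-\alpha}{2}h_c<\lambda<\tfrac12 h_c$ (the room $\lambda<\tfrac12 h_c$ being exactly where $\alpha>0$ enters), which sits below $x_+$ thanks to the positive infimum $\mu$, and beneath which a geodesic is trapped by a touching-point comparison; the $U$-coordinate argument then converts this into $U^{(+)}_\infty<U^{(-)}_0$, i.e.\ Type~2. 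Both proofs rest on the same quantitative content of the hypothesis (a positive lower bound on $x_+e^{\frac{1-\alpha}{2}h_cX_+}$, used after recognizing that a geodesic lying well below $x_+$ decays essentially like $e^{-\frac12 h_cX_+}$, which is the role of the paper's factor $1-\bar\alpha$ and of your $\lambda<\tfrac12 h_c$). What your version buys is an explicit trapping region and the trapped geodesic itself, with no contradiction framing; what it costs is the extra bookkeeping ($\lambda,\eta,\delta,\kappa,\mu$), the comparison-principle argument, and auxiliary steps the paper does not need (the derived fact $X_+\to\infty$, and the backward continuation to $v_0$ via the contrapositive of Lemma~\ref{lemma-1}). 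I checked the individual steps --- the strict inequalities in the supersolution computation, the positivity and finiteness of $\mu$, and the uniform bound $|h-h_c|<\eta h_c$ on $0\le x\le x_+(v)$ for large $v$ --- and they are all sound and consistent with the paper's own $(\varepsilon,\delta)$-style usage.
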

\begin{proof}
  For the sake of contradiction, we assume that all outgoing null geodesics between the inner and outer AH cross the outer AH
  at later times. 
  From this assumption, all outgoing null geodesics leaving from 
  the points $(v_0,x_0)$ with $0<x_0<x_+(v_0)$ must intersect the outer AH $x=x_+(v)$. 
  Therefore, they also intersect the surface $x=\bar{\alpha} x_+(v)$ for any $\bar{\alpha}$ satisfying $0<\bar{\alpha}<1$. Let $v_{\bar{\alpha}}$ be the time of this intersection. Then, for the range of $v_0<v<v_{\bar{\alpha}}$, $x(v)<\bar{\alpha}x_+(v)$ holds. Since $h(v,x)>(1-\varepsilon)h_c$ holds in the range $0<x(v)<\bar{\alpha}x_+(v)$ for an arbitrary $\varepsilon$ satisfying $0<\varepsilon<1$ by adopting a sufficiently large $v_0$, $x(v)$ satisfies
  \begin{align}
    \frac{dx}{dv}\,=\,\frac{1}{2}h(v,x)(x-x_+)x
    \,<\,\frac{(1-\varepsilon)(\bar{\alpha}-1)}{2}h_c x_+x.
   \end{align}
   Here, we choose the values of $\varepsilon$ and $\bar{\alpha}$ so that they satisfy $1-\alpha= (1-\varepsilon)(1-\bar{\alpha})$
   for the parameter $\alpha$ given in the proposition. Note that
   by choosing a sufficiently small $\varepsilon$, it is possible to make $\bar{\alpha}$ satisfy $0<\bar{\alpha}<1$.
   Then, we have
   \begin{align}
       \frac{dx}{dv}\,<\,-\frac{1-\alpha}{2}h_cx_+x.
  \end{align}
  Dividing both sides by $x$ 
  and integrating the resultant inequality, we have
  \begin{align}
    \ln{\left(\frac{x}{x_0}\right)}\,<\,-\frac{1-\alpha}{2}h_c \int_{v_0}^{v}{x_+(v^\prime)dv^\prime}
  \end{align}
  where $x_0 \coloneqq x(v_0)$. This inequality is equivalent to
  \begin{align}
    \label{eq-Case1-Type2}
      x<x_0\exp{\left[-\frac{1-\alpha}{2}h_c (X_+(v)-X_+(v_0))\right]}.
  \end{align}
Since $x(v)$ intersects $\bar{\alpha} x_+(v)$ at $v=v_{\bar{\alpha}}$, the right-hand side of Eq.~\eqref{eq-Case1-Type2} becomes $\bar{\alpha} x_+(v)$ at $v=\hat{v}$ for some $\hat{v}$ satisfying $\hat{v}<v_{\bar{\alpha}}$. 
Therefore, for each $x_0$ in the range $0<x_0<\bar{\alpha}x_+(v_0)$, there exists $\hat{v}$ satisfying
\begin{align}
  \label{eq-Case1-sufficient-2}
  x_0\,=\,\bar{\alpha}x_+(\hat{v})\exp{\left[\frac{1-\alpha}{2}h_c(X_+(\hat{v})-X_+(v_0))\right]}.
\end{align}
The right-hand side of this equation is a strictly positive continuous function of $\hat{v}$, 
and from the assumption of Eq.~\eqref{eq-Case1-sufficient-2-1}, 
this asymptotes to a positive value or diverges to $+\infty$ for $\hat{v}\rightarrow \infty$.
Therefore, this right-hand side has a positive minimum or a positive infimum in the range $v_0\le \hat{v}$, which we denote as $x_0^{(\mathrm{min})}$. 
Consequently, for outgoing null geodesics with $x_0<x_0^{(\mathrm{min})}$, there is no $\hat{v}$ satisfying Eq.~\eqref{eq-Case1-sufficient-2}, and hence, those geodesics do not intersect the curve $x=\bar{\alpha}x_+(v)$.
This is a contradiction. Since this contradiction is caused by the assumption that all outgoing null geodesics
between the inner and outer AHs cross the outer AH at later times,  
we have the conclusion that if Eq.~\eqref{eq-Case1-sufficient-2-1} holds, then there is an event horizon in the region $x>0$, i.e., the spacetime is of Type~2.
\end{proof}

\begin{corollary}
    \label{prop-case1-sufficient-2-2}
  If there exist $v_0$ and $\alpha$ in the range $0<\alpha<1$ which satisfy
  \begin{align}
    \label{eq-Case1-sufficient-2-2}
    \frac{dx_+}{dv}\,>\,-\frac{h_c}{2}(1-\alpha)x_+^2
  \end{align}
  for $v>v_0$, then the spacetime is of Type~2.
\end{corollary}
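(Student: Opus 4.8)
The plan is to show that the hypothesis~\eqref{eq-Case1-sufficient-2-2} of this corollary implies the hypothesis~\eqref{eq-Case1-sufficient-2-1} of Proposition~\ref{prop-Case1-prop2}, \emph{with the very same $\alpha$}, after which the conclusion that the spacetime is of Type~2 follows immediately. Concretely, I would show that~\eqref{eq-Case1-sufficient-2-2} forces the quantity $x_+(v)\exp\left[\frac{1-\alpha}{2}h_cX_+(v)\right]$ to be monotonically increasing for $v>v_0$, so that its $v\to\infty$ limit is bounded below by a positive constant, which is exactly what Proposition~\ref{prop-Case1-prop2} requires.

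The key steps I would carry out, in order, are as follows. First, recall that $x_+(v)>0$ on the whole interval of definition (in Case~1 one has $r_-(v)=r_c$, hence $x_+(v)=r_+(v)-r_c>0$, and it decreases monotonically to $0$) and that $h_c>0$, so every manipulation below is legitimate. Introduce the auxiliary function
\begin{equation}
  \Phi(v)\,:=\,\ln x_+(v)\,+\,\frac{1-\alpha}{2}h_cX_+(v),
\end{equation}
so that $e^{\Phi(v)}$ equals $x_+(v)\exp\left[\frac{1-\alpha}{2}h_cX_+(v)\right]$ up to the irrelevant integration constant in $X_+$ from Eq.~\eqref{Primitive-function-xpm}. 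Using $X_+'(v)=x_+(v)$ gives
\begin{equation}
  \Phi'(v)\,=\,\frac{1}{x_+(v)}\frac{dx_+}{dv}\,+\,\frac{1-\alpha}{2}h_cx_+(v).
\end{equation}
Dividing the assumed inequality~\eqref{eq-Case1-sufficient-2-2} by $x_+(v)>0$ yields $\frac{1}{x_+}\frac{dx_+}{dv}>-\frac{1-\alpha}{2}h_cx_+$ for $v>v_0$, and substituting this into the previous line gives $\Phi'(v)>0$ for all $v>v_0$. Hence $\Phi$ is strictly increasing on $(v_0,\infty)$, so $\Phi(v)>\Phi(v_0)$, i.e.
\begin{equation}
  x_+(v)\exp\left[\frac{1-\alpha}{2}h_cX_+(v)\right]\,>\,x_+(v_0)\exp\left[\frac{1-\alpha}{2}h_cX_+(v_0)\right]\,>\,0 \qquad (v>v_0).
\end{equation}
Since the left-hand side is increasing in $v$, its limit as $v\to\infty$ exists (possibly $+\infty$) and is no smaller than this positive constant; this is precisely condition~\eqref{eq-Case1-sufficient-2-1}, and Proposition~\ref{prop-Case1-prop2} then gives that the spacetime is of Type~2.

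There is no genuinely hard step here. The only thing one needs to spot is the integrating-factor manipulation — dividing~\eqref{eq-Case1-sufficient-2-2} through by $x_+$ so that the logarithmic derivative of the target quantity becomes manifestly positive — together with the observation that, unlike in the proof of Proposition~\ref{prop-Case1-prop2}, no re-tuning of the parameter is required, since the same $\alpha$ propagates from hypothesis to conclusion. The only minor care required is to confirm $x_+(v)>0$ throughout, which follows from the Case~1 condition $x_-=0$ together with the standing assumptions $\partial_v r_+<0$ and $r_+\to r_c$.
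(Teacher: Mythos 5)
Your proof is correct and follows essentially the same route as the paper: the paper differentiates the product $x_+(v)\exp\bigl[\tfrac{h_c}{2}(1-\alpha)X_+(v)\bigr]$ directly and observes that hypothesis \eqref{eq-Case1-sufficient-2-2} makes it monotonically increasing and positive, hence condition \eqref{eq-Case1-sufficient-2-1} of Proposition~\ref{prop-Case1-prop2} holds with the same $\alpha$. Your logarithmic-derivative formulation of the same monotonicity argument is a trivially equivalent presentation, so there is nothing to add.
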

\begin{proof}
  If the inequality of Eq.~\eqref{eq-Case1-sufficient-2-2} is satisfied, then 
  \begin{align}
    &\frac{d}{dv}\left(x_+(v)\exp{\left[\frac{h_c}{2}(1-\alpha)X_+(v)\right]}\right)\nonumber\\
    \,=\,&\exp{\left[\frac{h_c}{2}(1-\alpha)X_+(v)\right]}\left[\frac{dx_+}{dv}+\frac{h_c}{2}(1-\alpha)x_+^2\right]\,>\,0
  \end{align}
  holds. Therefore, $x_+(v)\exp{\left[\frac{h_c}{2}(1-\alpha)X_+(v)\right]}$ is a monotonically increasing function of $v$. Since this function is strictly positive, it satisfies the condition of Eq.~\eqref{eq-Case1-sufficient-2-1}, i.e., the spacetime is of Type~2.
\end{proof}

\begin{corollary}
  \label{corollary-Case1-1}
  If $(dx_+/dv)/x_+^2$ converges in the limit $v\to\infty$ and
  \begin{align}
    \label{eq-Case1-corollary-1}
    \lim_{v\rightarrow\infty}\frac{dx_+/dv}{x_+^2(v)}\,>\,-\frac{h_c}{2}
  \end{align}
  holds, then the spacetime is of Type~2. 
\end{corollary}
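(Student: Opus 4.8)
The plan is to deduce this corollary immediately from Corollary~\ref{prop-case1-sufficient-2-2}, whose hypothesis is the \emph{eventual} pointwise bound $dx_+/dv>-\tfrac{h_c}{2}(1-\alpha)x_+^2$ for some $\alpha\in(0,1)$. Since $x_+(v)>0$, that bound is equivalent to $(dx_+/dv)/x_+^2>-\tfrac{h_c}{2}(1-\alpha)$ for all sufficiently large $v$, so it suffices to produce a single $\alpha\in(0,1)$ for which the limiting value $L:=\lim_{v\to\infty}(dx_+/dv)/x_+^2$, assumed to exist by hypothesis, strictly exceeds $-\tfrac{h_c}{2}(1-\alpha)$; the $(\varepsilon,\delta)$-definition of the limit then converts this into the eventual inequality needed to invoke Corollary~\ref{prop-case1-sufficient-2-2}.

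First I would record the elementary observation that $\alpha\mapsto-\tfrac{h_c}{2}(1-\alpha)$ is continuous and strictly increasing on $(0,1)$, with values filling the open interval $(-\tfrac{h_c}{2},0)$ (using $h_c>0$). Because the hypothesis \eqref{eq-Case1-corollary-1} says $L>-\tfrac{h_c}{2}$ and $-\tfrac{h_c}{2}(1-\alpha)\to-\tfrac{h_c}{2}$ as $\alpha\downarrow 0$, one can pick $\alpha\in(0,1)$ close enough to $0$ that $-\tfrac{h_c}{2}(1-\alpha)<L$. Next, setting $\varepsilon_0:=L+\tfrac{h_c}{2}(1-\alpha)>0$, the convergence $(dx_+/dv)/x_+^2\to L$ furnishes a $v_0$ with $(dx_+/dv)/x_+^2>L-\varepsilon_0=-\tfrac{h_c}{2}(1-\alpha)$ for all $v>v_0$. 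Multiplying through by $x_+^2>0$ gives exactly inequality \eqref{eq-Case1-sufficient-2-2} for this $v_0$ and this $\alpha$, and Corollary~\ref{prop-case1-sufficient-2-2} then yields that the spacetime is of Type~2.

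There is essentially no hard step: the entire content is that a limit sitting strictly above $-h_c/2$ leaves enough room to absorb a factor $(1-\alpha)$ with $\alpha$ in the open interval $(0,1)$. The only points worth a line of care are (i) that the selected $\alpha$ genuinely lies in $(0,1)$ and not at an endpoint, which is precisely where the strictness in \eqref{eq-Case1-corollary-1} is used, and (ii) that $h_c>0$ so the interval $(-\tfrac{h_c}{2},0)$ is nondegenerate; note also that $dx_+/dv<0$ forces $L\le 0$, but this causes no difficulty since $-\tfrac{h_c}{2}(1-\alpha)<0$ for every $\alpha\in(0,1)$, so no case distinction on the sign of $L$ is needed.
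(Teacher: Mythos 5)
Your proposal is correct and follows essentially the same route as the paper: both arguments pick $\alpha$ small enough that the limit $L$ strictly exceeds $-\tfrac{h_c}{2}(1-\alpha)$, use the $(\varepsilon,\delta)$-definition of the limit to obtain the eventual pointwise bound $dx_+/dv>-\tfrac{h_c}{2}(1-\alpha)x_+^2$, and then invoke Corollary~\ref{prop-case1-sufficient-2-2}. No gaps; your choice $\varepsilon_0=L+\tfrac{h_c}{2}(1-\alpha)$ is exactly the tolerance the paper uses implicitly.
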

\begin{proof}
Defining
  \begin{align}
    A\coloneqq \lim_{v\rightarrow\infty}\frac{dx_+/dv}{x_+^2(v)},
  \end{align}
Eq.~\eqref{eq-Case1-corollary-1} is rewritten as $A>-\frac{h_c}{2}$. 
 Here, by adopting a sufficiently small $\alpha$, 
 it is possible to make the inequality $A>-\frac{h_c}{2}(1-\alpha)$ satisfied. Therefore, for sufficiently large $v_0$, 
  \begin{align}
    \left|\frac{dx_+/dv}{x_+^2}-A\right|\,<\,A+\frac{h_c}{2}(1-\alpha)
  \end{align}
  holds for $v>v_0$ and this leads to
  \begin{align}
    -\frac{h_c}{2}(1-\alpha)x_+^2\,<\,\frac{dx_+}{dv}.
  \end{align}
  This inequality is equivalent to Eq.~\eqref{eq-Case1-sufficient-2-2}, i.e., the spacetime is of Type~2.
\end{proof}

\subsection{\label{subsec-case1-2}Some examples of Case~1 spacetime}

\begin{table}[tb]
  \centering
  \caption{Summary of examples of Case~1 spacetimes that realize Type~1 and Type~2.} 
  \scalebox{0.8}{
  \begin{tabular}{|c|c|} \hline
    Type~1 &Type~2\\\hline\hline
    $x_+=\frac{a_+}{v^n} \quad \left(n>1\right)$
    &$x_+=\frac{a_+}{v^n} \quad \left(0<n<1\right)$\\\hline
    $x_+=\frac{a_+}{v}\quad \left(a_+<\frac{2}{h_c}\right)$&
    $x_+=\frac{a_+}{v}\quad \left(a_+>\frac{2}{h_c}\right)$\\
    \hline
  \end{tabular} 
  }
  \label{table-Case1}
\end{table}

 By applying the propositions proved above, we can identify the 
 functional forms of $x_+(v)$ that lead to Type~1 and Type~2, respectively.
 Focusing on the power-law functions $x_+(v) = a_+/v^n$, we have obtained the results as 
 summarized in Table~\ref{table-Case1}.
 The proofs are presented in Appendix~\ref{Examples:case1}.
 The general tendency is that Type~1 and Type~2 spacetimes are realized when
 the decay of $x_+(v)$ is relatively fast (i.e., $0<n<1$) and slow ($n>1$), respectively.
 In the case of $n=1$, both types are possible depending on the value of the coefficient, $a_+$.

 \newpage

%
%
\section{\label{sec-case2}Case~2 : Spacelike inner AH}
 In this section, we study the global structures of Case~2 spacetimes (i.e., the case where the inner horizon is spacelike) using the same procedure as that for Case~1.

\begin{lemma}
  \label{lemma-case2-1}
  Outgoing null geodesics whose initial position is on or inside the inner AH never intersect the inner AH at any later time, and hence, stay inside the inner AH. They asymptotically approach $x=0$ in the limit $v\to\infty$.
\end{lemma}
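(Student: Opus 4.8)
The plan is to prove Lemma~\ref{lemma-case2-1} in two steps, paralleling the proof of Lemma~\ref{lemma-case1}, but with one essential modification: in Case~2 the inner AH is \emph{spacelike}, not null, so the curve $x=x_-(v)$ is no longer an integral curve of Eq.~\eqref{eq-null-x}, and the uniqueness-of-solutions argument used for Case~1 is unavailable. Instead, the non-crossing statement will follow by comparing the slope of an outgoing null geodesic that touches the inner AH---which vanishes there, since the factor $x-x_-(v)$ does---with the strictly positive slope $dx_-/dv$ of the inner AH worldline. Throughout I would use the Case~2 facts that $x_-(v)<0$ for all finite $v$, that $x_-(v)$ is monotonically increasing with $\lim_{v\to\infty}x_-(v)=0$, and that $x_+(v)>0$; in particular ``on or inside the inner AH'' means $x\le x_-(v)$, which forces $x-x_+(v)<0$ in the region of interest.

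The first step is to show non-crossing. Consider an outgoing null geodesic with $x(v_0)\le x_-(v_0)$, and set $g(v)\coloneqq x(v)-x_-(v)$, so $g(v_0)\le 0$. Using Eq.~\eqref{eq-null-x} and $x-x_-(v)=g(v)$,
\begin{align}
g'(v)=\frac{1}{2}h(v,x)\bigl(x-x_+(v)\bigr)g(v)-\frac{dx_-}{dv}.
\end{align}
The key observation is that whenever $g(v)=0$ one has $g'(v)=-\,dx_-/dv<0$, so $g$ can never cross zero from below. Concretely, suppose for contradiction that $g(v)\ge 0$ at some $v>v_0$; letting $v_1$ be the first time with $g(v_1)=0$ (shrinking to the first instant after $v_0$ in the boundary case $g(v_0)=0$, where $g'(v_0)<0$ already pushes $g$ negative), one has $g<0$ on $(v_0,v_1)$, hence $g'(v_1)\ge 0$, contradicting $g'(v_1)<0$. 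Therefore $g(v)<0$ for all $v>v_0$: the geodesic remains strictly inside the inner AH and never meets it again.

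The second step is the asymptotic behavior. On $x<x_-(v)$ one has $x-x_-(v)<0$ and $x-x_+(v)<0$, so $(x-x_+)(x-x_-)>0$ and hence $dx/dv>0$; thus $x(v)$ increases monotonically, while $x(v)<x_-(v)<0$ bounds it above, so $x(v)$ converges to some $x_\infty\le 0$. To rule out $x_\infty<0$ I would repeat the argument of Lemma~\ref{lemma-case1}: since $x_\pm(v)\to 0$ and $h(v,x(v))\to h(\infty,x_\infty)$, the right-hand side of Eq.~\eqref{eq-null-x} tends to $\frac{1}{2}h(\infty,x_\infty)x_\infty^2>0$ along the geodesic, contradicting $x(v)\to x_\infty$ finite. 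Hence $x_\infty=0$, as claimed.

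The only real obstacle is the first step: because the inner AH is spacelike, one cannot appeal to uniqueness of ODE solutions as in Case~1, and must instead run the sign comparison between $g'$ and zero, taking a little care over the boundary case of a geodesic starting exactly on the inner AH. The asymptotic step is then routine and essentially identical to Lemma~\ref{lemma-case1}.
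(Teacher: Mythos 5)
Your proof is correct and follows essentially the same route as the paper: the asymptotic step is identical, and the non-crossing step rests on the same key observation that at a contact point with the inner AH the geodesic slope $dx/dv$ vanishes while $dx_-/dv>0$, so geodesics can only cross inward. The only difference is one of packaging — the paper draws the contradiction from uniqueness of solutions of the first-order ODE (mirroring the proof of Lemma~\ref{lemma-1}), whereas you formalize the same fact as a barrier/first-crossing argument on $g=x-x_-$; your remark that a uniqueness-style argument is unavailable in Case~2 is therefore a bit overstated, though your alternative is sound and arguably more self-contained.
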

 \begin{proof}
 Since $\frac{dx}{dv}= 0$ and $\frac{dx_-}{dv}>0$ are satisfied on the inner AH, at every point on the inner AH, the outgoing null geodesics cross the horizon from the outside region to the inside region.
Then, if an outgoing null geodesic whose initial position is on or inside the inner AH arrives at some point on the inner AH,
it immediately contradicts the uniqueness of solutions of first-order differential equations. 
Therefore, if an outgoing null geodesic is inside or on the inner AH at a given time, they will not intersect it at any later finite advanced time. 
      
      Next, we shall examine the asymptotic value of $x(v)$ of the outgoing null geodesics under consideration in the limit $v\rightarrow \infty$. Since we have $x(v)<x_-(v)$ for all $v>v_0$ and $\lim_{v\rightarrow\infty}x_-(v)=0$, we have $\lim_{v\rightarrow\infty}x(v)\leq 0$. 
      For the sake of contradiction, we assume $\lim_{v\rightarrow\infty}x(v)=x_{\mathrm{sup}} (< 0)$.
      The values of $dx/dv$ of these geodesics satisfy
      \begin{align}
        \lim_{v\rightarrow\infty}\frac{dx}{dv}=\frac{1}{2}h(\infty,r_{\mathrm{sup}})x_{\mathrm{sup}}^2>0.
      \end{align}
  This contradicts the assumption that $x_{\mathrm{sup}}$ be the limiting value of $v\rightarrow \infty$. 
  This leads to $x_{\mathrm{sup}}=0$, i.e., these outgoing null geodesics satisfy $\lim_{v\rightarrow \infty} x(v)=0$.
 \end{proof}  
It is also worth presenting the following lemma:
 \begin{lemma}
  \label{lemma-case2-2}
  Outgoing null geodesics in the region $x_-<x<0$ eventually cross the inner AH at later times. 
\end{lemma}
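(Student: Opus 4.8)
The plan is to prove the lemma by contradiction, exploiting the fact that in the strip $x_-(v)<x<0$ an outgoing null geodesic must move \emph{downward} in $x$ while the inner AH moves \emph{upward}, so the two are forced to meet in finite advanced time.

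First I would pin down the signs. For finite $v$ the monotonicity and limits assumed in Sec.~\ref{sec-settings} (in Case~2 one has $\partial_v r_->0$ with $\lim r_-=r_c$, and $\partial_v r_+<0$ with $\lim r_+=r_c$) give $x_-(v)<0<x_+(v)$. Hence for any point with $x_-(v)<x<0$ one has $x-x_+(v)<0$, $x-x_-(v)>0$, and $h(v,x)>0$, so Eq.~\eqref{eq-null-x} yields $dx/dv<0$. Thus along an outgoing null geodesic $x(v)$ is strictly decreasing for as long as it stays in this strip; in particular it cannot escape through the upper edge $x=0$ (it moves away from it), so the only way to leave the strip is across the lower edge $x=x_-(v)$, i.e.\ the inner AH.

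Next I would assume, for contradiction, that a geodesic with $x_-(v_0)<x(v_0)=:x_0<0$ stays in the strip for all $v\ge v_0$. Then $x(v)$ is bounded (between $x_-(v_0)$ and $0$), so, the right-hand side of Eq.~\eqref{eq-null-x} being continuous and bounded there, the solution persists for all $v\ge v_0$, and strict monotonicity gives $x(v)<x_0<0$ for every $v>v_0$. On the other hand $x_-(v)\nearrow 0$ as $v\to\infty$, so there exists $v_1>v_0$ with $x_-(v_1)>x_0>x(v_1)$, contradicting $x(v_1)>x_-(v_1)$. Hence the geodesic must leave the strip, and by the previous paragraph it does so by meeting $x=x_-(v)$, which is the claim.

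There is no real obstacle here: unlike Lemma~\ref{lemma-case2-1}, this statement does not need the limiting equation $dx/dv\to\tfrac12 h(\infty,x)x^2$, since the crossing is forced purely by the opposing monotonicities of $x(v)$ and $x_-(v)$. The only points deserving a sentence of care are (i) establishing $x_-(v)<0<x_+(v)$ for all finite $v$ from the hypotheses, and (ii) noting that the geodesic cannot run off the end of its $v$-interval before $v_1$ is reached, which follows from the boundedness of $x$ on the strip.
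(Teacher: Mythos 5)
Your proof is correct and is essentially the paper's argument, just written out in full: the paper's proof is the one-line observation that $dx/dv<0$ in the strip while $x_-(v)$ asymptotes to zero, which is exactly the opposing-monotonicity crossing you formalize by contradiction. The extra care you take (sign of $x_\pm$, persistence of the solution, impossibility of exiting through $x=0$) fills in details the paper deems obvious but does not change the route.
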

 \begin{proof}
 This is obvious because $dx/dv<0$ in this region, while $x_-$ asymptotes to zero. 
\end{proof}
From Lemmas~\ref{lemma-case2-1} and \ref{lemma-case2-2}, the black hole is present in each Case~2 spacetime,
and the event horizon is located in the region $x>0$.
 
\begin{figure}[t]
  \centering
  \includegraphics[width=0.5\linewidth]{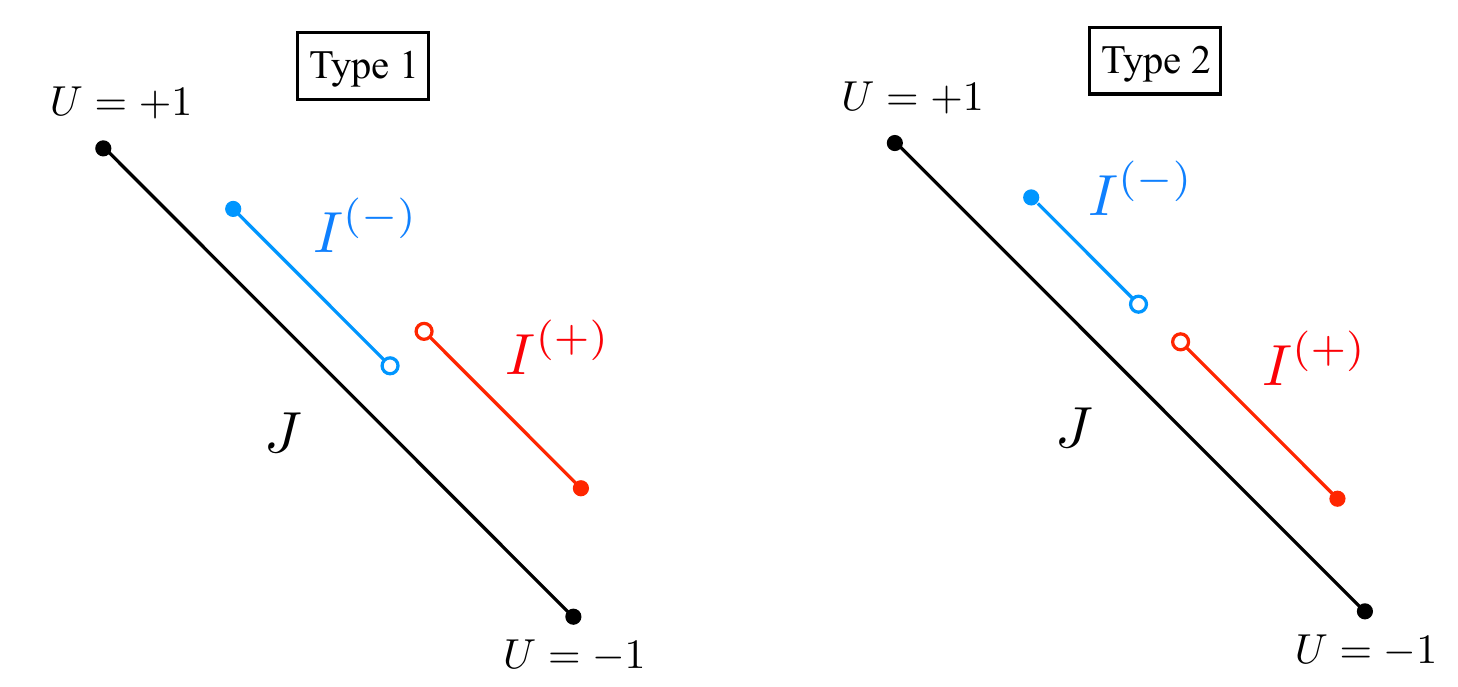}
  \caption{The same as Fig.~\ref{Rod-structure-case1} but for  Case~2. }
  \label{Rod-structure-case2}
\end{figure}

From Lemma \ref{lemma-1} and Lemma \ref{lemma-case2-1}, outgoing null geodesics crossing the inner AH 
stay in the inner AH at any later time, and outgoing null geodesics crossing the outer AH 
stay in the outer region at any later time. We now discuss the possible types of Penrose diagrams. 
Let us recall the three sets, $J$ and $I^{(\pm)}$ introduced in Sec.~\ref{Subsec:compactified-retarded}.
In Case~2, we have shown $I^{(-)}=\left(U^{(-)}_\infty,\,U^{(-)}_0 \right]$ and  $I^{(+)}=\left[U^{(+)}_0,\, U^{(+)}_\infty\right)$.
The above argument indicates that there must not be an overlap between $I^{(-)}$ and $I^{(+)}$.
Therefore, the inequality
\begin{equation}
-1<U^{(+)}_0<U^{(+)}_\infty\le U^{(-)}_\infty<U^{(-)}_0<+1
\end{equation}
must hold. Then, we define Type~1 and Type~2 of the spacetimes in Case~1 as follows:
\begin{dfn}
The spacetime of Case~2 is defined to be of Type~1 if  $U^{(+)}_\infty= U^{(-)}_\infty$ holds,
while the spacetime of Case~2 is defined to belong to Type~2 if  $U^{(+)}_\infty< U^{(-)}_\infty$ holds.
\label{Def:Case2-Type1-Type2}
\end{dfn}
The intervals $J$ and $I^{(\pm)}$ in Type~1 and Type~2 are depicted
in Fig.~\ref{Rod-structure-case2}. 
Physically, between the inner and outer AHs of a Type~1 spacetime,
there is a unique outgoing null geodesic that crosses 
neither the outer AH nor the inner AH. 
This is the outgoing null geodesics consisting of the event horizon.
On the other hand,  in a Type~2 spacetime, between the inner and outer AHs, there is a set of outgoing null geodesics with nonzero measure that do not cross
the two horizons, and the outermost one is the event horizon.
Since there is no possibility other than Type~1 and Type~2,
we have the following theorem:
\begin{thm}
  \label{thm-case2}
  The classification of Definition~\ref{Def:Case2-Type1-Type2} is complete
  in the sense that any spacetime of Case~2 belongs to either Type~1 or Type~2.
  \end{thm}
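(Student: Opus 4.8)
The plan is to show that the two alternatives in Definition~\ref{Def:Case2-Type1-Type2} are mutually exclusive and jointly exhaustive, which reduces to verifying the single inequality $U^{(+)}_\infty \le U^{(-)}_\infty$ for every Case~2 spacetime. Once this is in hand, exactly one of $U^{(+)}_\infty = U^{(-)}_\infty$ (Type~1) or $U^{(+)}_\infty < U^{(-)}_\infty$ (Type~2) holds, and there is nothing further to prove. So the whole content is the ordering of the intervals $J$ and $I^{(\pm)}$ on the $U$-axis.

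To establish $U^{(+)}_\infty \le U^{(-)}_\infty$, I would argue that the ranges $I^{(+)} = [U^{(+)}_0, U^{(+)}_\infty)$ and $I^{(-)} = (U^{(-)}_\infty, U^{(-)}_0]$ are disjoint, together with the pointwise ordering $U^{(+)}(v) < U^{(-)}(v)$ (in particular $U^{(+)}_0 < U^{(-)}_0$) coming from the fact that the outer AH lies outside the inner AH, cf.~Eq.~\eqref{Upm-constraint}. For disjointness, suppose some value $U_\ast$ belonged to both ranges, so that the outgoing null geodesic labelled by $U_\ast$ sits on the outer AH at some time $v_1$ and on the inner AH at some time $v_2$. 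If $v_1 < v_2$, Lemma~\ref{lemma-1} says that geodesic has already left the outer AH and remains strictly outside it for all $v>v_1$, contradicting that it is on the inner AH (which is interior to the outer AH) at $v_2$. If $v_2 < v_1$, Lemma~\ref{lemma-case2-1} says the geodesic stays strictly inside the inner AH for all $v>v_2$, contradicting that it meets the outer AH at $v_1$. Hence $I^{(+)} \cap I^{(-)} = \emptyset$; since $U^{(+)}(v)$ is increasing and $U^{(-)}(v)$ is decreasing with $U^{(+)}<U^{(-)}$ throughout, the entire interval $I^{(+)}$ lies to the left of the entire interval $I^{(-)}$, and taking $v\to\infty$ yields $U^{(+)}_\infty \le U^{(-)}_\infty$, i.e. the chain $-1<U^{(+)}_0<U^{(+)}_\infty\le U^{(-)}_\infty<U^{(-)}_0<+1$ displayed before the theorem.

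The only subtlety — a mild one — is the non-strictness: equality $U^{(+)}_\infty = U^{(-)}_\infty$ is genuinely possible because these quantities are defined only as limits as $v\to\infty$, so a supremum of $I^{(+)}$ may coincide with an infimum of $I^{(-)}$ even though the open intervals never overlap. Far from being an obstacle, this is exactly what distinguishes Type~1 from Type~2, so it is the substance of the classification rather than a gap in it. With disjointness and the ordering established, the two cases of Definition~\ref{Def:Case2-Type1-Type2} exhaust all Case~2 spacetimes, which proves the theorem. Structurally the argument is identical to the one behind Theorem~\ref{thm-case1}, with $U^{(-)}_0$ replaced by $U^{(-)}_\infty$ because the inner AH is now spacelike (so $I^{(-)}$ is a nondegenerate interval accumulating at $U^{(-)}_\infty$) rather than null.
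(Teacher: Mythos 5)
Your proposal is correct and follows essentially the same route as the paper: the paper also deduces the theorem from the non-overlap of $I^{(+)}$ and $I^{(-)}$ (via Lemma~\ref{lemma-1} and Lemma~\ref{lemma-case2-1}, since a single $U$-constant outgoing null geodesic cannot lie on both AHs), which yields the chain $-1<U^{(+)}_0<U^{(+)}_\infty\le U^{(-)}_\infty<U^{(-)}_0<+1$ and hence the dichotomy of Definition~\ref{Def:Case2-Type1-Type2}. Your write-up merely makes the disjointness argument and the limiting step explicit, which the paper leaves as surrounding discussion.
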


 From the above analyses, we know how all the outgoing null geodesics behave qualitatively
 once its Type is specified. Therefore, we can construct the Penrose diagram as shown in Figure \ref{pic-case2-diagrams}.\footnote{
 The Penrose diagram of a Type~1 spacetime in Case~2 can be found in existing literatures, e.g., Fig.~7 of \cite{Chen:2014jwq}.}

 \begin{figure}[t]
  \centering
  \includegraphics[width=0.5\linewidth]{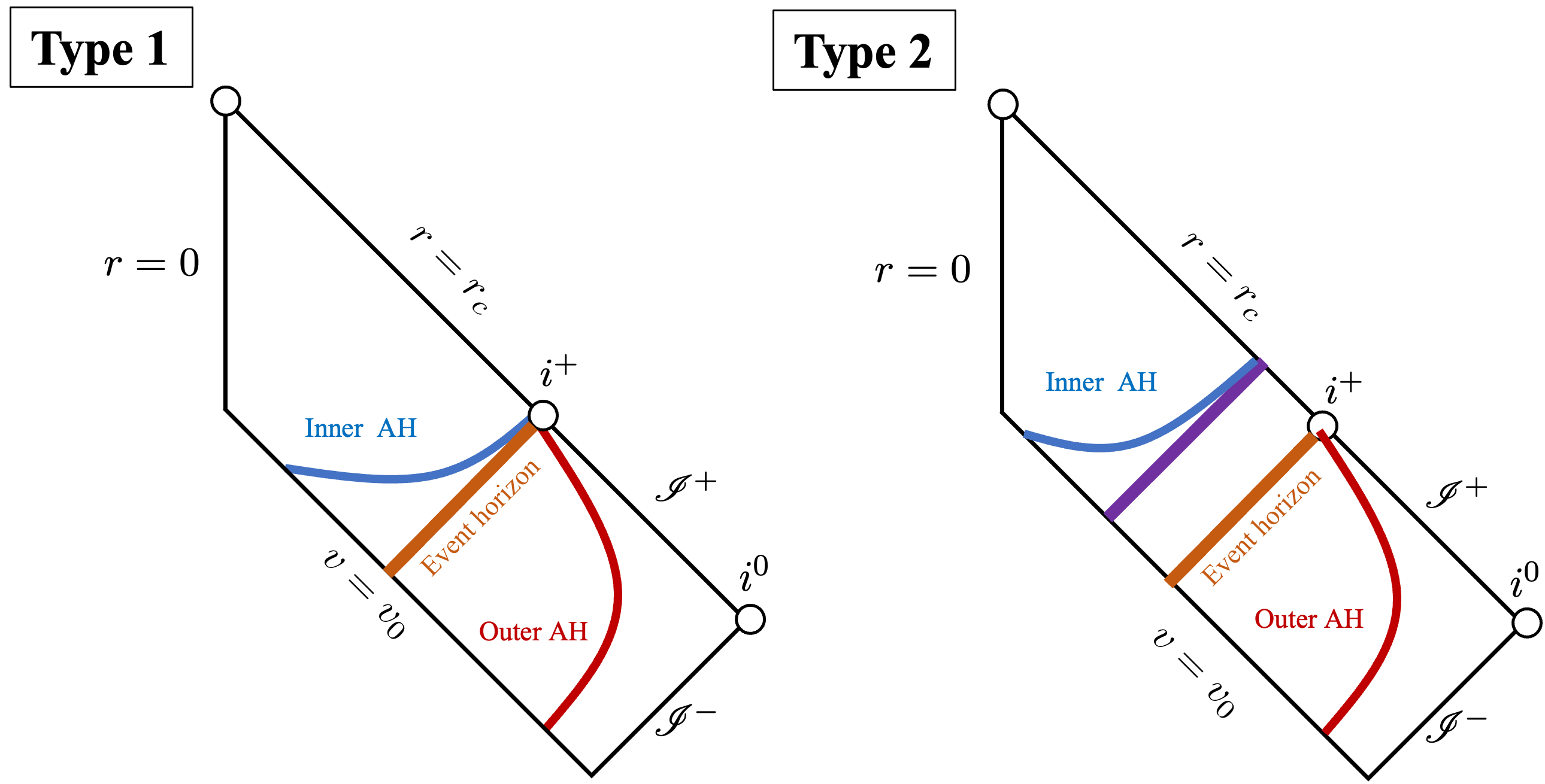}
  \caption{The Penrose diagrams of Case~2 spacetimes of Type~1 (left panel) and of Type 2 (right panel). 
  Similarly to Figure~\ref{pic-case1-diagrams}, the red, blue and orange curves represent the outer AH, the inner AH, and the event horizon, respectively. The purple line of Type~2 is an outgoing null geodesic that corresponds to $U=U^{(-)}_\infty$.}
  \label{pic-case2-diagrams}
\end{figure}

 \subsection{Sufficient conditions to classify as Type~1 or Type~2.}
 Here, we derive several sufficient conditions to determine which Penrose diagram corresponds to a given metric.  The following two propositions give the sufficient conditions for the spacetime to be of Type~1:

\begin{prop}
  \label{prop-case2-Type1-2}
  If there exist positive constants $\varepsilon_\pm$ for which
  \begin{align}
    \label{eq-case2-Type1-2-1}
    \lim_{v\rightarrow\infty}x_+\exp{\left[\frac{h_c}{2}\left\{(1+\varepsilon_+)X_+(v)+(1-\varepsilon_-)X_-(v)\right\}\right]}\,=\,0
  \end{align}
  is satisfied,
   then the spacetime is classified as Type~1.
\end{prop}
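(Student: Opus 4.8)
The plan is to argue by contradiction, mirroring the proof of Proposition~\ref{prop-Case1-Type1} but now tracking the inner AH as well. Suppose the spacetime is not of Type~1, so by Definition~\ref{Def:Case2-Type1-Type2} and the completeness of the classification (Theorem~\ref{thm-case2}) we have $U^{(+)}_\infty<U^{(-)}_\infty$. Pick any $\tilde U\in\left(U^{(+)}_\infty,U^{(-)}_\infty\right)$ and let $x=\chi(v)$ be the outgoing null geodesic with $U\equiv\tilde U$. Since $\tilde U>U^{(+)}_\infty$ this geodesic lies in the black hole region and, by Lemma~\ref{lemma-1}, never meets the outer AH; since $\tilde U<U^{(-)}_\infty$ the inner AH never sweeps across it, so it never meets the inner AH either. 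Hence $x_-(v)<\chi(v)<x_+(v)$ for all $v\ge v_0$. On this strip the right-hand side of Eq.~\eqref{eq-null-x} is negative, so $\chi$ is strictly decreasing; by the squeeze $x_\pm(v)\to0$ it tends to $0$; and, being strictly decreasing with limit $0$, it is positive, so in fact $0<\chi(v)<x_+(v)$ for all $v\ge v_0$, exactly the configuration met in Case~1.

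The core is then a Gr\"onwall-type estimate along $\chi$. From Eq.~\eqref{eq-null-x}, $d(\ln\chi)/dv=\frac{1}{2}h(v,\chi)(\chi-x_+)(\chi-x_-)/\chi$; using that $h(v,\chi)$ can be taken arbitrarily close to $h_c$ for large $v$ (because $\chi\to0$), together with a lower bound on $(\chi-x_+)(\chi-x_-)/\chi$ in terms of $x_\pm$, the aim is a differential inequality that integrates to
\begin{align}
\chi(v)\;\ge\;\chi(v_0)\exp\!\left[-\frac{h_c}{2}\Big\{(1+\varepsilon_+)\big(X_+(v)-X_+(v_0)\big)+(1-\varepsilon_-)\big(X_-(v)-X_-(v_0)\big)\Big\}\right].
\end{align}
Combining this with $\chi(v)<x_+(v)$ gives
\begin{align}
x_+(v)\exp\!\left[\frac{h_c}{2}\big\{(1+\varepsilon_+)X_+(v)+(1-\varepsilon_-)X_-(v)\big\}\right]\;>\;C\;>\;0
\end{align}
for a positive constant $C$ determined by $v_0$, whereas the left-hand side tends to $0$ by hypothesis~\eqref{eq-case2-Type1-2-1}. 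This contradiction forces $U^{(+)}_\infty=U^{(-)}_\infty$, i.e.\ Type~1.

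The step I expect to be the main obstacle is the lower bound on $(\chi-x_+)(\chi-x_-)/\chi$ that produces exactly the combination $(1+\varepsilon_+)x_++(1-\varepsilon_-)x_-$ in the exponent. In Case~1 this was automatic, since $x_-\equiv0$ and one only needed $\chi>0$; here one must also quantify how far $\chi$ stays above the inner AH. A short computation shows that the bound $(\chi-x_+)(\chi-x_-)/\chi\ge-\{(1+\varepsilon_+)x_++(1-\varepsilon_-)x_-\}$ is equivalent to $\chi^2+\big(\varepsilon_+x_+-\varepsilon_-x_-\big)\chi+x_+x_-\ge0$, i.e.\ to $\chi$ lying above the positive root of this quadratic (which is at most $\sqrt{-x_+x_-}$). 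The information $\tilde U<U^{(-)}_\infty$---that $\chi$ is separated from the inner AH by a fixed $U$-gap---is precisely what I would use to secure this, most naturally via a barrier argument: any outgoing null geodesic that drops below the relevant curve is thereafter captured by the inner AH and so cannot be our $\chi$. Care is needed with the choice of $\varepsilon_\pm$ (decreasing them from the values supplied by the hypothesis, and absorbing the $h\to h_c$ error and an $O(\chi^2)$ remainder), and the estimate is cleanest when the inner AH does not decay much more slowly than the outer AH; the complementary regime, in which the hypothesis holds only because $X_-(v)\to-\infty$, can be treated separately.
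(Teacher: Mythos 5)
Your overall architecture (contradiction, a Gr\"onwall-type integration, comparison with $x_+$, then contradiction with the hypothesis \eqref{eq-case2-Type1-2-1}) matches the paper's, but the core estimate is set up on the wrong object, and the gap you yourself flag is fatal as proposed. In Case~2 one has $x_+x_-<0$, so the bound $(\chi-x_+)(\chi-x_-)/\chi\ge-\{(1+\varepsilon_+)x_++(1-\varepsilon_-)x_-\}$ genuinely fails whenever $\chi\lesssim\sqrt{x_+|x_-|}$, and nothing in the assumption $\tilde U<U^{(-)}_\infty$ gives a quantitative lower bound keeping $\chi$ above that curve: a fixed $U$-gap from the inner AH is compatible with $\chi$ decaying much faster than $\sqrt{x_+|x_-|}$. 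Your proposed repair, ``any geodesic that drops below the relevant curve is thereafter captured by the inner AH,'' is unproven and false in general; for instance, in the solvable model of Sec.~\ref{Sec:Solvable-n>1} the event horizon $x_{\rm EH}\sim 1/v^{2n-1}$ lies far below $\sqrt{x_+|x_-|}\sim 1/v^{n}$ at late times and is never captured by the inner AH. Likewise, deferring ``the complementary regime in which the hypothesis holds only because $X_-\to-\infty$'' is not a proof: that situation is handled in the paper by a separate statement (Proposition~\ref{prop-Case2-Type1-3}) with a genuinely different estimate, so it cannot simply be set aside here.

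The missing idea in the paper's proof is to run the Gr\"onwall argument not on a single geodesic but on the \emph{difference} $\zeta(v)=x(v)-x_{\rm A}(v)$, where $x_{\rm A}$ is the innermost non-crossing geodesic (the one with $U=U^{(-)}_\infty$), which exists under the Type-2 assumption. Subtracting the two copies of Eq.~\eqref{eq-null-x} and using the mean value theorem to write $h(v,x)-h(v,x_{\rm A})=h_{,x}(v,x_{\rm I})\,\zeta$, one gets an equation of the form $d\zeta/dv=\frac{\zeta}{2}\bigl[\cdots\bigr]$ in which the dangerous product $x_+x_-$ appears multiplied by the bounded quantity $h_{,x}$ and by $\zeta$, rather than divided by the (possibly very small) geodesic value $\chi$; all bracketed coefficients converge to $h_c$, yielding exactly $d\zeta/dv>-\frac{h_c}{2}[(1+\varepsilon_+)x_++(1-\varepsilon_-)x_-]\zeta$ without any barrier assumption. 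Integrating and using $x_+>\zeta$ then produces the contradiction with \eqref{eq-case2-Type1-2-1}. Without this (or some other device that removes the $x_+x_-/\chi$ obstruction), your argument does not go through.
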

\begin{proof}
 For the sake of contradiction, we assume the existence of outgoing null geodesics with nonzero measure
 that stay in the region $x_-(v)<x(v)<x_+(v)$.  
 We denote these outgoing null geodesics as $x(v)$, and the innermost one as $x_{\rm A}(v)$ (that corresponds to $U=U^{(-)}_\infty$). 
 As they do not intersect the inner AH, they satisfy the condition $x(v)>0$ and $x_{\rm A}(v)>0$
 from Lemma~\ref{lemma-case2-2}. 
 The function $x(v)$ satisfies Eq.~\eqref{eq-null-x}, and $x_{\mathrm{A}} (v)$ satisfies 
 \begin{align}
  \frac{dx_{\mathrm{A}}}{dv}=\frac{h(v,x_{\mathrm{A}})}{2}(x_{\mathrm{A}}-x_+)(x_{\mathrm{A}}-x_-).
 \end{align}
   Introducing $\zeta$ by $\zeta(v)\coloneqq x(v)-x_{\mathrm{A}}$, the equation for $\zeta$ is
  \begin{align}
    \frac{d\zeta}{dv}
    &\,=\,\frac{h(v,x)}{2}[x_{\mathrm{A}}+x-(x_++x_-)]\zeta+\frac{h(v,x)-h(v,x_{\mathrm{A}})}{2}(x_{\rm A}-x_+)(x_{\rm A}-x_-).
  \end{align}
  Here, from Rolle's theorem, there exists $x_{\rm I}$ satisfying $x_{\rm A}<x_{\rm I}<x$ such that 
  \begin{align}
    \partial_x h(v,x_{\rm I})=\frac{h(v,x)-h(v,x_{\rm A})}{\zeta}
  \end{align}
  holds. Then, the equation can be rewritten as
  \begin{equation}
  \frac{d\zeta}{dv}\, = \, 
  \frac{\zeta}{2}
  \left[h(v,x)x+\left(h(v,x)-h_{,x}(v,x_{\rm I})x_{\rm A}\right)(x_{\rm A}-x_+-x_-)
  -h_{,x}(v,x_{\rm I})x_+x_-\right],
  \end{equation}
  Since $x$, $x_{\rm I}$, and $x_{\rm A}$ converge to zero while $h_{,x}(v,x)$ remains finite in the limit $v\to\infty$, we have $\lim_{v\to\infty}h(v,x)-h_{,x}(v,x_{\rm I})x_{\rm A} = h_c$,
  and thus, the positivity of $\left(h(v,x)-h_{,x}(v,x_{\rm I})x_{\rm A}\right)x_{\rm A}$ is guaranteed for $v>v_0$ with a sufficiently large $v_0$.
  Together with $h(v,x)x>0$, we have the inequality, 
  \begin{equation}
  \frac{d\zeta}{dv}\, > \, 
  -\frac{\zeta}{2}\left\{
  \left[h(v,x)-h_{,x}(v,x_{\rm I})(x_{\rm A}-x_-)\right]x_+
  +\left[h(v,x)-h_{,x}(v,x_{\rm I})x_{\rm A}\right]x_-
  \right\}.
  \end{equation}
  Since the functions inside the square brackets of the right-hand side both converge to $h_c$ in the limit $v\to 0$, 
  we have 
  \begin{subequations}
  \begin{eqnarray}
h(v,x)-h_{,x}(v,x_{\rm I})(x_{\rm A}-x_-)&\,<\,&h_c(1+\varepsilon_+),\\ 
  h(v,x)-h_{,x}(v,x_{\rm I})x_{\rm A}&\,>\,&h_c(1-\varepsilon_-),
  \end{eqnarray}
  \end{subequations}
  and hence,
  \begin{equation}
  \frac{d\zeta}{dv}\, > \, 
  -\frac{h_c}{2}
  \left[(1+\varepsilon_+)x_+
  +(1-\varepsilon_-)x_-
  \right]\zeta.
  \end{equation}
Dividing both sides by $\zeta$ and 
  integrating the resultant inequality, we have
  \begin{align}
    \zeta\,>\,\zeta_0\exp{\left[-\frac{h_c}{2}\left\{(1+\varepsilon_+)\left(X_+(v)-X_+(v_0)\right)+(1-\varepsilon_-)\left(X_-(v)-X_-(v_0)\right)\right\}\right]},
  \end{align}
  where $\zeta_0$ is defined as $\zeta_0\coloneqq \zeta(v_0)$. 
  Since $x_+>\zeta$ holds, we have
  \begin{align}
    x_+\exp{\left[\frac{h_c}{2}\left\{(1+\varepsilon_+)X_+(v)+(1-\varepsilon_-)X_-(v)\right\}\right]}
        \,>\,
        \zeta_0\exp{\left[\frac{h_c}{2}\left\{(1+\varepsilon_+)X_+(v_0)+(1-\varepsilon_-)X_-(v_0)\right\}\right]}.
  \end{align}
  However, the left-hand side converges to zero due to Eq.~\eqref{eq-case2-Type1-2-1}.
  This is a contradiction. Since this contradiction is caused by the assumption of the existence of
  outgoing null geodesics with nonzero measure in the range $x_-<x<x_+$, there is unique 
  outgoing null geodesic in that region, i.e., the spacetime is of Type~1.
\end{proof}
Proposition \ref{prop-case2-Type1-2} leads to the following two corollaries.
\begin{corollary}
  \label{corollary-case2-Type1-1}
  If there exist two positive constants $\varepsilon_{\pm}$  for which
  \begin{align}
    \label{eq-corollary-case2-Type1-1}
    \lim_{v\rightarrow\infty}\left[(1+\varepsilon_+)X_++(1-\varepsilon_-)X_-\right]\,\neq\,\infty
  \end{align}
  holds, then the spacetime is of Type~1.
\end{corollary}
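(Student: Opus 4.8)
The plan is to obtain this corollary as an immediate consequence of Proposition~\ref{prop-case2-Type1-2}: I would show that the hypothesis~\eqref{eq-corollary-case2-Type1-1} forces the hypothesis~\eqref{eq-case2-Type1-2-1} of that proposition, with the \emph{same} pair of constants $\varepsilon_\pm$. Set $g(v):=(1+\varepsilon_+)X_+(v)+(1-\varepsilon_-)X_-(v)$, so that the quantity appearing in~\eqref{eq-case2-Type1-2-1} is $x_+(v)\,\exp\!\big[(h_c/2)\,g(v)\big]$, and the assumption of the corollary is that $g(v)\not\to+\infty$ as $v\to\infty$.

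The key step is to upgrade ``$g\not\to+\infty$'' to ``$g$ is bounded above for $v\ge v_0$.'' Since $x_+>0$, the primitive $X_+$ is monotonically increasing; and since in Case~2 the inner AH is spacelike we have $x_-<0$ (the inner AH radius increases up to $r_c$), so $X_-$ is monotonically decreasing. Hence each of $X_\pm$ has a limit in $[-\infty,+\infty]$, and the only obstruction to $g$ being eventually bounded above is the indeterminate case $\lim X_+=+\infty$ together with $\lim X_-=-\infty$ (when $1-\varepsilon_->0$); excluding this degenerate possibility is exactly what reading~\eqref{eq-corollary-case2-Type1-1} as ``$g$ has a limit in $[-\infty,+\infty)$'' means. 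Granting it, there is a constant $M$ with $g(v)\le M$ for $v\ge v_0$, so $\exp\!\big[(h_c/2)\,g(v)\big]\le e^{(h_c/2)M}$ is bounded.

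Finally I would combine this bound with $\lim_{v\to\infty}x_+(v)=0$ (assumption~(2) of the evaporation phase, i.e.\ $\lim_{v\to\infty}r_\pm=r_c$): the product of the bounded factor $\exp\!\big[(h_c/2)\,g(v)\big]$ with $x_+(v)\to0$ tends to zero, which is precisely~\eqref{eq-case2-Type1-2-1}. Proposition~\ref{prop-case2-Type1-2} then applies and concludes that the spacetime is of Type~1. The sign of $1-\varepsilon_-$ enters nowhere in the product estimate, so no case distinction $\varepsilon_-<1$ versus $\varepsilon_->1$ is needed.

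I expect the first step to be the only delicate point: a continuous function can fail to diverge to $+\infty$ and yet be unbounded above along a subsequence, in which case $x_+(v)\exp[(h_c/2)g(v)]$ need not vanish in the limit. So the statement hinges on reading ``$\lim\neq\infty$'' as ``eventually bounded above'' — which, under the paper's regularity hypotheses on $x_\pm$ (and in all explicit models treated, where $X_\pm$ either converges or grows only like $\ln v$), coincides with the literal reading because $\lim g$ then genuinely exists. If one wishes to avoid this discussion entirely, restating~\eqref{eq-corollary-case2-Type1-1} as $\limsup_{v\to\infty}\big[(1+\varepsilon_+)X_++(1-\varepsilon_-)X_-\big]<\infty$ makes the boundedness of the exponential automatic and the remainder of the argument unchanged.
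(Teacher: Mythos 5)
Your proposal is correct and follows essentially the same route as the paper: under the hypothesis the exponential factor $\exp\left[\tfrac{h_c}{2}\left\{(1+\varepsilon_+)X_+ + (1-\varepsilon_-)X_-\right\}\right]$ stays bounded, so multiplying by $x_+(v)\to 0$ gives the condition of Proposition~\ref{prop-case2-Type1-2} with the same $\varepsilon_\pm$. Your caveat about reading ``$\lim\neq\infty$'' as eventual boundedness above (rather than mere failure to diverge) is a fair point, but it is the same implicit reading the paper uses when it asserts that the argument of the exponential ``remains finite or diverges to $-\infty$.''
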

 \begin{proof}
 When Eq.~\eqref{eq-corollary-case2-Type1-1} holds, the argument of the exponential function in Eq. \eqref{eq-case2-Type1-2-1} remains finite or diverges to $-\infty$. 
  Then, the exponential function remains finite. 
  Since $x_+$ asymptotes to zero for $v\to\infty$, the condition of Eq.~\eqref{eq-case2-Type1-2-1} is obviously satisfied, i.e., this spacetime is of Type~1.
 \end{proof}
\begin{corollary}
  \label{corollary-case2-Type1-2}
  If 
  \begin{align}
    \label{eq-case2-R}
    \lim_{v\rightarrow\infty}\frac{x_+}{|x_-|}\eqqcolon R\,<\,1
  \end{align}
  holds, then the spacetime is of Type~1.
\end{corollary}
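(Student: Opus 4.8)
The plan is to deduce this from Corollary~\ref{corollary-case2-Type1-1} by producing admissible constants $\varepsilon_\pm$. I would begin by recording the signs that hold throughout the evaporation phase: since $r_+(v)$ decreases to $r_c$ and $r_-(v)$ increases to $r_c$, we have $x_+(v)>0$ and $x_-(v)<0$ for every $v$, so $|x_-|=-x_-$ and the hypothesis becomes $\lim_{v\to\infty} x_+(v)/(-x_-(v))=R<1$.

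The key step is the following choice: because $R<1$ strictly, I can fix positive constants $\varepsilon_+$ and $\varepsilon_-$ with $\varepsilon_-<1$ such that $(1-\varepsilon_-)/(1+\varepsilon_+)>R$. By the $(\varepsilon,\delta)$-definition of the limit there is then a $v_0$ with $x_+(v)/(-x_-(v))<(1-\varepsilon_-)/(1+\varepsilon_+)$ for all $v>v_0$, which, after multiplying by the positive quantity $(1+\varepsilon_+)(-x_-(v))$, rearranges to the inequality $(1+\varepsilon_+)x_+(v)+(1-\varepsilon_-)x_-(v)<0$ on $(v_0,\infty)$.

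It remains to translate this into a statement about the primitive functions $X_\pm$. Differentiating, $\frac{d}{dv}\bigl[(1+\varepsilon_+)X_+(v)+(1-\varepsilon_-)X_-(v)\bigr]=(1+\varepsilon_+)x_+(v)+(1-\varepsilon_-)x_-(v)$, which is negative for $v>v_0$ by the previous step; hence $(1+\varepsilon_+)X_++(1-\varepsilon_-)X_-$ is eventually monotonically decreasing, so its $v\to\infty$ limit is either finite or $-\infty$, and in particular it is not $+\infty$. This is exactly the hypothesis of Corollary~\ref{corollary-case2-Type1-1} for these $\varepsilon_\pm$ (recall that the conclusion there is unaffected by the choice of integration constants in $X_\pm$), so the spacetime is of Type~1.

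There is no serious obstacle here; the one point requiring care is that the two constants $\varepsilon_\pm$ must be chosen jointly — both positive, with $\varepsilon_-<1$ so that $(1-\varepsilon_-)x_-<0$ — in such a way that $(1-\varepsilon_-)/(1+\varepsilon_+)$ still exceeds $R$. This is possible precisely because the inequality $R<1$ is strict, which is why the borderline case $R=1$ falls outside the scope of the corollary.
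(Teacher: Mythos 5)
Your proof is correct and follows essentially the same route as the paper: choose $\varepsilon_\pm>0$ with $(1-\varepsilon_-)/(1+\varepsilon_+)>R$, deduce $(1+\varepsilon_+)x_++(1-\varepsilon_-)x_-<0$ for large $v$, conclude that $(1+\varepsilon_+)X_++(1-\varepsilon_-)X_-$ cannot diverge to $+\infty$, and invoke Corollary~\ref{corollary-case2-Type1-1}. The only difference is that you spell out the monotonicity step explicitly, which the paper leaves implicit.
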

\begin{proof}
  Choosing sufficiently small $\varepsilon_\pm$, it is possible to make the inequality 
  $R<(1-\varepsilon_-)/(1+\varepsilon_+)$ satisfied. Then, the inequality $x_+/|x_-| < (1-\varepsilon_-)/(1+\varepsilon_+)$ holds for $v>v_0$ with a sufficiently large $v_0$ from Eq.~\eqref{eq-case2-R}. 
  This inequality is rewritten as
  \begin{align}
    (1+\varepsilon_+)x_+-(1-\varepsilon_-)|x_-|\,<\,0.
  \end{align}
  Recalling the definitions of $X_\pm$ given in Eq.~\eqref{Primitive-function-xpm},
  $(1+\varepsilon_+)X_++(1-\varepsilon_-)X_-$ never diverges to $+\infty$. 
  Then, the condition of Corollary~\ref{corollary-case2-Type1-1} is satisfied,
   indicating that this spacetime is of Type~1.
\end{proof}
\begin{prop}
  \label{prop-Case2-Type1-3}
   If $x_-(v)$ satisfies
  \begin{align}
    \label{eq-Case2-Type1-3}
    \lim_{v\rightarrow\infty}X_-(v)\,=\,-\infty,
  \end{align}
   then, the spacetime is of Type~1.
\end{prop}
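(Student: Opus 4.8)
The plan is to argue by contradiction, assuming the spacetime is of Type~2. By the structure established in this section (the chain $-1<U^{(+)}_0<U^{(+)}_\infty\le U^{(-)}_\infty<U^{(-)}_0<+1$ together with Lemmas~\ref{lemma-1}, \ref{lemma-case2-1}, \ref{lemma-case2-2}), Type~2 means $U^{(+)}_\infty<U^{(-)}_\infty$, so there is a nondegenerate family of outgoing null geodesics labeled by $U\in[U^{(+)}_\infty,U^{(-)}_\infty)$, each of which stays strictly between the inner and outer AHs for all $v\ge v_0$. I would pick two of them, $x_{\mathrm A}(v)<x_{\mathrm B}(v)$ (for definiteness $x_{\mathrm A}$ the one at $U=U^{(-)}_\infty$ and $x_{\mathrm B}$ the event horizon at $U=U^{(+)}_\infty$). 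By Lemma~\ref{lemma-case2-2} neither geodesic can ever lie in the strip $x_-<x<0$, so $x_{\mathrm A},x_{\mathrm B}>0$, and since $x_\pm(v)\to0$ both tend to $0$. Moreover, integrating $\tfrac{dx_{\mathrm A}}{dv}=-\tfrac{1}{2}h(v,x_{\mathrm A})(x_+-x_{\mathrm A})(x_{\mathrm A}-x_-)<0$ from $v_0$ to $\infty$ shows $\int_{v_0}^{\infty}(x_+-x_{\mathrm A})(x_{\mathrm A}-x_-)\,dv<\infty$, and since $x_{\mathrm A}-x_-\ge|x_-|$ and $\ge x_{\mathrm A}$, both $\int(x_+-x_{\mathrm A})|x_-|\,dv$ and $\int(x_+-x_{\mathrm A})x_{\mathrm A}\,dv$ converge; likewise with $x_{\mathrm B}$ in place of $x_{\mathrm A}$.

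Next I would track the position of $x_{\mathrm B}$ relative to $x_{\mathrm A}$ and to the outer AH. Set $\delta_+(v):=x_+(v)-x_{\mathrm A}(v)>0$ and $\rho(v):=(x_{\mathrm B}-x_{\mathrm A})/\delta_+\in(0,1)$. Subtracting the geodesic equations for $x_{\mathrm B}$ and $x_{\mathrm A}$ and applying the mean value theorem to $h$ in $x$ (exactly as in the proof of Proposition~\ref{prop-case2-Type1-2}) yields an identity of the form
\begin{equation*}
\frac{d}{dv}\ln\rho \;=\; -\frac{x_+'}{\delta_+} \;-\; \frac{h(v,x_{\mathrm B})}{2}\,\delta_+(1-\rho) \;+\; E(v),\qquad |E(v)|\le M\,\delta_+\bigl(x_+ +|x_-|\bigr)
\end{equation*}
for $v$ large, where $M$ bounds $|\partial_x h|$; by the previous step $\int^{\infty}|E|\,dv<\infty$. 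Using $-x_+'/\delta_+>-x_+'/x_+=-\tfrac{d}{dv}\ln x_+$ (valid since $\delta_+<x_+$ and $x_+'<0$) and $-\tfrac{1}{2}h(v,x_{\mathrm B})\delta_+(1-\rho)\ge-\tfrac{1}{2}h_c(1+\varepsilon)\delta_+$ for any $\varepsilon>0$ and $v$ large, integration gives
\begin{equation*}
\ln\rho(v)\;\ge\;\mathrm{const}\;-\;\ln x_+(v)\;-\;\frac{h_c(1+\varepsilon)}{2}\int_{v_0}^{v}\delta_+\,dv'.
\end{equation*}
Here the term $-\ln x_+(v)\to+\infty$, and the hypothesis $X_-(v)\to-\infty$, i.e. $\int^{\infty}|x_-|\,dv=\infty$, enters to show that the innermost trapped geodesic $x_{\mathrm A}$ cannot lag too far behind the outer AH: precisely, that $\tfrac{h_c}{2}\delta_+^2/(-x_+')$ has $\limsup<1$ (in the solvable models this is the elementary inequality $(x_+-x_{\mathrm A})^2<2(-x_+')/h_c$, which follows from the equation for $x_{\mathrm A}$ together with $x_-<0$ being non-negligible in the sense $\int|x_-|=\infty$). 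Granting this, $-x_+'/\delta_+-\tfrac{1}{2}h_c(1+\varepsilon)\delta_+$ dominates a positive multiple of $-x_+'/x_+$, so $\ln\rho(v)\to+\infty$; hence $\rho$ reaches $1$ at a finite $v$, meaning $x_{\mathrm B}$ crosses the outer AH — contradicting that it stays strictly between the AHs. Therefore the spacetime is of Type~1.

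The hard part is exactly the quantitative step: controlling $\delta_+=x_+-x_{\mathrm A}$ — equivalently, showing that the $U^{(-)}_\infty$-geodesic hugs the outer AH closely enough — and pinning down how $X_-\to-\infty$ forces it; this is where the condition is genuinely needed, since (as one checks in the solvable models) it is precisely when $\int|x_-|$ is finite that trapped geodesics can accumulate near the origin and produce a Type~2 spacetime. Everything else is routine: the mean-value-theorem linearization is the same computation as in Proposition~\ref{prop-case2-Type1-2}, the convergence of the weighted integrals follows from $x_{\mathrm A}(v)\to0$, and the final logarithmic estimate is elementary. (The degenerate case in which $x_{\mathrm B}$ coincides with $x_{\mathrm A}$ is excluded by the very definition of Type~2.)
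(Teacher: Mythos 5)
Your reduction to a differential inequality for $\ln\rho$ with $\rho=(x_{\mathrm B}-x_{\mathrm A})/\delta_+$, $\delta_+=x_+-x_{\mathrm A}$, is set up correctly, but the proof has a genuine gap at exactly the point you flag with ``Granting this'': the claim that the hypothesis $\int^\infty|x_-|\,dv=\infty$ forces $\limsup_{v\to\infty}\,h_c\delta_+^2/\bigl(2(-x_+')\bigr)<1$ (equivalently, that $-\ln x_+(v)$ dominates $\tfrac{h_c(1+\varepsilon)}{2}\int_{v_0}^v\delta_+\,dv'$) is never proved, and the only justification offered is a check ``in the solvable models,'' which assumes $h\equiv h_c$ and special $x_\pm$, whereas the proposition is model-independent. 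Nothing you derived earlier controls $\delta_+$ against $-x_+'$ or $x_+$: your integrability facts give $\int\delta_+\,(x_{\mathrm A}+|x_-|)\,dv<\infty$, which bounds neither $\int\delta_+\,dv$ nor $\int\delta_+x_+\,dv$ (the latter is also needed for your assertion $\int|E|\,dv<\infty$, since $|E|\le M\delta_+(x_++|x_-|)$ and $x_+\ge x_{\mathrm A}$). So the argument is circular at its crux: you assume the very statement that the innermost trapped geodesic hugs the outer AH, which is what the hypothesis is supposed to deliver, and without it the estimate $\ln\rho\gtrsim-\ln x_+-\tfrac{h_c(1+\varepsilon)}{2}\int\delta_+$ yields no contradiction.

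The paper avoids this difficulty by choosing a different comparison object: instead of measuring trapped geodesics against the outer AH, it measures them against the event horizon itself, setting $\xi=1-x/x_{\rm EH}$ for a geodesic $0<x<x_{\rm EH}$. After the same mean-value-theorem manipulation you use, the driving term of $\xi$ comes out proportional to $x_+|x_-|/x_{\rm EH}$ (positive because $x_-<0$), giving $d\xi/dv>\tfrac{h_c(1-\varepsilon)}{2}|x_-|\,\xi$ since $x_+>x_{\rm EH}$. Integrating and using $\xi<1$ yields $\exp\!\left[\tfrac{h_c(1-\varepsilon)}{2}X_-(v)\right]$ bounded below by a positive constant, which contradicts $X_-(v)\to-\infty$ directly; no control of $x_+'$, of $x_+-x_{\rm EH}$, or of any auxiliary integrals is required, which is precisely why the hypothesis enters cleanly there and only awkwardly in your scheme. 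To salvage your route you would have to prove the missing quantitative bound on $\delta_+$ in the general setting, which appears at least as hard as the proposition; otherwise I recommend switching the comparison from the outer AH to the event horizon.
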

\begin{proof}
  For the sake of contradiction, we assume the existence of outgoing null geodesics with non-zero measure that stay in the region $x_-<x<x_+$. Suppose $x=x_{\rm EH}(v)$ to be the position of the event horizon, 
  which is the outermost null geodesics staying in this region satisfying
 \begin{align}
  \frac{dx_{\mathrm{EH}}}{dv}\,=\,\frac{h(v,x_{\mathrm{EH}})}{2}(x_{\mathrm{EH}}-x_+)(x_{\mathrm{EH}}-x_-).
 \end{align}
 The outgoing null geodesics between the two horizons obey Eq.~\eqref{eq-null-x},
 and they must stay in the region $0<x<x_{\rm EH}$ from Lemma~\ref{lemma-case2-2}. 
We define
  \begin{equation}
  \xi \ = \ 1-\frac{x}{x_{\rm EH}},
  \end{equation}
  which takes the value $0<\xi<1$. After some algebra, the quantity $\xi$ is found to satisfy the equation
  \begin{align}
    \frac{d\xi}{dv}
    \,=\,
    \frac{1}{2x_{\rm EH}^2}
    \left[
     \left(h(v,x_{\rm EH})-h(v,x)\right)x(x_{\rm EH}-x_+)(x_{\rm EH}-x_-)
     +h(v,x)(xx_{\rm EH}-x_+x_-)x_{\rm EH}\xi
    \right].
  \end{align}
  From Rolle's theorem, there is a value $x_{\rm I}$ in the range
 $x\le x_{\rm I}\le x_{\rm EH}$ satisfying $h(v,x_{\rm EH})-h(v,x)=h_{,x}(v,x_{\rm I})x_{\rm EH}\xi$.
 Substituting this formula, we obtain
 \begin{equation}
 \frac{d\xi}{dv}\,=\,
 \frac{\xi}{2}\left[h(v,x)-h_{,x}(v,x_{\rm I})(x_+-x_{\rm EH})\right]x
 -\frac{\xi}{2}\frac{x_+x_-}{x_{\rm EH}}\left[h(v,x)-h_{,x}(v,x_{\rm I})x\left(1-\frac{x_{\rm EH}}{x_+}\right)\right].
 \end{equation}
 Since $x$, $x_+$, $x_{\rm EH}$ and $x_{\rm I}$ converge to zero (keeping $0<x_{\rm EH}/x_+<1$) 
 while $h_{,x}(v,x)$ remains finite in the limit $v\to\infty$,
 both of the formulas in the square brackets of the first and second terms on the right-hand side
 converge to $h_c$ in the limit $v\to\infty$. Therefore,  
 the positivity of the first term on the right-hand side is guaranteed, 
 and for some small positive constant $\varepsilon$,
  \begin{align}
   h(v,x)-h_{,x}(v,x_{\rm I})x\left(1-\frac{x_{\rm EH}}{x_+}\right)\,>\, (1-\varepsilon)h_c
  \end{align}
  holds in the range $v>v_0$ for a sufficiently large $v_0$. As a result, we obtain
  \begin{align}
    \frac{d\xi}{dv}&
    \,>\,\frac{h_c(1-\varepsilon)}{2}\left(\frac{x_+|x_-|}{x_{\mathrm{EH}}}\right)\xi
    \,>\,\frac{h_c(1-\varepsilon)}{2}|x_-|\xi,\nonumber
  \end{align}
  where we used $x_+>x_{\rm EH}>0$ in the second inequality. 
  Dividing both sides by $\xi$ and integrating the resultant inequality, we have
  \begin{align}
    \label{eq-xi-integrate}
    \xi&\,>\,\xi_0\exp{\left[-\frac{h_c(1-\varepsilon)}{2}\left(X_-(v)-X_-(v_0)\right)\right]},
  \end{align}
  where $\xi _0$ is defined as $\xi_0\coloneqq \xi(v_0)$. Since  $\xi$ satisfies $\xi< 1$,
  the following inequality holds:
  \begin{align}
    \exp{\left[\frac{h_c(1-\varepsilon)}{2}X_-(v)\right]}\,>\,\xi_0\exp{\left[\frac{h_c(1-\varepsilon)}{2}X_-(v_0)\right]}.
  \end{align}
  However, the left-hand side converges to zero from  Eq.~\eqref{eq-Case2-Type1-3}. 
  This is a contradiction. Since this contradiction is caused by the assumption of the existence of
  outgoing null geodesics between the two horizons with nonzero measure,
  there can exist a unique outgoing geodesic in that region. 
  Consequently, the spacetime is of Type~1.
\end{proof}

Next, we prove a proposition that gives a sufficient condition for a given spacetime to be of Type~2.

\begin{prop}
  \label{prop-Case2-Type2-1}
  Suppose  $x_\pm$  satisfy
  \begin{align}
    \lim_{v\rightarrow\infty} \frac{x_-(v)}{x_+(v)}=0.
    \label{limit-of-x-dbx+}
  \end{align}
  If there exist two positive parameters  $\alpha,\beta$ satisfying $0<\alpha<1 $ and $0<\beta$ and two positive parameters $\varepsilon_{\alpha},\varepsilon_{\beta}$ for which the two conditions
  \begin{subequations} 
  \begin{align}
    \label{eq-Case2-Type2-1-1}
    \frac{dx_-/dv}{(x_++\beta x_-)x_-}&\,<\, -\frac{h_c}{2}(1+\varepsilon_{\beta})\frac{1+\beta}{\beta},
    \\
        \label{eq-Case2-Type2-1-2}
    \frac{dx_+/dv}{(\alpha x_+-x_-)x_+}&\,>\,-\frac{h_c}{2} (1-\varepsilon_{\alpha})\frac{1-\alpha}{\alpha}
  \end{align}
  \end{subequations}
  hold for $v>v_0$ for some $v_0$, then the spacetime is of Type~2.
\end{prop}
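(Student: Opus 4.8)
The plan is to construct explicitly a nonzero-measure family of outgoing null geodesics that stay between the two AHs, never meeting either one. By the discussion preceding Theorem~\ref{thm-case2}, a Type~1 spacetime contains only a unique such geodesic (the event horizon), so by the completeness dichotomy of Theorem~\ref{thm-case2} the existence of a nonzero-measure family of them forces the spacetime to be of Type~2 in the sense of Definition~\ref{Def:Case2-Type1-Type2}. The two barriers I would use are the curves $x=\alpha x_+(v)$ (an upper barrier) and $x=-\beta x_-(v)$ (a lower barrier); the latter is positive because $x_-<0$, so it automatically lies above the inner AH, while the hypothesis $\lim_{v\to\infty}x_-/x_+=0$ guarantees $x_-(v)<-\beta x_-(v)<\alpha x_+(v)<x_+(v)$ for all sufficiently large $v$, so both barriers lie strictly between the two AHs and both tend to $0$.

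First I would turn the two hypotheses into slope comparisons along these curves. Along $x=\alpha x_+(v)$ the right-hand side of \eqref{eq-null-x} equals $-\tfrac12 h(v,\alpha x_+)(1-\alpha)x_+(\alpha x_+-x_-)$; multiplying \eqref{eq-Case2-Type2-1-2} by the factor $\alpha(\alpha x_+-x_-)x_+$, which is positive for large $v$, gives $\alpha x_+'>-\tfrac{h_c}{2}(1-\varepsilon_\alpha)(1-\alpha)(\alpha x_+-x_-)x_+$, and since $h(v,\alpha x_+(v))\to h_c$ one has $h(v,\alpha x_+)>(1-\varepsilon_\alpha)h_c$ for large $v$; combining, $\tfrac{d}{dv}(\alpha x_+)$ strictly exceeds the outgoing-null-geodesic slope everywhere on this curve. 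Symmetrically, along $x=-\beta x_-(v)$ the right-hand side of \eqref{eq-null-x} equals $\tfrac12 h(v,-\beta x_-)(1+\beta)x_-(x_++\beta x_-)$; multiplying \eqref{eq-Case2-Type2-1-1} by its denominator $(x_++\beta x_-)x_-$, which is negative for large $v$ (here the assumption $x_-/x_+\to0$ is used again, to ensure $x_++\beta x_->0$), reverses the inequality to $x_-'>-\tfrac{h_c}{2}(1+\varepsilon_\beta)\tfrac{1+\beta}{\beta}(x_++\beta x_-)x_-$, and together with $h(v,-\beta x_-)<(1+\varepsilon_\beta)h_c$ for large $v$ this shows that $\tfrac{d}{dv}(-\beta x_-)$ is strictly smaller than the outgoing-null-geodesic slope everywhere on that curve. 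I would then fix one value $v_1\ge v_0$ beyond which all of these asymptotic statements hold and moreover $-\beta x_-(v_1)<\alpha x_+(v_1)$.

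The two slope comparisons are precisely the input of the ODE barrier argument already used repeatedly above: a solution of \eqref{eq-null-x} lying strictly below $\alpha x_+(v)$ at $v=v_1$ can never afterwards touch that curve, since at a first contact point $v_2$ the difference $\alpha x_+-x$ would have nonpositive derivative, contradicting that the curve's slope strictly exceeds the geodesic's slope at $x=\alpha x_+(v_2)$; likewise a solution lying strictly above $-\beta x_-(v)$ at $v_1$ can never afterwards touch that curve. Hence every outgoing null geodesic passing through a point $(v_1,x_1)$ with $x_1$ in the nonempty interval $\bigl(-\beta x_-(v_1),\,\alpha x_+(v_1)\bigr)$ stays in the strip $-\beta x_-(v)<x(v)<\alpha x_+(v)$ for all $v\ge v_1$, so it never meets either AH and, squeezed between two functions that tend to $0$, satisfies $x(v)\to0$; by Lemma~\ref{lemma-1} it does not reach $\mathscr{I}^+$. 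Since distinct geodesics carry distinct values of $U$, this interval of initial data yields a nonzero-measure set of outgoing null geodesics staying between the two AHs, and the spacetime is of Type~2.

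The main obstacle, beyond the sign bookkeeping in the two slope comparisons, is coordinating the various thresholds past which the needed asymptotic statements take effect: the $v_0$ of the hypotheses, the values of $v$ beyond which $h$ evaluated along each barrier is within the prescribed factor of $h_c$, the values beyond which $x_++\beta x_->0$ and $-\beta x_-<\alpha x_+<x_+$, and the requirement that the initial interval at $v_1$ be nonempty --- all must be folded into a single $v_1$, and it is exactly the assumption $\lim_{v\to\infty}x_-/x_+=0$ that makes this possible and that fixes the signs of the denominators appearing in \eqref{eq-Case2-Type2-1-1}--\eqref{eq-Case2-Type2-1-2}. Once $v_1$ is fixed, the rest is the standard comparison principle for first-order ODEs.
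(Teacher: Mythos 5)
Your proposal is correct and follows essentially the same route as the paper's own proof: it uses the same two barrier curves $x=\alpha x_+(v)$ and $x=-\beta x_-(v)$, converts hypotheses \eqref{eq-Case2-Type2-1-1}--\eqref{eq-Case2-Type2-1-2} together with $h\to h_c$ into the same slope comparisons along these curves, and concludes that geodesics launched between them are trapped there, giving a nonzero-measure family between the two AHs and hence Type~2. The sign bookkeeping (in particular the negativity of $(x_++\beta x_-)x_-$ and the role of $x_-/x_+\to 0$) matches the paper's argument, and your extra care about consolidating the thresholds into a single $v_1$ and the first-contact comparison step only makes explicit what the paper leaves implicit.
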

\begin{proof}
  Since
  \begin{align}
    \lim_{v\rightarrow\infty}\frac{\beta x_-(v)}{\alpha x_+(v)}=0
  \end{align}
  holds from the assumption, $\alpha x_+(v)>\beta |x_-(v)|$ holds for $v>v_0$ for a sufficiently large $v_0$. 
  Denoting the curves $x(v)=x_\alpha(v):=\alpha x_+(v)$ and $x(v)=x_\beta(v):=-\beta x_-(v)$ as $C_{\alpha}$ and $C_{\beta}$, respectively, we prove that outgoing null geodesics in the region between $C_{\alpha}$ and $C_{\beta}$ remain in this region by showing that the outgoing null geodesics cross $C_\beta$ just from the inside region to the outside region, while they cross $C_\beta$ just from the outside region to the inside region. 
  At each point on $C_\beta$, the derivative $dx/dv$ of an outgoing null geodesic that cross $C_\beta$ at that point is
  \begin{align}
    \left.\frac{dx}{dv}\right|_{C_{\beta}}=\frac{h(v,-\beta x_-(v))}{2}(1+\beta)(x_++\beta x_-)x_-.
  \end{align}
  Since $\lim_{v\rightarrow\infty}h(v,-\beta x_-(v))=h_c$ holds, for $v>v_0$, the inequality $h(v,-\beta x_-)<(1+\varepsilon_{\beta})h_c$ is satisfied by taking a sufficiently large $v_0$. Then, we obtain
  \begin{align}
    \label{eq-derivative-on-C_beta}
    \left.\frac{dx}{dv}\right|_{C_{\beta}}\,>\,\frac{h_c}{2}(1+\varepsilon_\beta)(1+\beta)(x_++\beta x_-)x_-
    \,>\,-\beta \frac{dx_-}{dv},
  \end{align}
  where we used the inequality of Eq.~\eqref{eq-Case2-Type2-1-1} in the second inequality.  
  Since the derivative of the curve $C_\beta$ is given by $dx_\beta/dv = -\beta dx_-/dv$, we have
    \begin{align}
      \frac{dx_\beta}{dv}&\,<\,\left.\frac{dx}{dv}\right|_{C_{\beta}}.
  \end{align}
  Then outgoing null geodesics intersect $C_{\beta}$ just from the inside region to the outside region, i.e., outgoing null geodesics in the region between $C_\alpha$ and $C_\beta$ never intersect $C_\beta$. 
  Similarly, at each point on  $C_\alpha$, an outgoing null geodesic crossing $C_\alpha$ at that point satisfies
  \begin{align}
    \left.\frac{dx}{dv}\right|_{C_{\alpha}}=-\frac{h(v,\alpha x_+(v))}{2}(1-\alpha)(\alpha x_+-x_-)x_+.
  \end{align}
  Since $\lim_{v\rightarrow \infty} h(v,\alpha x_+(v))=h_c$ holds, for $v>v_0$, the inequality $h(v,\alpha x_+)>(1-\varepsilon_{\alpha})h_c$ is satisfied by taking a sufficiently large $v_0$.  Then we obtain 
  \begin{align}
    \label{eq-derivative-on-C_alpha}
    \left.\frac{dx}{dv}\right|_{C_{\alpha}}\,<\,-\frac{h_c}{2}(1-\varepsilon_\alpha)(1-\alpha)(\alpha x_+-x_-)x_+
    \,<\, \alpha \frac{dx_+}{dv},
  \end{align}
  where we used  the inequality of Eq.~\eqref{eq-Case2-Type2-1-2} in the second inequality.
  Since the derivative of the curve $C_\alpha$ is given by $dx_\alpha/dv = \alpha dx_+/dv$, we have
  \begin{align}
    \frac{dx_\alpha}{dv}&>\left.\frac{dx}{dv}\right|_{C_\alpha}.
  \end{align}
  Then outgoing null geodesics intersect $C_{\alpha}$ just from the outside region to the inside region, i.e., outgoing null geodesics inside $C_{\alpha}$ never intersect it. Therefore, outgoing null geodesics in the region between $C_{\alpha}$ and $C_{\beta}$ remain in this region. This means the existence of outgoing null geodesics
  staying between the two AHs with nonzero measure, i.e., 
   the spacetime is of Type~2.
 
\end{proof}

This proposition leads to the following corollary:
\begin{corollary}
  \label{corollary-case2-Type2-1}
  We assume that $x_\pm$ satisfy Eq.~\eqref{limit-of-x-dbx+}. 
  If there exist two positive parameters $\alpha$ and $\beta$ satisfying $0<\alpha<1$ and $0<\beta$ for which 
  the two conditions
  \begin{subequations}
  \begin{align}
     \label{eq-case2-Type2-2-1}
    \lim_{v\rightarrow \infty}\frac{dx_+/dv}{x_+^2}&\,>\,-\frac{h_c}{2}(1-\alpha),\\
      \label{eq-case2-Type2-2-2}
  \lim_{v\rightarrow\infty}\frac{dx_-/dv}{x_+x_-}&\,<\,-\frac{h_c}{2}\frac{1+\beta}{\beta}
    \end{align}
    \end{subequations}
  hold, then the spacetime is of Type~2. Here, the limit of the left-hand side of Eq.~\eqref{eq-case2-Type2-2-1} is supposed to have a zero or finite negative value, while the limit of the left-hand side of Eq.~\eqref{eq-case2-Type2-2-2} is supposed to be $-\infty$ or to have a finite negative value.
\end{corollary}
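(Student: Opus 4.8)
The plan is to reduce the corollary to Proposition~\ref{prop-Case2-Type2-1}. That proposition already assumes Eq.~\eqref{limit-of-x-dbx+}, which is precisely the standing hypothesis of the corollary, so it suffices to show that Eqs.~\eqref{eq-case2-Type2-2-1}--\eqref{eq-case2-Type2-2-2} force the two inequalities Eqs.~\eqref{eq-Case2-Type2-1-1}--\eqref{eq-Case2-Type2-1-2} to hold, for the \emph{same} $\alpha,\beta$ and for some suitably small $\varepsilon_\alpha,\varepsilon_\beta>0$, on a half-line $v>v_0$. Recall that in Case~2 one has $x_+(v)>0>x_-(v)$ for all finite $v$, since $r_+$ decreases and $r_-$ increases to $r_c$.

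First I would rewrite the denominators appearing in Proposition~\ref{prop-Case2-Type2-1} in terms of $x_+^2$ and $x_+x_-$:
\begin{equation}
 (\alpha x_+-x_-)x_+ \,=\, \alpha x_+^2\Bigl(1+\tfrac{|x_-|}{\alpha x_+}\Bigr),
 \qquad
 (x_++\beta x_-)x_- \,=\, x_+x_-\Bigl(1-\tfrac{\beta|x_-|}{x_+}\Bigr).
\end{equation}
By Eq.~\eqref{limit-of-x-dbx+} the bracketed factors both tend to $1$ and are eventually positive, so
\begin{equation}
 \lim_{v\to\infty}\frac{dx_+/dv}{(\alpha x_+-x_-)x_+}\,=\,\frac{L_+}{\alpha},
 \qquad
 \lim_{v\to\infty}\frac{dx_-/dv}{(x_++\beta x_-)x_-}\,=\,L_-,
\end{equation}
where $L_+\coloneqq\lim_{v\to\infty}(dx_+/dv)/x_+^2$ (zero or a finite negative number by hypothesis) and $L_-\coloneqq\lim_{v\to\infty}(dx_-/dv)/(x_+x_-)$ ($-\infty$ or a finite negative number; if $L_-=-\infty$ the corresponding limit on the left also diverges to $-\infty$).

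Next I would use the strictness of the assumed limit inequalities, which read $L_+>-\tfrac{h_c}{2}(1-\alpha)$ and $L_-<-\tfrac{h_c}{2}\tfrac{1+\beta}{\beta}$. Since $-\tfrac{h_c}{2}(1-\varepsilon_\alpha)(1-\alpha)\to-\tfrac{h_c}{2}(1-\alpha)$ as $\varepsilon_\alpha\to0^+$ and $-\tfrac{h_c}{2}(1+\varepsilon_\beta)\tfrac{1+\beta}{\beta}\to-\tfrac{h_c}{2}\tfrac{1+\beta}{\beta}$ as $\varepsilon_\beta\to0^+$, while the two gaps $L_++\tfrac{h_c}{2}(1-\alpha)$ and $-\tfrac{h_c}{2}\tfrac{1+\beta}{\beta}-L_-$ are fixed and strictly positive (the second possibly infinite), I can fix $\varepsilon_\alpha,\varepsilon_\beta>0$ small enough that
\begin{equation}
 \frac{L_+}{\alpha}\,>\,-\frac{h_c}{2}(1-\varepsilon_\alpha)\frac{1-\alpha}{\alpha},
 \qquad
 L_-\,<\,-\frac{h_c}{2}(1+\varepsilon_\beta)\frac{1+\beta}{\beta}.
\end{equation}
Comparing these strict inequalities with the two limits computed above gives a $v_0$ beyond which Eqs.~\eqref{eq-Case2-Type2-1-2} and \eqref{eq-Case2-Type2-1-1} both hold; Proposition~\ref{prop-Case2-Type2-1} then applies and yields that the spacetime is of Type~2.

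The only delicate point is keeping track of inequality directions: inserting the factors $(1-\varepsilon_\alpha)$ and $(1+\varepsilon_\beta)$ makes Eqs.~\eqref{eq-Case2-Type2-1-1}--\eqref{eq-Case2-Type2-1-2} genuinely \emph{stronger} than the hypotheses Eqs.~\eqref{eq-case2-Type2-2-1}--\eqref{eq-case2-Type2-2-2}, so one must confirm that the strictness of the latter leaves enough room to absorb these factors; it does, precisely because $\varepsilon_\alpha,\varepsilon_\beta$ may be chosen arbitrarily small while the gaps above stay fixed. The remaining ingredients required by Proposition~\ref{prop-Case2-Type2-1}---positivity of the comparison curves $C_\alpha,C_\beta$ and their ordering $\beta|x_-|<\alpha x_+$ for large $v$---follow from Eq.~\eqref{limit-of-x-dbx+} and are already established inside its proof.
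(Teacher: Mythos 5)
Your proposal is correct and follows essentially the same route as the paper: rewrite the hypotheses as statements about $\lim_{v\to\infty}\frac{dx_+/dv}{(\alpha x_+-x_-)x_+}$ and $\lim_{v\to\infty}\frac{dx_-/dv}{(x_++\beta x_-)x_-}$ using Eq.~\eqref{limit-of-x-dbx+}, absorb the factors $(1-\varepsilon_\alpha)$, $(1+\varepsilon_\beta)$ into the strict gaps, and invoke Proposition~\ref{prop-Case2-Type2-1}. Your handling of the slack is slightly more streamlined (the paper routes it through auxiliary parameters $\varepsilon_\alpha'$, $\varepsilon_\beta'$), but the argument is the same.
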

 
\begin{proof}
  Since $\lim_{v\rightarrow\infty}{x_-}/{x_+}=0$, the inequality of Eq.~\eqref{eq-case2-Type2-2-1} can be rewritten as 
  \begin{align}
    \lim_{v\rightarrow\infty}\frac{dx_+/dv}{(\alpha x_+-x_-)x_+}\eqqcolon A_{\alpha}\,>\,-\frac{h_c}{2}\frac{1-\alpha}{\alpha}.
    \label{Limit_Aalpha}
  \end{align}
  From this condition, for an arbitrarily small positive parameter $\varepsilon_\alpha^\prime$,
  \begin{align}
    \label{eq-case2-Type2-varepsilon_1}
    A_{\alpha}(1+\varepsilon_{\alpha}')\,<\,\frac{dx_+/dv}{(\alpha x_+-x_-)x_+}
  \end{align}
  holds for sufficiently large $v$. On the other hand, from the inequality of Eq.~\eqref{Limit_Aalpha}, it is possible to choose sufficiently small two parameters $\varepsilon_{\alpha}$ and $\varepsilon_{\alpha}'$ which satisfy
  \begin{align}
    \label{eq-cases2-Type2-varepsilon_2}
    -\frac{h_c}{2}(1-\varepsilon_{\alpha})\frac{1-\alpha}{\alpha}<A_\alpha(1+\varepsilon_\alpha').
  \end{align}
 Combining the two inequalities of Eqs.~\eqref{eq-case2-Type2-varepsilon_1} and \eqref{eq-cases2-Type2-varepsilon_2},
 we find the inequality which is exactly identical to Eq.~\eqref{eq-Case2-Type2-1-2}. 

  Similarly, the inequality of Eq.~\eqref{eq-case2-Type2-2-2} can be rewritten as 
  \begin{align}
    \lim_{v\rightarrow\infty}\frac{dx_-/dv}{(x_++\beta x_-)x_-}\eqqcolon A_{\beta}\,<\,-\frac{h_c}{2}\frac{1+\beta}{\beta}.
    \label{Limit_Abeta}
  \end{align}
  In the case of $A_{\beta}=-\infty$, the inequality of Eq.~\eqref{eq-Case2-Type2-1-1} holds trivially. Hereafter, we only consider the situation where $A_{\beta}$ is finite. From this condition, for an arbitrarily small positive parameter $\varepsilon_{\beta}'$,
  \begin{align}
    \label{eq-case2-Type2-varepsilon-3}
    \frac{dx_-/dv}{(x_++\beta x_-)x_-}\,<\,A_\beta (1-\varepsilon_\beta')
  \end{align}
  holds for sufficiently large $v$. On the other hand, from the inequality of Eq.~\eqref{Limit_Abeta}, it is possible to choose sufficiently small two parameters $\varepsilon_{\beta}$ and $\varepsilon_{\beta}'$ which satisfy
  \begin{align}
    \label{eq-case2-Type2-varepsilon-4}
    A_{\beta}(1-\varepsilon_{\beta}')<-\frac{h_c}{2}(1+\varepsilon_\beta)\frac{1+\beta}{\beta}.
  \end{align}
  Combining the two inequalities of Eqs.~\eqref{eq-case2-Type2-varepsilon-3} and \eqref{eq-case2-Type2-varepsilon-4},
  we find the inequality which is exactly identical to Eq.~\eqref{eq-Case2-Type2-1-1}. Consequently, the conditions of Proposition~\ref{prop-Case2-Type2-1} are satisfied.
\end{proof}

\subsection{Some examples of Case~2 spacetimes}

\begin{table}[tbh]
  \centering
  \caption{Summary of examples of Case~2 spacetimes that realize Type~1 and Type~2.} 
  \scalebox{0.8}{
  \begin{tabular}{|c|c|} \hline
    Type~1 &Type~2\\\hline\hline
         $x_+=\frac{a_+}{v^n},x_-=\frac{a_-}{v^k}\quad \left(0<k<n\right)$
    \ & 
    $x_+=\frac{a_+}{v^n}, x_-=a_-\exp{\left(-\frac{v}{L}\right)}\quad (0<n<1)$ \\ \hline
      $x_+=\frac{a_+}{v}, x_-=\frac{a_-}{v^k}\quad \left(a_+h_c<2<2k\right)$
      &
      $x_+=\frac{a_+}{v},x_-=\frac{a_-}{v^k}\quad \left(2<a_+h_c<2k\right)$\\
      \hline
      $x_+=\frac{a_+}{v^n},x_-=\frac{a_-}{v^k} \quad(n>1,k>1)$
      &\ \\\hline
      $x_+=(\text{arbitrary}), x_-=\frac{a_-}{v^k}\quad (0<k\leq 1)$&\ \\
    \hline
  \end{tabular} 
  }
  \label{table-Case2}
\end{table}

 By applying the propositions \ref{prop-case2-Type1-2}, \ref{prop-Case2-Type1-3},  \ref{prop-Case2-Type2-1}
 and their corollaries proved above, we can identify the 
 functional forms of $x_\pm(v)$ that lead to Type~1 and Type~2, respectively.
 Focusing mainly on the power-law functions $x_+(v) = a_+/v^n$ and $x_-(v)=a_-/v^k$
 (with the exception of the exponential function of the first line of Type~2), we have obtained the results as 
 summarized in Table~\ref{table-Case2}.
 The proofs are presented in Appendix~\ref{Examples:case2}.
 From this table, it is found that the Type~2 spacetimes are realized only when
 $x_+(v)$ decays relatively slowly and $x_-(v)$ decays rapidly.
 This would be a natural result as understood by comparing with the result of Case~1: 
 Since the Type~2 spacetimes in Case~2
 have a similar structure to the Type~2 spacetimes in Case~1,
 the Type~2 spacetimes in Case~2 are expected to be realized
 only when $x_+(v)$ decays slowly and the inner AH
 is sufficiently close to a null surface.
 The spacetimes that do not satisfy these conditions are expected to belong to Type~1. 
 Unfortunately, we have missed to include the case
 of $x_+=a_+/v^n$ and $x_-=a_-/v^k$ with $0<n<1<k$, and the case of 
 $x_+=a_+/v$ and $x_-=a_-/v^k$ with $2<2k<a_+h_c$ in this table,
 because Propositions \ref{prop-case2-Type1-2} and \ref{prop-Case2-Type1-3}
 cannot cover these situations. 
 Note that these situations do not satisfy the sufficient conditions to be
 of Type~2 given in Proposition \ref{prop-Case2-Type2-1}.
 In other words, they satisfy a necessary condition to be of Type~1.
 We conjecture that these situations would belong to Type~1, 
 although the explicit proof is left as a future issue.

\newpage

%
%
\section{\label{sec-case3}Case~3: Timelike inner AH}
In this section, we study the global structure of Case~3 spacetimes (the spacetime whose inner horizon is timelike) in a similar way to Cases 1 and 2. 
Case~3 spacetimes generally satisfy the following proposition:
\begin{lemma}
  \label{lemma-case3}
  Outgoing null geodesics initially located between the inner and outer AHs or on the inner AH never intersect the inner AH at any later time.
\end{lemma}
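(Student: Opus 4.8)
The plan is to argue by contradiction, tracking the signed distance $\delta(v)\coloneqq x(v)-x_-(v)$ of an outgoing null geodesic $x(v)$ from the inner AH, and exploiting the mismatch between the slope of the geodesic — which vanishes wherever it meets the inner AH, since the right-hand side of Eq.~\eqref{eq-null-x} vanishes at $x=x_-(v)$ — and the slope $dx_-/dv<0$ of the inner AH in Case~3. Note that, in contrast to Case~1, the curve $x=x_-(v)$ is not itself an integral curve of the geodesic equation, so the ``uniqueness of solutions of first-order differential equations'' argument used in Lemmas~\ref{lemma-1} and \ref{lemma-case2-1} is unavailable here, and the whole content must come from a one-sided derivative comparison.

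The key step is to establish the crossing direction. Suppose an outgoing null geodesic meets the inner AH at some advanced time $v_1$, i.e.\ $\delta(v_1)=0$. Since the right-hand side of Eq.~\eqref{eq-null-x} vanishes at $x=x_-$, we get
\begin{equation*}
\delta'(v_1)=\left.\frac{dx}{dv}\right|_{v_1}-\left.\frac{dx_-}{dv}\right|_{v_1}=-\left.\frac{dx_-}{dv}\right|_{v_1}>0 ,
\end{equation*}
the last inequality following from $dx_-/dv<0$ throughout Case~3. Thus $\delta$ is strictly increasing through every such point; equivalently, an outgoing null geodesic can cross the inner AH only from the inside region $x<x_-(v)$ into the region $x>x_-(v)$ between the two AHs, and never in the opposite sense.

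Next I would close the argument. Take a geodesic with $x_-(v_0)\le x(v_0)<x_+(v_0)$. If $\delta(v_0)=0$ (the geodesic is initially on the inner AH), the slope computation above gives $\delta(v)>0$ for $v$ slightly larger than $v_0$, so after relabelling $v_0$ we may assume $\delta(v_0)>0$. Put $v_1\coloneqq\inf\{v>v_0:\delta(v)=0\}$ and suppose it is finite. By continuity of $\delta$ we have $\delta(v_1)=0$ while $\delta(v)>0$ on $[v_0,v_1)$, so the left derivative satisfies $\delta'(v_1)=\lim_{v\to v_1^-}\bigl(\delta(v_1)-\delta(v)\bigr)/(v_1-v)\le 0$, contradicting $\delta'(v_1)>0$. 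Hence no finite $v_1$ exists and the geodesic never intersects the inner AH.

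The step that needs the most care is precisely the point flagged at the outset: because the inner AH is not a geodesic in Case~3, one cannot lean on ODE uniqueness and must instead do the first-crossing-time bookkeeping by hand — continuity of $\delta$, the relabelling that makes $\delta(v_0)>0$, and the sign of the one-sided derivative at $v_1$. I would also remark that ``never intersect'' still permits the geodesic to asymptote to the inner AH as $v\to\infty$ without ever touching it; the present argument neither establishes nor needs to exclude this, and the asymptotic behaviour will presumably be treated separately in the lemmas that follow.
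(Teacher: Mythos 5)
Your proof is correct and takes essentially the same route as the paper: the paper omits the proof of this lemma, referring to the former part of the proof of Lemma~\ref{lemma-1}, whose content is precisely your slope comparison at a contact point ($dx/dv=0$ for the geodesic versus $dx_-/dv<0$ for the inner AH in Case~3), showing that crossings of the inner AH can only go from the inside region to the region between the AHs, followed by a first-contact contradiction. Your explicit bookkeeping with $\delta(v)=x(v)-x_-(v)$ merely spells out in detail what the paper compresses into its brief appeal to the uniqueness/crossing-direction argument of Lemma~\ref{lemma-1}.
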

\begin{proof}
  The proof is omitted here since the procedure of the proof is the same as that presented in 
  the former part of the proof of Lemma \ref{lemma-1}.
\end{proof}
It is also worth presenting the following lemma:
 \begin{lemma}
  \label{lemma-case3-2}
  Outgoing null geodesics in the region $0<x<x_-$ eventually cross the inner AH at later times. 
\end{lemma}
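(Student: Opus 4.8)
The plan is to mimic the short argument used for Lemma~\ref{lemma-case2-2}. First I would record the relevant feature of Case~3: since the inner AH is timelike, $x_-(v)$ is a \emph{positive}, monotonically decreasing function of $v$ with $\lim_{v\to\infty}x_-(v)=0$ (positivity is what makes the region $0<x<x_-$ nonempty, and it is precisely what distinguishes this situation from the Case~2 region $x_-<x<0$ treated in Lemma~\ref{lemma-case2-2}). Next, for an outgoing null geodesic $x=x(v)$ lying in $0<x<x_-(v)$, I would read off the sign of the right-hand side of Eq.~\eqref{eq-null-x}: because $0<x<x_-<x_+$, both factors $(x-x_+)$ and $(x-x_-)$ are negative, their product is positive, and $h(v,x)>0$, so $dx/dv>0$ throughout this region. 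Hence $x(v)$ is strictly increasing while it remains in the region, and in particular $x(v)\ge x(v_0)>0$ for all $v\ge v_0$; the geodesic cannot exit through $x=0$ since it is being pushed away from the center.

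Then I would argue by contradiction. If the geodesic never crossed the inner AH, it would stay in $0<x<x_-(v)$ for all $v>v_0$, forcing $x(v_0)\le x(v)<x_-(v)$ for all such $v$. Letting $v\to\infty$ and using $x_-(v)\to 0$ gives $x(v_0)\le 0$, contradicting $x(v_0)>0$. Therefore the geodesic must reach $x=x_-(v)$ at some finite advanced time, i.e., it crosses the inner AH, which is the assertion of the lemma.

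I expect no genuine obstacle here: the content is essentially the observation that an increasing curve bounded above by a function tending to $0$ cannot stay positive, so the write-up should be a one- or two-line remark of the same flavor as the proof of Lemma~\ref{lemma-case2-2}. The only points that deserve an explicit word are (i) noting $x_-(v)>0$ in Case~3, and (ii) confirming the sign of $dx/dv$ directly from the factorized geodesic equation; both are immediate.
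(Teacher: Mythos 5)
Your argument is correct and is essentially the paper's own proof: the paper likewise observes that $dx/dv>0$ in the region $0<x<x_-(v)$ while $x_-(v)$ asymptotes to zero, and concludes the crossing is inevitable. Your write-up merely makes explicit the sign check of the factorized right-hand side and the contradiction step that the paper compresses into the word ``obvious.''
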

 \begin{proof}
 This is obvious because $dx/dv>0$ in this region, while $x_-$  asymptotes to zero. 
\end{proof}
 
Let us recall the three sets, $J$ and $I^{(\pm)}$ introduced in Sec.~\ref{Subsec:compactified-retarded}
in order to discuss the properties of Penrose diagrams in Case~3.
In Case~3, we have shown $I^{(-)}=\left[U^{(-)}_0,U^{(-)}_\infty \right)$ and  $I^{(+)}=\left[U^{(+)}_0,\, U^{(+)}_\infty\right)$.
Unlikely to Lemma~\ref{lemma-case1} in Case~1 and Lemma~\ref{lemma-case2-1} in Case~2,
Lemma~\ref{lemma-case3} does not impose any restriction on the positions of the intervals $I^{(\pm)}$,
because the outgoing null geodesics that cross the inner AH have the possibility of
both reaching and not reaching the outer AH.
One of the restrictions on the quantities $U^{(\pm)}_0$ and $U^{(\pm)}_\infty$ 
is 
\begin{equation}
U^{(+)}_\infty \,\le\, U^{(-)}_\infty \,\le\, +1,
\end{equation}
which comes from Eq.~\eqref{Upm-constraint}. The other constraints are trivial ones,
$U^{(+)}_0\,<\, U^{(+)}_\infty$, $U^{(-)}_0\,<\, U^{(-)}_\infty$, and
$-1\,<\,U^{(+)}_0\,<U^{(-)}_0$ which also comes from  Eq.~\eqref{Upm-constraint}.
In order to effectively classify the spacetimes under these constraints, first we define Class~1 and Class~2 of the spacetimes in Case~3 as follows:
\begin{dfn}
The spacetime of Case~3 is defined to be of Class~1 if  $U^{(+)}_\infty= U^{(-)}_\infty$ holds,
while the spacetime of Case~3 is defined to belong to Class~2 if  $U^{(+)}_\infty< U^{(-)}_\infty$ holds.
\label{Def:Case3-Class1-Class2}
\end{dfn}
Physically, in Class~1 spacetimes, all outgoing null geodesics emitted from the inner AH
reach the outer AH, while in Class~2 spacetimes, some of them do not arrive at
the outer AH.
Next, we define Class~A and Class~B for the spacetimes in Case~3 as follows:
\begin{dfn}
The spacetime of Case~3 is defined to be of Class~A if  $U^{(-)}_\infty=+1$ holds,
while the spacetime of Case~3 is defined to belong to Class~B if  $U^{(-)}_\infty<+1$ holds.
\label{Def:Case3-ClassA-ClassB}
\end{dfn}
Physically, in Class~A spacetimes, all outgoing null geodesics in the inside region of the inner AH
reach the inner AH, while in Class~B spacetimes, some of them do not arrive at
the inner AH. Finally, we define the Type~1-A, Type~1-B, Type~2-A, and Type~2-B of the
Case~3 spacetimes as follows:
\begin{dfn}
The spacetime of Case~3 is defined to be of Type~1-A if it belongs to
Class~1 and Class~A at the same time. The Type~1-B, Type~2-A, and Type~2-B
are defined in the same way.
\label{Def:Case3-Types}
\end{dfn}
To summarize, the Type~1-A, Type~1-B, Type~2-A, and Type~2-B spacetimes
satisfy the following conditions:
\begin{equation}
\begin{cases}
\textrm{Type~1-A:} & -1\,<\,U^{(+)}_0\,<\,U^{(-)}_0\,<\,U^{(+)}_\infty \, = \, U^{(-)}_\infty\,= \, +1;\\
\textrm{Type~1-B:} & -1\,<\,U^{(+)}_0\,<\,U^{(-)}_0\,<\,U^{(+)}_\infty \, = \, U^{(-)}_\infty\,< \, +1;\\
\textrm{Type~2-A:} & -1\,<\,U^{(+)}_0\,<\,U^{(+)}_\infty \, < \, U^{(-)}_\infty\,= \, +1;\\
\textrm{Type~2-B:} & -1\,<\,U^{(+)}_0\,<\,U^{(+)}_\infty \, < \, U^{(-)}_\infty\,< \, +1.
\end{cases}
\label{Four-Types-Inequalities}
\end{equation}
It is worth noting that because of Corollary~\ref{corollary-Sec2}, 
the black hole is absent in the spacetime of Type~1-A, while each spacetime of the other types
possesses the event horizon.
The intervals $J$ and $I^{(\pm)}$ of all types are depicted in Fig.~\ref{Rod-structure-case3}.
As we will show below, all of these types of spacetimes exist.

In the inequality for Type~2-A in Eq.~\eqref{Four-Types-Inequalities}, 
the quantity $U^{(-)}_0$ is not included. If we include this quantity, it may be possible to 
consider the subtypes as
\begin{equation}
\begin{cases}
\textrm{Type~2-A(i):} & -1\,<\,U^{(+)}_0\,<\,U^{(-)}_0\,<\,U^{(+)}_\infty \, < \, U^{(-)}_\infty\,= \, +1;\\
\textrm{Type~2-A(ii):} & -1\,<\,U^{(+)}_0\,<\,U^{(-)}_0\,=\,U^{(+)}_\infty \, < \, U^{(-)}_\infty\,= \, +1;\\
\textrm{Type~2-A(iii):} & -1\,<\,U^{(+)}_0\,<\,U^{(+)}_\infty \,<\,U^{(-)}_0\,< \, U^{(-)}_\infty\,= \, +1.
\end{cases}
\label{Subtypes-Type2A-Inequalities}
\end{equation}
However, we do not consider that such subtypes are very meaningful for the following reason. 
Suppose we have a spacetime of Type~2-A(iii). Then, let us consider what happens if the value
of $v_0$ is changed. Since the value of $U^{(-)}_0$ approaches $U^{(-)}_\infty$ as $v_0$ is increased while
the value of $U^{(+)}_\infty$ is unchanged,
the hypersurface $v=v_0$ that realizes Type~2-A(i)
can be taken by adopting a sufficiently large $v_0$. For this reason, we do not consider that there are fundamental differences
in these subtypes, and thus, we treat these three subtypes altogether. The similar statement
holds for Type~2-B as well.

\begin{figure}[t]
  \centering
  \includegraphics[width=0.6\linewidth]{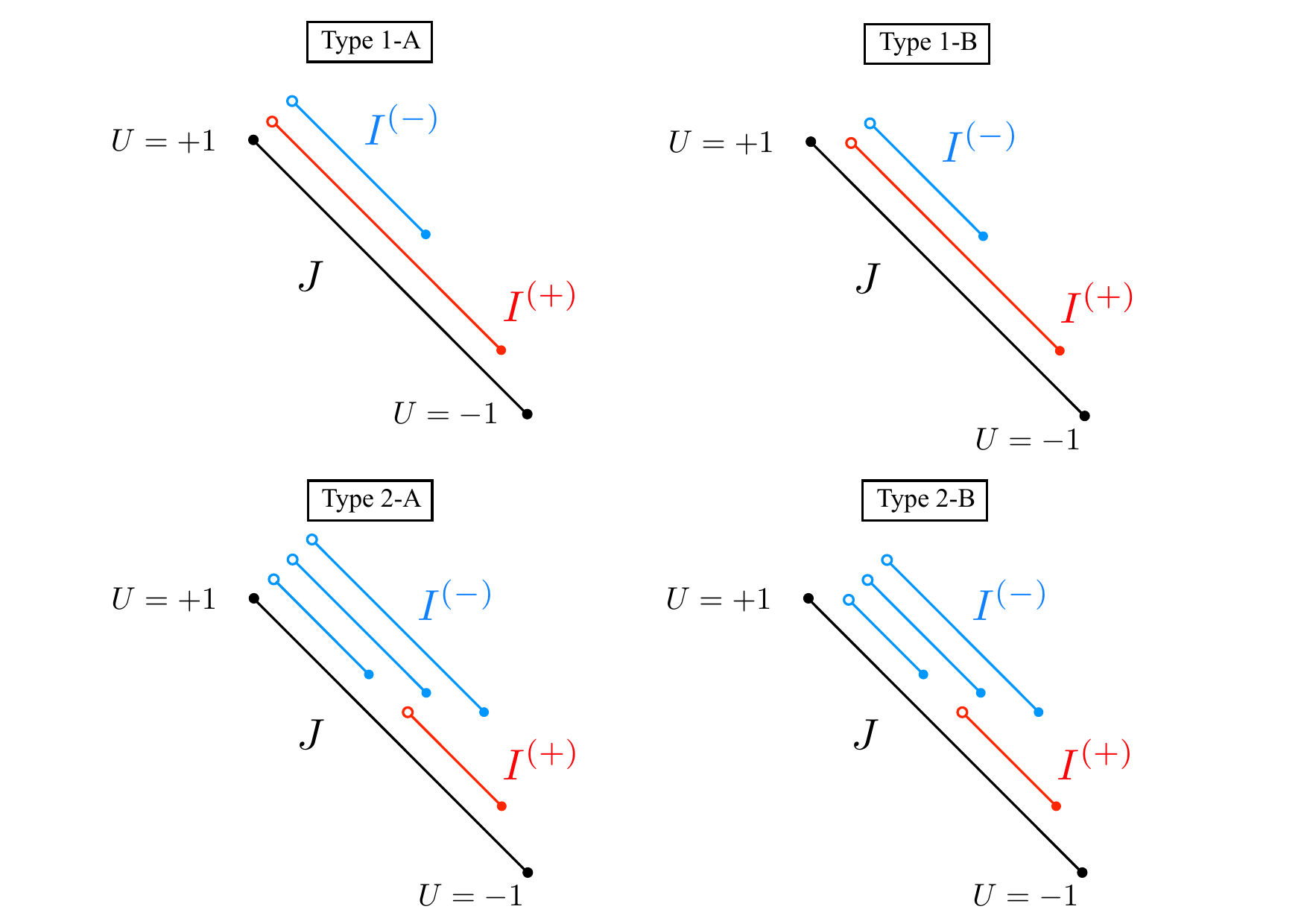}
  \caption{The intervals of $J$ and $I^{(\pm)}$ for Tyep 1-A (top-left),
  Type~1-B (top-right), Type~2-A (bottom-left), and Type~2-B (bottom-right) in Case~3. }
  \label{Rod-structure-case3}
\end{figure}

To summarize, we have developed the method for classifying Case~3 spacetimes, and as a result,
four types of the Case~3 spacetimes have been defined. 
Since there is no possibility other than these four types in Case~3 spacetimes,
we have the following theorem;
\begin{thm}
  \label{thm-case3}
  The classification by Definitions~\ref{Def:Case3-Class1-Class2}, \ref{Def:Case3-ClassA-ClassB}, 
  and \ref{Def:Case3-Types} is complete
  in the sense that any spacetime of Case~3 belongs to either Type~1-A, Type~1-B, Type~2-A, or Type~2-B.
  \end{thm}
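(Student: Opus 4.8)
The plan is to reduce the statement to an elementary exhaustion argument, since the substantive work has already been carried out in building the $U^{(\pm)}$ framework of Sec.~\ref{Subsec:compactified-retarded}. First I would recall that in Case~3 the function $U^{(+)}(v)$ is monotonically increasing (timelike outer AH) and the function $U^{(-)}(v)$ is also monotonically increasing (timelike inner AH), and that both take values in the bounded interval $J=[-1,+1]$. Hence the limits $U^{(\pm)}_\infty=\lim_{v\to\infty}U^{(\pm)}(v)$ defined in Eq.~\eqref{Upm_infty} exist as finite real numbers, and moreover $U^{(\pm)}_\infty\le +1$. This is exactly what was implicitly used when writing $I^{(-)}=[U^{(-)}_0,U^{(-)}_\infty)$ in Case~3.

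Next I would invoke the ordering constraint~\eqref{Upm-constraint}, which gives $U^{(+)}_\infty\le U^{(-)}_\infty$. Applying trichotomy to this (non-strict) inequality, exactly one of the two alternatives holds: either $U^{(+)}_\infty=U^{(-)}_\infty$ (Class~1) or $U^{(+)}_\infty<U^{(-)}_\infty$ (Class~2); these are mutually exclusive and jointly exhaustive. Independently, applying trichotomy to $U^{(-)}_\infty\le +1$, exactly one of $U^{(-)}_\infty=+1$ (Class~A) or $U^{(-)}_\infty<+1$ (Class~B) holds, and again the two possibilities are mutually exclusive and exhaustive.

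Finally I would combine the two dichotomies: every Case~3 spacetime lies in exactly one of Class~1, Class~2 and, independently, in exactly one of Class~A, Class~B, hence in exactly one of the four intersections, which by Definition~\ref{Def:Case3-Types} are precisely Type~1-A, Type~1-B, Type~2-A, and Type~2-B. The inequality chains displayed in Eq.~\eqref{Four-Types-Inequalities} then follow by inserting the remaining trivial constraints $-1<U^{(+)}_0<U^{(-)}_0$, $U^{(+)}_0<U^{(+)}_\infty$, and $U^{(-)}_0<U^{(-)}_\infty$ recorded in the text, which completes the argument.

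I do not anticipate a genuine obstacle: unlike the propositions that give sufficient conditions for a specific type (which require delicate $(\varepsilon,\delta)$ estimates on the null-geodesic equation~\eqref{eq-null-x}), the completeness statement is purely a matter of exhausting the ordering possibilities among $U^{(+)}_\infty$, $U^{(-)}_\infty$, and $+1$. The only point deserving care --- and it is already secured by the monotonicity of $U^{(\pm)}(v)$ established when the intervals $I^{(\pm)}$ were defined --- is that these limits exist, so that trichotomy may legitimately be applied. The complementary fact that all four types are actually realized is not part of this theorem and is deferred to the explicit examples constructed later in the section.
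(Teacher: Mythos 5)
Your proposal is correct and takes essentially the same route as the paper: the paper also treats completeness as an immediate exhaustion argument, noting that the constraints $U^{(+)}_\infty\le U^{(-)}_\infty$ from Eq.~\eqref{Upm-constraint} and $U^{(-)}_\infty\le +1$ from the range of the compactified coordinate leave only the dichotomies Class~1/Class~2 and Class~A/Class~B, whose four combinations are by definition Types 1-A, 1-B, 2-A, and 2-B. Your explicit remarks that the limits $U^{(\pm)}_\infty$ exist by monotonicity and boundedness, and that realizability of each type is a separate matter settled by the examples, are consistent with (and slightly more careful than) the paper's brief statement.
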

Lemmas \ref{lemma-1} and \ref{lemma-case3} and Fig.~\ref{Rod-structure-case3} give us the sufficient information for the behavior of outgoing null geodesics in order to construct the Penrose diagram in each of the four types. 
These diagrams are presented in Fig.~\ref{pic-case3_diagrams}.

 \begin{figure}[t]
   \centering
   \includegraphics[width=0.6\linewidth]{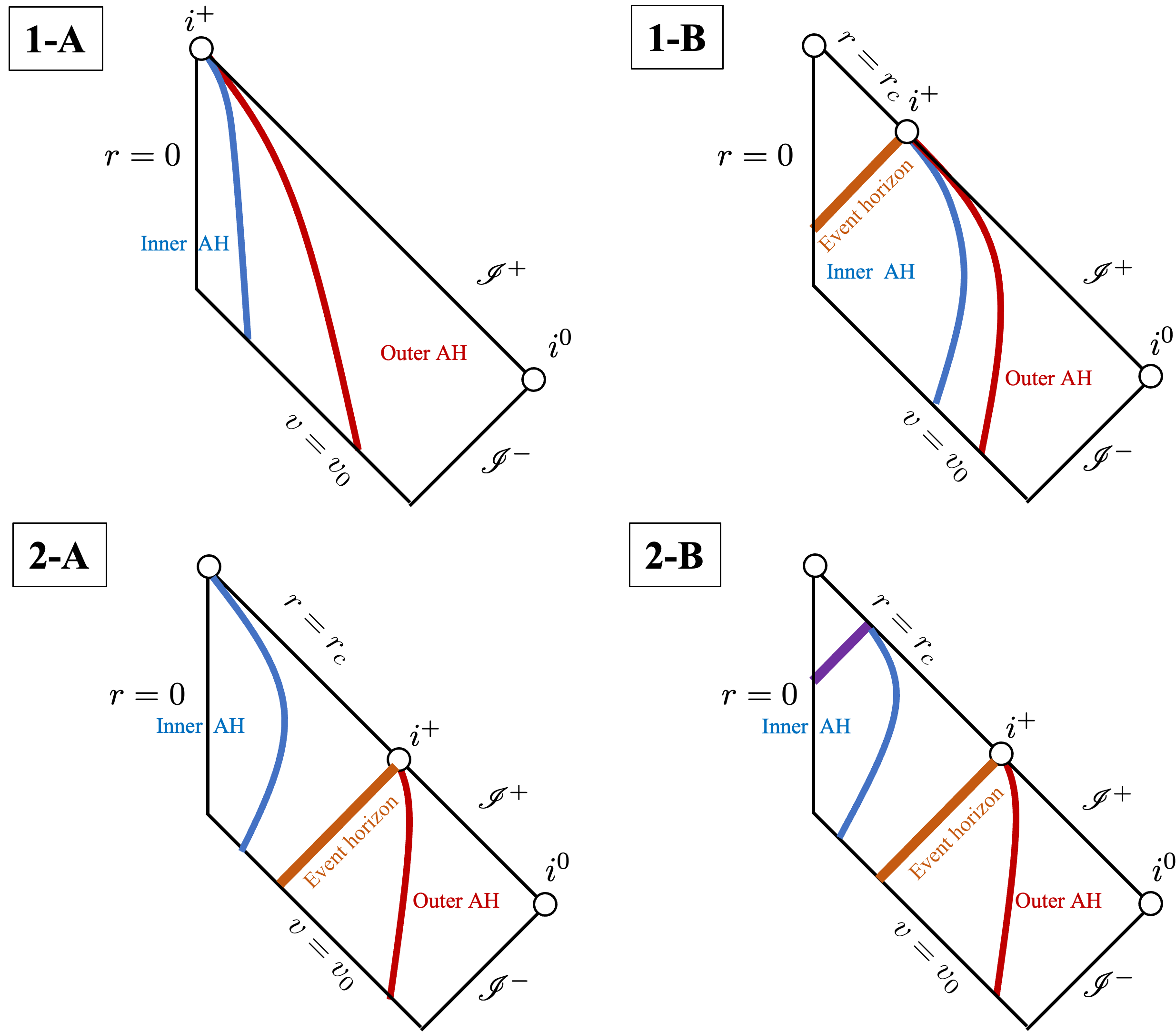}
   \caption{The Penrose diagrams of Case~3 spacetimes for Type~1-A (top-left), Type~1-B (top-right), Type~2-A (bottom-left),
   and Type~2-B (bottom-right). The red, blue and orange lines represent the outer AH, the inner AH and the event horizon respectively. }
     \label{pic-case3_diagrams}
 \end{figure}

 \subsection{Sufficient conditions to classify into one of the four classes.}
 
Here, we derive several sufficient conditions to determine which Penrose diagram corresponds to a
given metric. The contents of the propositions presented here would require some explanation.
Each proposition gives a sufficient condition for a given spacetime to belong to
one of the classes in Definitions~\ref{Def:Case3-Class1-Class2} and \ref{Def:Case3-ClassA-ClassB}. Namely,
Propositions~\ref{prop-Case3-Type1} and \ref{prop-Case3-Type2} give sufficient conditions for a given spacetime
to belong to Class~1 and Class~2, respectively. 
Similarly, Propositions~\ref{prop-Case3-TypeA} and \ref{prop-Case3-TypeB} give sufficient conditions for a given spacetime
to belong to Class~A and Class~B, respectively. 
For this reason, if one would like to show that the given spacetime belongs to 
Type~1-A, one must show that the sufficient conditions of Propositions~\ref{prop-Case3-Type1}
and \ref{prop-Case3-TypeA} are satisfied.

The following is the proposition to give the sufficient conditions for a given spacetime to be of Class~1:
 \begin{prop}
  \label{prop-Case3-Type1}
   If there exists a positive constant $\varepsilon$ for which 
  \begin{align}
    \label{eq-Case3-Type1-1}
    \lim_{v\rightarrow \infty}(x_+(v)-x_-(v))\exp{\left[\frac{1+\varepsilon}{2}h_c (X_+(v)-X_-(v))\right]}=0
  \end{align}
  is satisfied, then the spacetime is of Class~1.
 \end{prop}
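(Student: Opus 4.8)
The plan is to reproduce the contradiction scheme of Proposition~\ref{prop-Case1-Type1}, but to measure the geodesic's separation from the \emph{inner} AH rather than from the center $x=0$, so that the primitive function $X_+(v)-X_-(v)$ appearing in Eq.~\eqref{eq-Case3-Type1-1} is what comes out naturally. First, suppose the spacetime is not of Class~1. Since $U^{(+)}_\infty\le U^{(-)}_\infty$ always holds (Eq.~\eqref{Upm-constraint}), Definition~\ref{Def:Case3-Class1-Class2} then gives $U^{(+)}_\infty<U^{(-)}_\infty$, and any $U_*\in(U^{(+)}_\infty,U^{(-)}_\infty)$ labels an outgoing null geodesic which either starts between the two AHs at $v_0$ or crosses the inner AH at a finite advanced time, stays above the inner AH thereafter by Lemma~\ref{lemma-case3}, and never reaches the outer AH because $U_*$ lies outside the range $I^{(+)}=[U^{(+)}_0,U^{(+)}_\infty)$. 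Hence there exists an outgoing null geodesic $x=x(v)$ with $x_-(v)<x(v)<x_+(v)$ for all $v\ge v_0$, and the goal is to show this is impossible under Eq.~\eqref{eq-Case3-Type1-1}.

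Next, introduce $\zeta(v):=x(v)-x_-(v)$, which satisfies $0<\zeta(v)<x_+(v)-x_-(v)$ (note that $x_-(v)>0$ since it decreases monotonically to $0$, so dividing by $\zeta$ below is legitimate). Because the inner AH is timelike in Case~3, $dx_-/dv<0$, and hence $d\zeta/dv=dx/dv-dx_-/dv\ge dx/dv$. Within the region between the AHs one has $x-x_+>x_--x_+=-(x_+-x_-)$ while $x-x_-=\zeta>0$, so Eq.~\eqref{eq-null-x} gives
\[ \frac{dx}{dv}=\frac{1}{2}h(v,x)(x-x_+)(x-x_-)>-\frac{1}{2}h(v,x)\,(x_+-x_-)\,\zeta, \]
and therefore $d\zeta/dv>-\tfrac12 h(v,x)(x_+-x_-)\zeta$. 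Since $0<x(v)<x_+(v)\to0$, one has $x(v)\to0$ and hence $h(v,x(v))\to h_c$ (the $(\varepsilon,\delta)$ fact recalled just before Proposition~\ref{prop-Case1-Type1}), so for the prescribed $\varepsilon$ there is a $v_0$ with $h(v,x(v))<(1+\varepsilon)h_c$ for all $v\ge v_0$. Dividing by $\zeta>0$, integrating from $v_0$ to $v$, and recalling $X_\pm(v)=\int x_\pm\,dv$ yields $\zeta(v)>\zeta(v_0)\exp\!\big[-\tfrac{1+\varepsilon}{2}h_c\{(X_+(v)-X_-(v))-(X_+(v_0)-X_-(v_0))\}\big]$; combining this with $\zeta(v)<x_+(v)-x_-(v)$ and rearranging shows that $(x_+(v)-x_-(v))\exp[\tfrac{1+\varepsilon}{2}h_c(X_+(v)-X_-(v))]$ is bounded below by the positive constant $\zeta(v_0)\exp[\tfrac{1+\varepsilon}{2}h_c(X_+(v_0)-X_-(v_0))]$ for all $v\ge v_0$. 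This contradicts Eq.~\eqref{eq-Case3-Type1-1}, so no such geodesic can exist and the spacetime must be of Class~1.

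The genuinely routine parts are the differential-inequality estimate and its integration (no appeal to Rolle's theorem is needed here, unlike in Proposition~\ref{prop-case2-Type1-2}). The step requiring the most care is the uniform control $h(v,x(v))\to h_c$ that lets the factor $(1+\varepsilon)$ absorb the $v$-dependence of $h$: this rests on condition~(4) on the analytic behavior of $h$ as $v\to\infty$, via $h(v,x)=h(v,0)+x\,\partial_x h(v,\xi)$ with $\partial_x h$ bounded, together with $h(v,0)\to h_c$ and $x(v)\to0$. The other point that should be made explicit is the implication used in the first paragraph — that failing to be Class~1 forces the existence of a geodesic trapped between the AHs — since it is precisely what turns the contradiction into a proof of Class~1 rather than a weaker conclusion.
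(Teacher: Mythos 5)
Your proposal is correct and follows essentially the same route as the paper's proof: a contradiction argument based on $\zeta(v)=x(v)-x_-(v)$ (the paper's $\eta$), the bound $d\zeta/dv>-\tfrac{1+\varepsilon}{2}h_c(x_+-x_-)\zeta$ obtained by discarding the positive terms $\tfrac{h}{2}\zeta^2$ and $-dx_-/dv$, integration, and comparison with Eq.~\eqref{eq-Case3-Type1-1}. Your opening paragraph merely makes explicit, via the $U$-coordinate ranges and Lemma~\ref{lemma-case3}, the implication (stated as a remark after Definition~\ref{Def:Case3-Class1-Class2}) that failing to be Class~1 yields a geodesic trapped between the two AHs, which the paper takes as the starting assumption.
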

 \begin{proof}
 As remarked after Definition~\ref{Def:Case3-Class1-Class2}, the condition for Class~1 spacetimes is that
 all outgoing null geodesics emitted from the inner AH reach the outer AH. We prove this using Eq.~\eqref{eq-Case3-Type1-1}. 
 For the sake of contradiction, we suppose the existence of a geodesic that stays in the region $x_-(v)<x(v)<x_+(v)$
 among such outgoing null geodesics.  
 For that geodesic, we define $\eta(v)$ as $\eta(v)\coloneqq x(v)-x_-(v)$, whose positivity in the range $v>v_0$ is guaranteed by Lemma \ref{lemma-case3}. Then, it satisfies
  \begin{align}
    \frac{d\eta}{dv}\,=\,\frac{h(v,x)}{2}(\eta-(x_+-x_-))\eta-\frac{dx_-}{dv}   \,>\, -\frac{h(v,x)}{2}(x_+-x_-)\eta.
  \end{align}
  where we used $h(v,x)\eta^2>0$ and $dx_-/dv<0$ in the inequality.
  Since $h(v,x)<h_c(1+\varepsilon)$ holds for $v>v_0$ with a sufficiently large $v_0$ between the inner and outer AHs, we obtain
  \begin{align}
    \frac{d\eta}{dv}&>-\frac{1+\varepsilon}{2}h_c(x_+-x_-)\eta\nonumber.
  \end{align}
  Dividing both sides with $\eta>0$ and integrating this inequality, we have
  \begin{align}
      \eta(v)&>\eta_0\exp{\left[-\frac{1+\varepsilon}{2}h_c[X_+(v)-X_-(v)-(X_+(v_0)-X_-(v_0))]\right]},
  \end{align}
  where $\eta_0$ is defined as $\eta_0\coloneqq \eta(v_0)$.
  Since $x_+(v)-x_-(v)>\eta(v)$ holds from the assumption,
  we have  
  \begin{align}
     \left(x_+(v)-x_-(v)\right)\exp{\left[\frac{1+\varepsilon}{2}h_c\left(X_+(v)-X_-(v)\right)\right]}
     \,>\, 
     \eta_0\exp{\left[\frac{1+\varepsilon}{2}h_c\left(X_+(v_0)-X_-(v_0)\right)\right]}.
  \end{align}
  However, from the assumption of Eq.~\eqref{eq-Case3-Type1-1}, the left-hand side converges to zero 
  as $v\to\infty$. This is a contradiction. 
  Since this contradiction is caused by the assumption of the existence of a geodesic that stays in the region $x_-(v)<x(v)<x_+(v)$,
  we have shown that all outgoing null geodesics between the inner AH and the outer AH intersect the outer AH, i.e., the spacetime is of Class~1.
 \end{proof}
This proposition leads to the following corollary:
\begin{corollary}
  \label{corollary-Case3-Type1}
  If there exists a positive constant $\varepsilon$ for which 
  \begin{align}
    \label{eq-Case3-Type1-3}
    \lim_{v\rightarrow\infty}x_+(v)\exp{\left[\frac{1+\varepsilon}{2}h_c X_+(v)\right]}=0
  \end{align}
  is satisfied, then the spacetime is of Class~1.
\end{corollary}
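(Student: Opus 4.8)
The plan is to derive the hypothesis of Proposition~\ref{prop-Case3-Type1} from the hypothesis of the present corollary, exploiting the Case~3 ordering $0<x_-(v)<x_+(v)$, which holds for all finite $v$ (indeed $x_\pm(v)=r_\pm(v)-r_c>0$ because, in Case~3, $r_\pm$ are strictly decreasing to $r_c$). First I would record two elementary consequences of this ordering. Since $x_-(v)>0$, one has $0<x_+(v)-x_-(v)<x_+(v)$, so the prefactor in Eq.~\eqref{eq-Case3-Type1-1} is dominated by that of Eq.~\eqref{eq-Case3-Type1-3}. Moreover, again because $x_-(v)>0$, its primitive $X_-(v)$ is strictly increasing, so $X_-(v)\ge X_-(v_0)$ for every $v\ge v_0$, whence $X_+(v)-X_-(v)\le X_+(v)-X_-(v_0)$.

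Putting these together and using monotonicity of the exponential, I would obtain, for all $v\ge v_0$,
\begin{align}
 0\,\le\,(x_+(v)-x_-(v))\exp\!\left[\tfrac{1+\varepsilon}{2}h_c(X_+(v)-X_-(v))\right]
 \,<\,C\,x_+(v)\exp\!\left[\tfrac{1+\varepsilon}{2}h_c X_+(v)\right],\nonumber
\end{align}
where $C:=\exp[-\tfrac{1+\varepsilon}{2}h_c X_-(v_0)]$ is a fixed positive constant. By the hypothesis of the corollary, Eq.~\eqref{eq-Case3-Type1-3}, the right-hand side tends to $0$ as $v\to\infty$; the squeeze theorem then forces the middle expression to $0$ as well. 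This is precisely condition~\eqref{eq-Case3-Type1-1} with the same $\varepsilon$, so Proposition~\ref{prop-Case3-Type1} applies and the spacetime is of Class~1.

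I do not anticipate any real obstacle: the corollary is simply the remark that $x_+$ controls both $x_+-x_-$ and, up to an irrelevant positive multiplicative constant, the exponential weight $\exp[\frac{1+\varepsilon}{2}h_c(X_+-X_-)]$. The only point worth a sentence is the role of the integration constants in $X_\pm$: altering them multiplies the quantities in Eqs.~\eqref{eq-Case3-Type1-1} and \eqref{eq-Case3-Type1-3} only by positive constants, so the vanishing-limit conditions are unaffected, in accord with the remark following Eq.~\eqref{Primitive-function-xpm}.
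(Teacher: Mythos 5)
Your proposal is correct and follows essentially the same route as the paper's proof: bound $x_+-x_-$ by $x_+$ and use the monotonicity of $X_-$ (since $x_->0$ in Case~3) to replace $X_+(v)-X_-(v)$ by $X_+(v)-X_-(v_0)$, so the hypothesis of Corollary~\ref{corollary-Case3-Type1} implies that of Proposition~\ref{prop-Case3-Type1} up to a fixed positive constant. Your remark on the irrelevance of the integration constants in $X_\pm$ is a sound addition but not a departure from the paper's argument.
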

\begin{proof}
  Because $x_+(v)-x_-(v)<x_+(v)$ holds, 
   $X_+(v)-X_-(v)<X_+(v)-X_-(v_0)$ is satisfied.
   This leads to the inequality
  \begin{align}
    0&\,<\,(x_+(v)-x_-(v))\exp{\left[\frac{1+\varepsilon}{2}h_c (X_+(v)-X_-(v))\right]}
    \nonumber\\
    &\,<\,x_+(v)\exp{\left[\frac{1+\varepsilon}{2}h_c (X_+(v)-X_-(v_0))\right]}.
  \end{align}
  Therefore, the condition of Eq.~\eqref{eq-Case3-Type1-3} leads to that of Eq.~\eqref{eq-Case3-Type1-1}, indicating that 
  the spacetime is of Class~1.
\end{proof}

The following is the proposition to give the sufficient conditions for a given spacetime to be of Class~2:
\begin{prop}
  \label{prop-Case3-Type2}
   If there exist a constant $\alpha$ satisfying $0<\alpha<1$ and a positive constant $\varepsilon$ for which
  \begin{align}
    \label{eq-Case3-Type2-1}
    \alpha \frac{dx_+}{dv}+(1-\alpha)\frac{dx_-}{dv}>-\frac{h_c}{2}\alpha(1-\alpha)(1-\varepsilon)(x_+-x_-)^2
  \end{align}
  is satisfied for $v>v_0$ with a sufficiently large $v_0$, then the spacetime is of Class~2.
\end{prop}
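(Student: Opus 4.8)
The plan is to construct a trapping region for outgoing null geodesics whose lower boundary is the inner AH (supplied for free by Lemma~\ref{lemma-case3}) and whose upper boundary is the convex-combination curve
\[
C_\alpha:\qquad x=x_\alpha(v):=\alpha\,x_+(v)+(1-\alpha)\,x_-(v),
\]
and then to check that the hypothesis~\eqref{eq-Case3-Type2-1} is precisely the condition making $C_\alpha$ a one-way barrier. Since $0<\alpha<1$ and $x_-<x_+$, the curve $C_\alpha$ lies strictly between the two AHs, and $x_\alpha(v)\to 0$ as $v\to\infty$, so $h(v,x_\alpha(v))\to h_c$ along $C_\alpha$ by the assumed convergence of $h$.

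First I would compute the slope of an outgoing null geodesic at a point of $C_\alpha$. On $C_\alpha$ one has $x-x_+=-(1-\alpha)(x_+-x_-)$ and $x-x_-=\alpha(x_+-x_-)$, so Eq.~\eqref{eq-null-x}, combined with the bound $h(v,x_\alpha(v))>(1-\varepsilon)h_c$ valid for $v>v_1$ with $v_1\ge v_0$ sufficiently large (possible because $h_c>0$) and with the hypothesis~\eqref{eq-Case3-Type2-1}, whose left-hand side is $dx_\alpha/dv=\alpha\,dx_+/dv+(1-\alpha)\,dx_-/dv$, gives
\begin{align*}
\left.\frac{dx}{dv}\right|_{C_\alpha}
&= -\frac{1}{2}\,h(v,x_\alpha)\,\alpha(1-\alpha)(x_+-x_-)^2\\
&< -\frac{h_c}{2}(1-\varepsilon)\,\alpha(1-\alpha)(x_+-x_-)^2
\;<\;\frac{dx_\alpha}{dv}
\end{align*}
for $v>v_1$. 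Hence any outgoing null geodesic that meets $C_\alpha$ crosses it from the region $x>x_\alpha$ into the region $x<x_\alpha$ as $v$ increases; the only thing to watch here is the direction of the strict inequalities when multiplying through by the positive factor $\alpha(1-\alpha)(x_+-x_-)^2$.

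Next I would run the trapping argument. Fix $v_2>v_1$ and take any outgoing null geodesic $x(v)$ with $x_-(v_2)<x(v_2)<x_\alpha(v_2)$. By Lemma~\ref{lemma-case3} it never meets the inner AH, so $x(v)>x_-(v)$ for all $v>v_2$; and it cannot meet $C_\alpha$ either, since at a first such time $v_\ast$ one would need $\tfrac{d}{dv}(x-x_\alpha)\big|_{v_\ast}\ge 0$, contradicting the inequality above. Thus every geodesic starting in the interval $(x_-(v_2),x_\alpha(v_2))$ remains strictly between the two AHs for all $v>v_2$ and never reaches the outer AH; it therefore does not escape to $\mathscr{I}^+$ (it is inside the outer AH for every $v>v_2$, and by Lemma~\ref{lemma-1} it could not re-enter after a crossing), so it is trapped in the black hole region while staying outside the inner AH. Translated to the compactified coordinate $U$, its label satisfies $U^{(+)}_\infty\le U$ and $U<U^{(-)}(v)<U^{(-)}_\infty$ for $v>v_2$, hence $U^{(+)}_\infty<U^{(-)}_\infty$, which is the definition of Class~2; moreover the family of such geodesics has nonzero measure, in accordance with the physical picture of Class~2.

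The computation is routine once the barrier $C_\alpha$ has been spotted, and, pleasantly, the single parameter $\varepsilon$ of Eq.~\eqref{eq-Case3-Type2-1} feeds directly into both the estimate $h(v,x_\alpha)>(1-\varepsilon)h_c$ and the slope comparison, so no auxiliary small parameters are needed (unlike in Corollary~\ref{corollary-case2-Type2-1}), and no second barrier curve is required (unlike in Proposition~\ref{prop-Case2-Type2-1}) because Lemma~\ref{lemma-case3} already bounds the trapping region from below. Consequently I do not expect a serious obstacle; the one step genuinely requiring care is the final passage from the dynamical statement ``a positive-measure family of outgoing null geodesics is trapped forever between the two AHs'' to the coordinate inequality $U^{(+)}_\infty<U^{(-)}_\infty$ defining Class~2, which uses the descriptions of $I^{(\pm)}$ and Lemmas~\ref{lemma-1} and~\ref{lemma-2} from Sec.~\ref{Subsec:compactified-retarded}.
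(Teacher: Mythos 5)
Your proposal is correct and follows essentially the same route as the paper: the barrier curve $x_\alpha(v)=\alpha x_+(v)+(1-\alpha)x_-(v)$, the slope comparison $\left.\frac{dx}{dv}\right|_{C_\alpha}<-\frac{h_c}{2}(1-\varepsilon)\alpha(1-\alpha)(x_+-x_-)^2<\frac{dx_\alpha}{dv}$ obtained from $h(v,x_\alpha)>(1-\varepsilon)h_c$ together with the hypothesis, and the trapping of geodesics below $C_\alpha$ (the paper simply concludes they never reach the outer AH, while you additionally spell out the translation into $U^{(+)}_\infty<U^{(-)}_\infty$, which is consistent with Definition~\ref{Def:Case3-Class1-Class2}).
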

\begin{proof}
As remarked after Definition~\ref{Def:Case3-Class1-Class2}, the condition for Class~2 spacetimes is that
 some of outgoing null geodesics emitted from the inner AH do not reach the outer AH. 
 It is sufficient to prove the existence of outgoing null geodesics 
 between the inner and outer AHs that cannot intersect the outer AH. 
 Using a parameter $\alpha $ satisfying $0<\alpha<1$, we introduce a curve $C$ as
  \begin{align}
    x_C(v)\coloneqq \alpha x_+(v)+(1-\alpha) x_-(v).
  \end{align}
  At each point on $C$, the value of $dx/dv$ of an outgoing null geodesic crossing $C$ at that point is
  \begin{align}
    \left.\frac{dx}{dv}\right|_{C}&=\frac{h(v,x_C)}{2}(x_C-x_+)(x_C-x_-)\nonumber\\
    &=-\frac{h(v,x_C)}{2}\alpha(1-\alpha)(x_+-x_-)^2.
      \end{align}
  Since $x_C$ satisfies $\lim_{v\rightarrow\infty}x_C=0$, $h(v,x_C)>(1-\varepsilon)h_c$ holds for $v>v_0$ with sufficiently large $v_0$, and thus, we derive
  \begin{align}
    \left.\frac{dx}{dv}\right|_{C}<-\frac{h_c}{2}\alpha(1-\alpha)(1-\varepsilon)(x_+-x_-)^2.
    \label{Derivative_dxdv_C}
  \end{align}
  On the other hand, from the assumption of Eq.~\eqref{eq-Case3-Type2-1}, we have
  \begin{align}
    \frac{dx_C}{dv}&\,=\,\alpha \frac{dx_+}{dv}+(1-\alpha)\frac{dx_-}{dv}
    \,>\,-\frac{h_c}{2}\alpha(1-\alpha)(1-\varepsilon)(x_+-x_-)^2.
\label{Derivative_dxCdv}
  \end{align}
  Combining the inequalities of Eqs.~\eqref{Derivative_dxdv_C} and \eqref{Derivative_dxCdv}, we have
  \begin{equation}
  \frac{dx_C}{dv}\,>\,\left.\frac{dx}{dv}\right|_{C}.
  \end{equation}
  This means that on the curve $C$, outgoing null geodesics cross from the outside region to the inside region. 
  Consequently, outgoing null geodesics inside the curve $C$ do not intersect this curve, and thus, never
  arrive at the outer AH, i.e., the spacetime is of Class~2.
\end{proof}
This proposition leads to the following corollary.
\begin{corollary}
  \label{corollary-Case3-Type2}
  If there exists a constant $\alpha$ in the range $0<\alpha<1$ for which
  \begin{align}
    \label{eq-Case3-Type2-2}
    \lim_{v\rightarrow\infty}\frac{\alpha dx_+/dv+(1-\alpha)dx_-/dv}{(x_+-x_-)^2}\eqqcolon A\,>\, -\frac{h_c}{2}\alpha(1-\alpha)
  \end{align}
  is satisfied, then the spacetime is of Class~2.
\end{corollary}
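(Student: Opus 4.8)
The plan is to deduce this corollary directly from Proposition~\ref{prop-Case3-Type2} by upgrading the limit inequality of Eq.~\eqref{eq-Case3-Type2-2} to the pointwise inequality of Eq.~\eqref{eq-Case3-Type2-1} valid for all sufficiently large $v$. The only real work is to manage the constant $\varepsilon$ correctly; once that is done, Proposition~\ref{prop-Case3-Type2} delivers the conclusion.

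Concretely, I would set $B\coloneqq \tfrac{h_c}{2}\alpha(1-\alpha)$, which is strictly positive since $0<\alpha<1$. The hypothesis reads $A>-B$, so $A+B>0$, and I can fix a positive constant $\varepsilon$ small enough that $\varepsilon<1$ and $\varepsilon<(A+B)/B$ hold simultaneously (the set of such $\varepsilon$ is a nonempty interval). The second bound is equivalent to $A>-B(1-\varepsilon)=-\tfrac{h_c}{2}\alpha(1-\alpha)(1-\varepsilon)$, while the first keeps $1-\varepsilon>0$. By the definition of the limit in Eq.~\eqref{eq-Case3-Type2-2}, there is a $v_0$ such that for all $v>v_0$,
\[
\frac{\alpha\, dx_+/dv + (1-\alpha)\, dx_-/dv}{(x_+-x_-)^2}\;>\;-\frac{h_c}{2}\alpha(1-\alpha)(1-\varepsilon).
\]
Since $x_+(v)>x_-(v)$ throughout the domain, the denominator $(x_+-x_-)^2$ is strictly positive, so multiplying through by it preserves the inequality and produces exactly Eq.~\eqref{eq-Case3-Type2-1} for $v>v_0$. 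Proposition~\ref{prop-Case3-Type2}, applied with this $\alpha$ and $\varepsilon$, then gives that the spacetime is of Class~2. (If one also wishes to allow $A=+\infty$, the displayed quotient exceeds any fixed negative constant for large $v$, so the same argument goes through with any $\varepsilon\in(0,1)$.)

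The ``hard part'' here is essentially nonexistent: the entire mathematical content is carried by Proposition~\ref{prop-Case3-Type2}, and this corollary is merely its limit-form restatement. The single point that requires genuine care is the choice of $\varepsilon$ — one must check that the strict inequality $A>-B$ leaves room to insert the factor $(1-\varepsilon)$ without losing strictness, while keeping $1-\varepsilon>0$ so that Proposition~\ref{prop-Case3-Type2} is invoked in its intended regime. Beyond that bookkeeping, nothing new is needed.
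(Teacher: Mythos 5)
Your proof is correct and follows essentially the same route as the paper: choose $\varepsilon\in(0,1)$ small enough that $A>-\tfrac{h_c}{2}\alpha(1-\alpha)(1-\varepsilon)$, use the definition of the limit to turn Eq.~\eqref{eq-Case3-Type2-2} into the pointwise inequality of Eq.~\eqref{eq-Case3-Type2-1} for $v>v_0$, and invoke Proposition~\ref{prop-Case3-Type2}. The only cosmetic difference is that the paper phrases the limit step via an explicit absolute-value bound (and notes $A\le 0$ since the numerator is negative), while you argue it directly; both are fine.
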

\begin{proof}
    Since $\alpha\frac{dx_+}{dv}+(1-\alpha)\frac{dx_-}{dv}$ is negative, $A$ must be negative or zero. Since $A>-\frac{h_c}{2}\alpha (1-\alpha)$, there exists a sufficiently small positive number $\varepsilon$ satisfying $A>-\frac{h_c}{2}\alpha(1-\alpha)(1-\varepsilon)$.
    Then, for $v>v_0$ with sufficiently large $v_0$, we have
  \begin{align}
    \left|\frac{\alpha dx_+/dv+(1-\alpha)dx_-/dv}{(x_+-x_-)^2}-A\right|\,<\,A+\frac{h_c}{2}\alpha(1-\alpha)(1-\varepsilon),
  \end{align}
   and this leads to
  \begin{align}
    \alpha \frac{dx_+}{dv}+(1-\alpha)\frac{dx_-}{dv}\,>\,-\frac{h_c}{2}\alpha(1-\alpha)(1-\varepsilon)(x_+-x_-)^2.
  \end{align}
  This inequality is equivalent to that in Eq.~\eqref{eq-Case3-Type2-1}, i.e., the spacetime is of Class~2.
\end{proof}

The following is the proposition to give the sufficient conditions for a given spacetime to be of Class~A:
\begin{prop}
  \label{prop-Case3-TypeA}
   If there exists a positive value $\varepsilon$ satisfying 
  \begin{align}
    \label{eq-Case3-TypeA-1}
    \lim_{v\rightarrow\infty}x_-(v)\exp{\left[\frac{h_c}{2}(1-\varepsilon)X_+(v)\right]}\,=\,\infty,
  \end{align}
  the spacetime is of Class~A.
\end{prop}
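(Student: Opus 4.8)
The plan is to argue by contradiction, starting from the reformulation recorded just after Definition~\ref{Def:Case3-ClassA-ClassB}: the spacetime is of Class~A exactly when every outgoing null geodesic inside the inner AH eventually reaches it, i.e.\ exactly when there is \emph{no} outgoing null geodesic obeying $x(v)<x_-(v)$ for all $v\ge v_0$. So I would assume such an ``offending'' geodesic $x(v)$ exists and deduce the negation of Eq.~\eqref{eq-Case3-TypeA-1}.

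First I would pin down its asymptotics. If $x(v)$ ever lay in $0<x<x_-(v)$ it would cross the inner AH by Lemma~\ref{lemma-case3-2}, and since $dx/dv>0$ at $x=0$ (where it equals $\tfrac12 h(v,0)x_+x_->0$) the same conclusion follows if $x(v)=0$ somewhere; hence $x(v)<0$ for all $v\ge v_0$. Then $x(v)$ is strictly increasing and bounded above by $0$, so it converges, and the standard argument (a negative limit $x_{\mathrm{lim}}$ would force $dx/dv\to\tfrac12 h(\infty,x_{\mathrm{lim}})x_{\mathrm{lim}}^2>0$) gives $x(v)\to 0^-$. In particular $h(v,x(v))\to h_c$, so after enlarging $v_0$ I may assume $h(v,x(v))>(1-\varepsilon')h_c$ along the geodesic, where $\varepsilon'$ is any fixed constant with $0<\varepsilon'<\varepsilon$.

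Next I would extract two elementary estimates from $dx/dv=\tfrac12 h(v,x)(x_+-x)(x_--x)$, valid for $v\ge v_0$ and using $x<0$, hence $x_+-x\ge x_+$ and $x_--x\ge\max\{x_-,|x|\}$. The first, $dx/dv\ge\tfrac12(1-\varepsilon')h_c\,x_+|x|$, integrates (after dividing by $|x|$) to
\[
|x(v)|\le |x(v_0)|\exp\!\left[-\tfrac12(1-\varepsilon')h_c\bigl(X_+(v)-X_+(v_0)\bigr)\right].
\]
The second, $dx/dv\ge\tfrac12(1-\varepsilon')h_c\,x_+x_-$, integrated from any $v_1\ge v_0$ and combined with $x(v)<0$, gives $\int_{v_1}^{\infty}x_+x_-\,dv'\le \tfrac{2}{(1-\varepsilon')h_c}|x(v_1)|$. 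Feeding the first estimate into the second yields, for a constant $C$ and all $v_1\ge v_0$,
\[
\int_{v_1}^{\infty}x_+(v')x_-(v')\,dv'\;\le\;C\exp\!\left[-\tfrac12(1-\varepsilon')h_c\,X_+(v_1)\right].
\]

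The last step, which I expect to be the main technical point, is to convert this bound on the tail integral into a pointwise bound on $x_-$ contradicting Eq.~\eqref{eq-Case3-TypeA-1}. For each $v_1$ I would introduce the shift $T(v_1)>0$ defined by $X_+(v_1+T(v_1))=X_+(v_1)+1$; this is well defined since $x_+>0$ and since $X_+(v)\to\infty$ (otherwise $\exp[\tfrac{h_c}{2}(1-\varepsilon)X_+]$ stays bounded while $x_-\to 0$, contradicting the hypothesis). Monotonicity of $x_-$ gives $\int_{v_1}^{v_1+T}x_+x_-\,dv'\ge x_-(v_1+T)\int_{v_1}^{v_1+T}x_+\,dv'=x_-(v_1+T)$, so $x_-(v_1+T)\le Ce^{(1-\varepsilon')h_c/2}\exp[-\tfrac12(1-\varepsilon')h_c\,X_+(v_1+T)]$. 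Because $v_1\mapsto v_1+T(v_1)$ is continuous, nondecreasing and diverges, it is onto a half-line, so $x_-(w)\exp[\tfrac12(1-\varepsilon')h_c X_+(w)]$ is bounded for all large $w$; but $\varepsilon'<\varepsilon$ and $X_+(w)>0$ give $x_-(w)\exp[\tfrac12(1-\varepsilon')h_c X_+(w)]\ge x_-(w)\exp[\tfrac12(1-\varepsilon)h_c X_+(w)]$, which diverges by Eq.~\eqref{eq-Case3-TypeA-1}. This contradiction shows no offending geodesic exists, so the spacetime is of Class~A. The only delicate points are the bookkeeping of the two small parameters $\varepsilon,\varepsilon'$ and checking that the map $v_1\mapsto v_1+T(v_1)$ exhausts all large $v$; everything else is routine.
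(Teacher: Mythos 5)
Your argument is correct, but it takes a more roundabout route than the paper's. The paper works with the single variable $\tilde{\eta}(v)=x_-(v)-x(v)$, which for an offending geodesic (one with $x<0$ for all $v$) satisfies $\tilde{\eta}>x_->0$, and derives the one Gronwall-type inequality $d\tilde{\eta}/dv<-\tfrac{h_c}{2}(1-\varepsilon)x_+\tilde{\eta}$; integrating and using $x_-<\tilde{\eta}$ immediately gives a bound on $x_-\exp\left[\tfrac{h_c}{2}(1-\varepsilon)X_+\right]$ that contradicts Eq.~\eqref{eq-Case3-TypeA-1}, with no further machinery. You instead control $|x(v)|$ (the distance to $x=0$ rather than to the inner AH), and since that does not directly bound $x_-$, you need two extra steps: the integrated inequality giving $\int_{v_1}^{\infty}x_+x_-\,dv'\lesssim|x(v_1)|$, and the unit-increment shift $T(v_1)$ defined by $X_+(v_1+T)=X_+(v_1)+1$ together with the monotonicity of $x_-$ and the surjectivity of $v_1\mapsto v_1+T(v_1)$ to convert the tail-integral bound into a pointwise decay of $x_-$, plus the bookkeeping with $\varepsilon'<\varepsilon$ and the observation $X_+\to\infty$. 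All of these steps check out (the preliminary reduction via Lemma~\ref{lemma-case3-2}, the monotone-limit argument for $x\to0^-$, and $h\to h_c$ are the same as in the paper), so the proof is valid; what the paper's choice of variable buys is that the comparison with $x_-$ is built in from the start, collapsing your three-stage estimate into one line, while your version makes the quantitative mechanism more explicit (decay of $|x|$ forces integrability of $x_+x_-$, which, with $x_-$ decreasing, forces $x_-$ to decay faster than $\exp\left[-\tfrac{h_c}{2}(1-\varepsilon)X_+\right]$).
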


\begin{proof}
As remarked after Definition~\ref{Def:Case3-ClassA-ClassB}, the condition for Class~A spacetimes is that
each outgoing null geodesics inside the inner AH necessarily intersect the inner AH at a later time. 
  Since the outgoing null geodesics in $0\leq x<x_-(v)$ intersect the inner AH in finite advanced time from Lemma~\ref{lemma-case3-2}, it is sufficient to show that any outgoing null geodesic starting in the region $x<0$ reaches $x=0$ in a finite advanced time. 
  For the sake of contradiction, suppose the existence of an outgoing null geodesic coming from the region $x < 0$ that does not reach $x = 0$. From this assumption, $x(v)$ of the outgoing null geodesic must
  converge to a nonpositive value at $v\to \infty$ because it is a monotonically increasing function. Then, 
  it must satisfy $\lim_{v\to\infty}x(v)=0$ because convergence to a negative value contradicts 
  the positivity of $dx/dv$ from the equation. Therefore, we have $\lim_{v\to\infty}h(v,x(v))=h_c$, and hence,
  the condition $h(v,x(v))>(1-\varepsilon)h_c$ can be made satisfied
  in the region $v>v_0$  by adopting a sufficiently large value of $v_0$.
  We now introduce $\tilde{\eta}\coloneqq x_-(v)-x(v)$ which satisfies $\tilde{\eta}>x_-(v)>0$. 
  Since $dx_-/dv<0$ and $-\tilde{\eta}^2<-x_-\tilde{\eta}$ are satisfied, we have
  \begin{equation}
    \frac{d\tilde{\eta}}{dv}\,=\,-\frac{h}{2}\left[\tilde{\eta}+(x_+-x_-)\right]\tilde{\eta}+\frac{dx_-}{dv}
    \,<\, -\frac{h(v,x)}{2}x_+\tilde{\eta}
    \,<\, -\frac{h_c}{2}(1-\varepsilon)x_+\tilde{\eta}.
  \end{equation}
  Dividing both sides by $\tilde{\eta}$ and integrating this inequality, we have
  \begin{align}
    \tilde{\eta}&\,<\,\tilde{\eta}_0\exp{\left[-\frac{h_c}{2}(1-\varepsilon)\left(X_+(v)-X_+(v_0)\right)\right]},
  \end{align}
   where $\tilde{\eta}_0$ is defined as $\tilde{\eta}_0\coloneqq \tilde{\eta}(v_0)$. From $x_-(v)<\tilde{\eta}$,
  \begin{align}
    x_-(v)\exp{\left[\frac{h_c}{2}(1-\varepsilon)X_+(v)\right]}
    \,<\,\tilde{\eta}_0\exp{\left[\frac{h_c}{2}(1-\varepsilon)X_+(v_0)\right]}
  \end{align}
  holds. However, from Eq.~\eqref{eq-Case3-TypeA-1}, the left-hand side diverges for $v\rightarrow\infty$, and therefore, this inequality is violated for a sufficiently large $v$. 
  This is a contradiction. Since this contradiction is caused by the assumption of the existence of an outgoing null geodesic that does not reach $x=0$, all of the outgoing null geodesics from $x<0$ must arrive at $x=0$ in a finite advanced time. 
  Therefore, the spacetime is of Class~A.
\end{proof}

The following is the proposition to give the sufficient conditions for a given spacetime to be of Class~B:
\begin{prop}
  \label{prop-Case3-TypeB}
   If there exist positive constants $\alpha$ and $\varepsilon$ for which 
  \begin{align}
    \label{eq-Case3-TypeB-1}
    \frac{dx_-/dv}{(x_++\alpha x_-)x_-}\,<\,-\frac{1+\alpha}{2\alpha}(1+\varepsilon)h_c
  \end{align}
  is satisfied for $v>v_0$ for sufficiently large $v_0$, then the spacetime is of Class~B.
\end{prop}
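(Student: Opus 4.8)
The plan is to show that Eq.~\eqref{eq-Case3-TypeB-1} forces the existence of an outgoing null geodesic that stays inside the inner AH for all later advanced times; by the characterization of Class~B recalled after Definition~\ref{Def:Case3-ClassA-ClassB}, this is exactly the required conclusion. First I would set up the reduction. In Case~3 one has $x_-(v)>0$ for every $v$ with $\lim_{v\to\infty}x_-(v)=0$ (because $x_-$ decreases to zero), $dx/dv>0$ holds on the whole region $x<x_-(v)$ (there the two factors $x-x_\pm$ are both negative), and by Lemma~\ref{lemma-case3-2} every outgoing null geodesic in $0<x<x_-(v)$ crosses the inner AH in finite advanced time. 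Hence any geodesic that ever reaches $x=0$ afterwards enters $0<x<x_-(v)$ and then meets the inner AH, so it suffices to produce a \emph{single} outgoing null geodesic with $x(v)<0$ for all $v>v_0$: such a geodesic lies strictly inside the inner AH and never meets it.

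To produce it I would run the barrier-curve argument used in the proof of Proposition~\ref{prop-Case3-Type2}, but with the barrier placed between the center and the inner AH rather than between the inner and outer AHs. Introduce $C:\ x=x_C(v)\coloneqq-\alpha x_-(v)$, which lies in $x<0$, tends to $0$ as $v\to\infty$, and has slope $dx_C/dv=-\alpha\,dx_-/dv>0$ since $dx_-/dv<0$ in Case~3. Evaluating Eq.~\eqref{eq-null-x} along $C$ and using $h(v,-\alpha x_-(v))\to h_c$ (so that $h(v,-\alpha x_-)<(1+\varepsilon)h_c$ once $v>v_0$ with $v_0$ large enough), one gets
\begin{align}
  \left.\frac{dx}{dv}\right|_{C}
  &=\frac{1}{2}h(v,-\alpha x_-)(1+\alpha)(x_++\alpha x_-)x_- \nonumber\\
  &<\frac{(1+\varepsilon)(1+\alpha)}{2}h_c(x_++\alpha x_-)x_-
  \,<\,-\alpha\frac{dx_-}{dv}\,=\,\frac{dx_C}{dv}, \nonumber
\end{align}
where the last inequality is nothing but Eq.~\eqref{eq-Case3-TypeB-1} after multiplying by the positive factor $\alpha(x_++\alpha x_-)x_-$. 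Thus $\frac{d}{dv}(x-x_C)<0$ whenever an outgoing null geodesic meets $C$ (for $v>v_0$), so outgoing null geodesics cross $C$ only from above to below, and — exactly as in Proposition~\ref{prop-Case3-Type2} — any geodesic lying in the region $x<x_C(v)$ at one instant stays there at all later advanced times.

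Finally I would check that an admissible initial point exists: for $v_0$ large enough $\alpha x_-(v_0)<r_c$, so the interval $-r_c\le x_0<-\alpha x_-(v_0)$ is nonempty, and the outgoing null geodesic emitted from $(v_0,x_0)$ remains below $C$, hence in $x<0$ (which is contained in $x<x_-(v)$), for all $v>v_0$; by the first paragraph it never reaches the inner AH. Since it carries a constant value $U^{*}$ with $-1<U^{*}<+1$ and is inside the inner AH at every $v>v_0$, i.e.\ $U^{*}>U^{(-)}(v)$, letting $v\to\infty$ gives $U^{*}\ge U^{(-)}_\infty$ and hence $U^{(-)}_\infty\le U^{*}<+1$, which is the defining inequality of Class~B. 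I expect the only real obstacle to be bookkeeping: keeping the signs straight (in Case~3 both $x_\pm>0$ and $dx_-/dv<0$, so $C$ sits below the line $x=0$, unlike the analogous barrier curves of Cases~1 and 2), and choosing $v_0$ large enough that Eq.~\eqref{eq-Case3-TypeB-1} and the estimate $h(v,-\alpha x_-)<(1+\varepsilon)h_c$ hold on a common range; the geometric and ODE content is identical to the barrier-curve arguments of the preceding propositions.
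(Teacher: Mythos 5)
Your proof is correct and follows essentially the same route as the paper: the barrier curve $x_C(v)=-\alpha x_-(v)$, the bound $h(v,-\alpha x_-)<(1+\varepsilon)h_c$ for large $v$, and the slope comparison $\left.\frac{dx}{dv}\right|_{C}<\frac{dx_C}{dv}$ obtained directly from Eq.~\eqref{eq-Case3-TypeB-1} are exactly the paper's argument. The extra bookkeeping you add (existence of an admissible initial point and the translation into the $U^{(-)}_\infty<+1$ characterization of Class~B) is harmless and only makes explicit what the paper leaves implicit.
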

\begin{proof}
As remarked after Definition~\ref{Def:Case3-ClassA-ClassB}, the condition for Class~B spacetimes is the
  existence of an outgoing null geodesic which remains inside the inner apparent horizon, $x<x_-(v)$. 
    Using the constant $\alpha$ given in the proposition, we introduce a curve $C$ by $x_C(v)=-\alpha x_-(v)$. 
  The derivative of a curve $C$ is 
  \begin{equation}
  \label{eq-case3-typeb-1}
   \frac{dx_C}{dv} \, = \, -\alpha \frac{dx_-}{dv}. 
  \end{equation}
  On the other hand, at each point on $C$, the value of $dx/dv$ of an outgoing null geodesic crossing $C$ at that point is  
  \begin{align}
      \left.\frac{dx}{dv}\right|_{C}=\frac{h(v,-\alpha x_-)}{2}(1+\alpha)(x_++\alpha x_-)x_-.
  \end{align}
  Since $\lim_{v\rightarrow\infty} h(v,-\alpha x_-)=h_c$ holds, we have $(1+\varepsilon)h_c>h(v,-\alpha x_-)$ for the positive number $\varepsilon$ given in the proposition in the range $v>v_0$ if $v_0$ is sufficiently large. Then, we have
  \begin{align}
    \label{eq-case3-typeb-2}
    \left.\frac{dx}{dv}\right|_{C}\,<\,\frac{h_c}{2}(1+\alpha)(1+\varepsilon)(x_++\alpha x_-)x_-
    \,<\,-\alpha\frac{dx_-}{dv},
  \end{align}
  where we used Eq.~\eqref{eq-Case3-TypeB-1} in the second inequality.
Combining Eqs.~\eqref{eq-case3-typeb-1} and \eqref{eq-case3-typeb-2} yields
  \begin{align}
    \label{eq-Case3-TypeB-2}
    \left.\frac{dx}{dv}\right|_{C}\,<\,  \frac{dx_C}{dv}.
  \end{align}
  Therefore, at each point on the curve $C$ in the region $v>v_0$,
  the outgoing null geodesic crosses that point from the outside region to the inside region.
  This means that no outgoing null geodesic crosses the curve $C$ from the inside region
  to the outside region, implying the existence of null geodesics staying in the region $x<x_-(v)$.
  Thus, any Case~3 spacetime satisfying the inequality of Eq.~\eqref{eq-Case3-TypeB-1}
  is of Class~B.  
\end{proof}

This proposition trivially leads to the following corollary.
\begin{corollary}
  \label{corollary-Case3-TypeB}
  If
  \begin{align}
    \label{eq-Case3-TypeB-3}
    \lim_{v\rightarrow \infty}\frac{dx_-/dv}{(x_++\alpha x_-)x_-}\,<\,-\frac{1+\alpha}{2\alpha}h_c
  \end{align}
  holds for some positive constant $\alpha$, then the spacetime is of Class~B.
\end{corollary}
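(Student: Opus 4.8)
The plan is to reduce this corollary directly to Proposition~\ref{prop-Case3-TypeB} by converting the limit hypothesis of Eq.~\eqref{eq-Case3-TypeB-3} into the pointwise inequality of Eq.~\eqref{eq-Case3-TypeB-1}. First I would set
\begin{align}
L \, := \, \lim_{v\to\infty}\frac{dx_-/dv}{(x_++\alpha x_-)x_-},
\end{align}
which by assumption satisfies $L<-\frac{1+\alpha}{2\alpha}h_c$, where the value $L=-\infty$ is allowed.

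In the case $L=-\infty$, for every positive constant $\varepsilon$ the quantity $\frac{dx_-/dv}{(x_++\alpha x_-)x_-}$ is eventually smaller than the fixed real number $-\frac{1+\alpha}{2\alpha}(1+\varepsilon)h_c$, so Eq.~\eqref{eq-Case3-TypeB-1} holds for $v>v_0$ with $v_0$ sufficiently large, and Proposition~\ref{prop-Case3-TypeB} immediately yields Class~B. In the case that $L$ is finite, the key observation is that the map $\varepsilon\mapsto-\frac{1+\alpha}{2\alpha}(1+\varepsilon)h_c$ is continuous and tends to $-\frac{1+\alpha}{2\alpha}h_c$ as $\varepsilon\to 0^+$; since $L$ lies strictly below this limiting value, one can choose a positive constant $\varepsilon$ with $L<-\frac{1+\alpha}{2\alpha}(1+\varepsilon)h_c$ and then a positive $\varepsilon'$ with $L+\varepsilon'<-\frac{1+\alpha}{2\alpha}(1+\varepsilon)h_c$. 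The definition of the limit furnishes $v_0$ such that $\frac{dx_-/dv}{(x_++\alpha x_-)x_-}<L+\varepsilon'$ for all $v>v_0$, whence Eq.~\eqref{eq-Case3-TypeB-1} holds and Proposition~\ref{prop-Case3-TypeB} applies.

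There is essentially no hard part here --- the text already flags this as a result that follows trivially. The only points needing a little care are to allow $L=-\infty$ in the statement (treated separately above) and to verify that the margin $\varepsilon$ can genuinely be taken strictly positive, which rests solely on the strict inequality in Eq.~\eqref{eq-Case3-TypeB-3} together with continuity in $\varepsilon$. This is the same $(\varepsilon,\delta)$-style passage from a limit to a pointwise bound that is used, for instance, in the proof of Corollary~\ref{corollary-Case3-Type2}.
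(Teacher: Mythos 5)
Your proposal is correct and matches the paper's intent: the paper treats this corollary as following trivially from Proposition~\ref{prop-Case3-TypeB} by exactly the $(\varepsilon,\delta)$-passage you spell out, namely using the strict inequality in Eq.~\eqref{eq-Case3-TypeB-3} to pick $\varepsilon>0$ and $v_0$ so that Eq.~\eqref{eq-Case3-TypeB-1} holds for $v>v_0$. Your separate treatment of the $L=-\infty$ case is a welcome bit of extra care, consistent with how the paper handles the analogous situation in Corollary~\ref{corollary-case2-Type2-1}.
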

Note that the condition of Proposition \ref{prop-Case3-TypeB} is broader than that of Corollary \ref{corollary-Case3-TypeB} in the sense that Proposition~\ref{prop-Case3-TypeB} holds in the situation where the left-hand side of Eq.~\eqref{eq-Case3-TypeB-1} does not have a limit for $v\to \infty$.

\subsection{Some example of Case~3 spacetimes}

\begin{table}[tbh]
  \centering
  \caption{Summary of examples of Case~3 spacetimes to realize Types 1-A, 1-B, 2-A, and 2-B.} 
  \scalebox{0.8}{
  \begin{tabular}{c|c|c|c} \hline\hline
    Type~1-A & Type~1-B & Type~2-A & Type~2-B\\ \hline\hline
    \ & 
    \begin{tabular}{c}
      $x_+=\frac{a_+}{v^n},\, x_-=\frac{a_-}{v^k}$\\
      $(1<n\leq k)$
    \end{tabular}&
    \
    \begin{tabular}{c}
      $x_+=\frac{a_+}{v},\,x_-=\frac{a_-}{v^k}$\\
      $\left(2<2k<a_+h_{c}\right)$
    \end{tabular}
    &
    \begin{tabular}{c}
      $x_+=\frac{a_+}{v},\, x_-=\frac{a_-}{v^k}$\\
      $\left(2<{a_+h_c}<2k\right)$
    \end{tabular}
     \\ \hline
    \begin{tabular}{c}
      $x_+=\frac{a_+}{v^n},\, x_-=x_+-\frac{b}{v^k}$\\
      $(0<n<1,\, b>0,\, k>1)$
    \end{tabular}&
    \ &
    \begin{tabular}{c}
      $x_+=\frac{a_+}{v^n},\, x_-=\frac{a_-}{v^k}$\\
      $(0<n<1,\, n\leq k)$
    \end{tabular}&
    \begin{tabular}{c}
      $x_+=\frac{a_+}{v^n},\, x_-=a_-e^{-{v}/{L}}$\\
      $(0<n<1)$
    \end{tabular}\\ \hline
    \begin{tabular}{c}
      $x_\pm=\frac{a_\pm}{v}$\\
      $\left(\frac{2}{h_{c}}<a_+<a_-+\frac{2}{h_c}\right)$
    \end{tabular}&
    \begin{tabular}{c}
      $x_\pm=\frac{a_\pm}{v}$\\
      $\left(\sqrt{a_+}+\sqrt{a_-}<\sqrt{\frac{2}{h_c}}\right)$
    \end{tabular}&
    \begin{tabular}{c}
      $x_\pm=\frac{a_\pm}{v}$\\
      $\left(\sqrt{\frac{2}{h_c}}<\sqrt{a_+}-\sqrt{a_-}\right)$
    \end{tabular}&
    \ \\ \hline
  \end{tabular} 
  }

  \label{table-Case3}
\end{table}

 By applying Propositions~\ref{prop-Case3-Type1}, \ref{prop-Case3-Type2},   \ref{prop-Case3-TypeA},
 and \ref{prop-Case3-TypeB}
 and their corollaries proved above, we can identify the 
 functional forms of $x_\pm(v)$ that lead to Types 1-A, 1-B, 2-A and 2-B, respectively.
 Focusing mainly on the power-law functions $x_+(v) = a_+/v^n$ and $x_-(v)=a_-/v^k$
 (with the exception of the exponential function of the second line of Type~2-B), we have obtained the results as 
 summarized in Table~\ref{table-Case3}.
 Note that for the form $x_+=a_+/v^n$ and $x_-=a_-/v^k$, the condition $n\le k$ must be satisfied since $x_+(v)>x_-(v)>0$
 (when $n=k$, an additional condition $a_+>a_-$ must be required). 
 The proofs of this table are presented in Appendix~\ref{Examples:case3}.
 Similarly to the Type~2 of Case~2, the Type~2-B spacetimes are realized only when
 $x_+(v)$ decays slowly and $x_-(v)$ decays rapidly.
 This is a natural result because the Type~2-B spacetimes in Case~3
 have similar structures as those of Type~2 spacetimes in Case~1,
 and thus, the Type~2-B spacetimes in Case~3 is expected to be realized
 only when $x_+(v)$ decays slowly and the inner AH
 is sufficiently close to a null surface.
 On the other hand, the Type~2-A spacetimes occur when 
 the decay rate of $x_+(v)$ is slow while the decay rate of $x_-(v)$ is not fast enough to realize the Type~2-B spacetimes.
 The Type~1-B spacetimes are realized when the decay rates of both $x_\pm(v)$ are fast, and 
 the Type~1-A spacetimes occur when the decay of $x_\pm(v)$ is slow and
 the inner and outer AHs are located at sufficiently close positions.
 This result of Type~1-A (i.e., the type without the event horizon) would be natural, 
 because the effects of the trapped region are expected to become weak in such situations.
 We do not expect that the condition $2/h_c<a_+<a_-+2/h_c$ in the case of $x_\pm=a_\pm/v$ in Type~1-A would be
 optimal, because as we see in the next section, an approximate study 
 with $h(v,x)=h_c$ gives a more widely applicable condition,
 \begin{equation}
 \sqrt{a_+}-\sqrt{a_-} < \sqrt{\frac{2}{h_c}} < \sqrt{a_+}+\sqrt{a_-}.
 \label{condition-Type-1A-xpm-eq-apm-db-v}
 \end{equation}
 For this reason, the sufficient conditions derived here would include a room for improvement.

\pagebreak

%
%
\section{Photon behavior in solvable models}
\label{Sec:Solvable-models}

In the previous sections, we have derived the general conditions to specify which Penrose diagram is realized for a given metric.
In those analyses, it is difficult to understand the concrete behavior of outgoing null geodesics
for a given situation. In this section, we consider models for which 
solutions of the outgoing null geodesics are given in terms of elementary functions.
For this purpose, we assume the constancy of $h(v,x)$ (i.e., $h(v,x)=h_c$) in this section,
and solve the equation
\begin{equation}
\frac{dx}{dv} \, = \, \frac{h_c}{2}\left(x-x_+(v)\right)\left(x-x_-(v)\right)
\label{Eq:outgoing-null-geodesics-constant-hc}
\end{equation}
for particular choices of $x_\pm (v)$. 
Because $h(v,x)$ must asymptote to zero for $x\to \infty$,
this assumption is physically unrealistic in the sense that it must break down
at large distances. Hence, the analysis here must be
regarded as an approximate one that can only be applicable 
to the neighborhood of $x=0$. However, the analyses here provide us with a lot of physical indications
to understand what is actually going on in the spacetimes under consideration.
In particular, we consider the structure of outgoing null geodesics in the phase space $(v,\,x)$
adopting the concepts in the area of dynamical systems.

\subsection{Preparation}
\label{Sec:Solvable-preparation}

As a preparation, we introduce three differential equations for the function $z=z(T)$.
All equations of outgoing null geodesics considered in this section 
are reduced to one of these three equations after suitable transformations.

The first equation is
\begin{equation}
\frac{dz}{dT} \ = \ z^2-1.
\label{Equation-z-T-1}
\end{equation}
This equation is closely related to the famous {\it logistic equation}.
There are two constant solutions, $z=\pm 1$, which are called the {\it fixed points}.
It is convenient to divide the phase space $(T,z)$ into the three regions,
$z>1$ (region I), $-1<z<1$ (region II), and $z<-1$ (region III).
The solutions in these regions are
\begin{subequations}
\begin{eqnarray}
z & = & -\coth(T-T_d)  \qquad  (T<T_d),\\
z & = & -\tanh(T-T_c)  \qquad  (-\infty<T<\infty),\\
z & = & -\coth(T-T_e)  \qquad  (T_e<T),
\end{eqnarray}
\end{subequations}
respectively. In the region I, the solution $z(T)$ monotonically increases from unity
and diverges at $T=T_d$. In the region II, the solution $z(T)$ asymptotes to
$\pm 1$ for $T\to \mp \infty$.
In the region III, the solution $z(T)$ emerges at $T=T_e$ from $z=-\infty$ and asymptotes to
$z\to -1$ for $T\to \infty$.  
From these solutions, it is found that the fixed point $z=1$ is a {\it repeller},
while the point $z=-1$ is an {\it attractor}. 

The second equation is
\begin{equation}
\frac{dz}{dT} \ = \ z^2.
\label{Equation-z-T-2}
\end{equation}
There is one constant solution, $z=0$. This fixed point can be regarded as
the special case where the attractor and repeller become degenerate.
It is convenient to divide the phase space $(T,z)$ into the two regions
$z>0$ (region I) and  $z<0$ (region III), where the region II does not appear
as a result of the degeneracy. The solutions in these two regions are
\begin{subequations}
\begin{eqnarray}
z & = & \frac{1}{T_d-T}  \qquad  (T<T_d),\\
z & = & \frac{1}{T_e-T}  \qquad  (T_e<T).
\end{eqnarray}
\end{subequations}
In the region I, the solution $z(T)$ monotonically increases from zero
and diverges at $T=T_d$, while in the region III, 
the solution $z(T)$ emerges at $T=T_e$ from $z=-\infty$ and asymptotes to
$z\to 0$ for $T\to \infty$.  
From this behavior, the fixed point $z=0$ is called the {\it saddle point}.

The third equation is
\begin{equation}
\frac{dz}{dT} \ = \ z^2+1.
\label{Equation-z-T-3}
\end{equation}
This equation does not have any constant solution, and hence, no fixed point is present.
The solution is
\begin{equation}
z \, = \, \tan (T-T_c) \qquad \left(T_c-\frac{\pi}{2}<T<T_c+\frac{\pi}{2}\right).
\end{equation}
This solution monotonically increases from $-\infty$ to $+\infty$, and $z(T)$ crosses zero at
$T=T_c$.

\subsection{Models of $x_\pm(v)=O(1/v^n)$ with $0<n<1$}
\label{Sec:Solvable-0<n<1}

We begin with the models where the decay of $x_\pm (v)$ is relatively slow,
i.e., $x_\pm(v)=O(1/v^n)$ with $0<n<1$. We choose 
\begin{equation}
x_\pm(v) \, = \,
\frac{1}{v^n}
\left[\frac{a_++a_-}{2}+\frac{n}{h_cv^{1-n}}
\pm \sqrt{\left(\frac{a_++a_-}{2}+\frac{n}{h_cv^{1-n}}\right)^2-a_+a_-}
\right].
\end{equation}
For large $v$, the behavior of $x_\pm(v)$ is approximately given as
\begin{equation}
x_\pm(v) \, = \,
\frac{a_\pm}{v^n}
\left[
1\pm \frac{2n}{h_c(a_+-a_-)v^{1-n}}+O\left(\frac{1}{v^{2-2n}}\right)
\right].
\end{equation} 
By the transformation
\begin{equation}
x(v) \, = \, \frac{1}{2v^n}\left[(a_+-a_-)z(v)+(a_++a_-)\right]
\end{equation}
and 
\begin{equation}
T \ = \ \frac{h_c(a_+-a_-)}{4(1-n)}v^{1-n},
\end{equation}
Eq.~\eqref{Eq:outgoing-null-geodesics-constant-hc} is reduced to the first equation
of Sec.~\ref{Sec:Solvable-preparation}, i.e.,  Eq.~\eqref{Equation-z-T-1}.

The solution that corresponds to the repeller $z=+1$ is given by $x_{\rm EH}(v)=a_+/v^{n}$,
and this is the event horizon. Since the repulsive property is kept
in the phase space $(v,x)$ as well, it would be appropriate to also call it the {\it repulsive separatrix}.
The solution that corresponds to the attractor $z=-1$ is given by $x_{\rm A}(v)=a_-/v^{n}$.
Because of the attractive property of this solution
in the phase space $(v,x)$, we call it the {\it attractive separatrix}.
The solutions that correspond to the regions I, II, and III are
\begin{subequations}
\begin{eqnarray}
x & = & \frac{1}{2v^n}\left\{-(a_+-a_-)\coth\left[\frac{h_c(a_+-a_-)}{4(1-n)}\left(v^{1-n}-v_d^{1-n}\right)\right] +(a_++a_-)\right\}  \quad  (v<v_d),\\
x & = & \frac{1}{2v^n}\left\{-(a_+-a_-)\tanh\left[\frac{h_c(a_+-a_-)}{4(1-n)}\left(v^{1-n}-v_c^{1-n}\right)\right] +(a_++a_-)\right\}  \quad  (0<v<\infty),\\
x & = & \frac{1}{2v^n}\left\{-(a_+-a_-)\coth\left[\frac{h_c(a_+-a_-)}{4(1-n)}\left(v^{1-n}-v_e^{1-n}\right)\right] +(a_++a_-)\right\}  \quad  (v_e<v).
\end{eqnarray}
\end{subequations}

 \begin{figure}[t]
   \centering
   \includegraphics[width=0.4\linewidth]{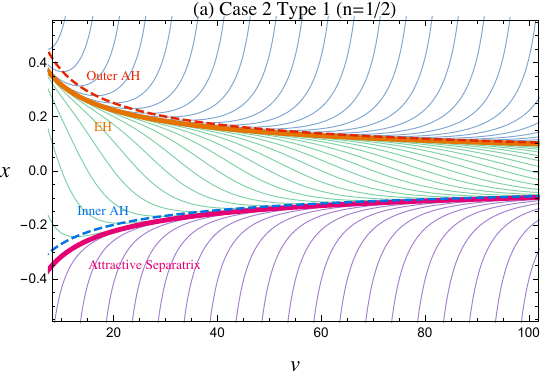}
   \includegraphics[width=0.4\linewidth]{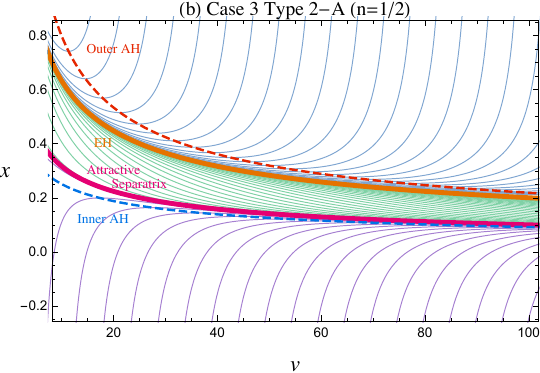}
   \caption{The behavior of null geodesic congruence 
   in the models with $n=1/2$ in the phase space $(v,x)$. The event horizon and the attractive separatrix
   are shown with thick curves, and the outer and inner AHs are shown by dashed curves.
   (a) The model with $a_\pm = \pm 1$ and $h_c=1$. This corresponds to Type~1 spacetime of Case~2.
   (b) The model with $a_+ = 2$, $a_-=1$, and $h_c=1$. This corresponds to Type~2-A spacetime of Case~3.}
     \label{n-onehalf-Case2-Case3}
 \end{figure}

 Figure~\ref{n-onehalf-Case2-Case3} shows the behavior of outgoing null geodesics
 in the phase space $(v,\,x)$  for the choice
 $n=1/2$, $a_\pm = \pm 1$, and $h_c=1$ (left panel), 
 and $n=1/2$, $a_+ = 2$, $a_-=1$, and $h_c=1$ (right panel).\footnote{Strictly speaking,
 in these examples we adopt $|a_-|^{2/3}$ as the unit of the length, and normalized
 $a_+$ and $h_c$ with $|a_-|^{2/3}$ appropriately.}
 The event horizon and the attractive separatrix are shown with thick curves,
 and the outer and inner AHs are indicated with dashed curves.
 Since $a_-$ is negative and positive in the left and right panels, the spacetimes in the cases 2 and 3 are considered, respectively.
 In both cases, the outer AH is located slightly outside the event horizon, and their locations
 approximately coincide 
 in the sense that $\lim_{v\to\infty}x_+/x_{\rm EH}=1$ holds.
 In the cases 2 and 3, the attractive separatrices are located slightly inside and outside the inner AH,
 respectively, and in both cases, their locations approximately coincide
 in the sense that $\lim_{v\to\infty}x_-/x_{\rm A}=1$ holds.
 Since the attractive separatrix strongly attracts other outgoing null geodesics,
 in the left panel, all outgoing null geodesics between the event horizon and the inner AH
 eventually cross the inner AH as they approach the attractive separatrix.
 This indicates that the Penrose diagram is Type~1 of Case~2.
 Similarly, in the right panel,
 all outgoing null geodesics inside the inner AH
 eventually cross the inner AH as they approach the attractive separatrix.
 This indicates that the Penrose diagram is Type~2-A of Case~3.

 To summarize, when the decay of $x_\pm(v)$ is relatively slow,
 there appears an important attractive separatrix solution, $x=x_{\rm A}(v)$,
 in the neighborhood of the inner AH, 
 in addition to the event horizon $x=x_{\rm EH}(v)$ in the neighborhood of the outer AH. 
 The strong attractive property of $x=x_{\rm A}(v)$ is one of the important factors 
 to determine the type of the Penrose diagrams.

\subsection{Models of $x_\pm(v)=O(1/v^n)$ with $n=1$}
\label{Sec:Solvable-n=1}

We now study the models of $x_\pm(v)=O(1/v^n)$ with $n=1$. These models
are particularly interesting in that they show a variety of possible types in Case~3.
As the position of the AHs, we choose
\begin{equation}
x_\pm = \frac{a_\pm}{v}.
\end{equation}
Introducing the new variable $y$ with $x(v)=y(v)/v$, we have
the equation
\begin{equation}
\frac{2}{h_c}v\frac{dy}{dv} \ = \ (y-b)^2 - \frac{D}{4},
\label{Eq:null-geodesic-y}
\end{equation}
where
\begin{equation}
b = \frac{a_++a_-}{2}-\frac{1}{h_c}\qquad \textrm{and}\qquad D \, = \, 4(b^2-a_+a_-).
\label{Def:b-and-D}
\end{equation}
For later convenience, 
we define 
\begin{equation}
d \, = \, \frac{\sqrt{|D|}}{2}.
\label{Def:d}
\end{equation}
We discuss the properties of the spacetime
for the cases $D>0$, $D=0$, and $D<0$, one by one.

\subsubsection{Models with $D>0$}
\label{Sec:Solvable-D>0}

If $D$ is positive, the right-hand side of Eq.~\eqref{Eq:null-geodesic-y}
can be written as $[y-(b+d)][y-(b-d)]$. Then, introducing $z(v)$ with
\begin{equation}
y(v)\, = \, b + d\cdot z(v)
\label{y-z-relation}
\end{equation}
and $T$ with
\begin{equation}
\frac{v}{v_0} \, = \, \exp\left[\frac{2}{h_c d}(T-T_0)\right]
\label{v-T-relation}
\end{equation}
for some constants $v_0$ and $T_0$, 
Eq.~\eqref{Eq:null-geodesic-y} is reduced to the first equation
of Sec.~\ref{Sec:Solvable-preparation}, i.e.,  Eq.~\eqref{Equation-z-T-1}.

The solution that corresponds to the repeller $z=+1$ is given by $x_{\rm EH}(v)=(b+d)/v$,
which is the event horizon. 
The solution that corresponds to the attractor $z=-1$ is given by $x=x_{\rm A}(v)=(b-d)/v$,
which gives the attractive separatrix.
The solutions that correspond to the regions I, II, and III are
\begin{subequations}
\begin{eqnarray}
x & = &  \frac{1}{v}\left[b-d\frac{(v/v_d)^{h_cd}+1}{(v/v_d)^{h_cd}-1}\right]  \quad  (v<v_d),\\
x & = &  \frac{1}{v}\left[b-d\frac{(v/v_c)^{h_cd}-1}{(v/v_c)^{h_cd}+1}\right]   \quad  (0<v<\infty),\\
x & = &  \frac{1}{v}\left[b-d\frac{(v/v_e)^{h_cd}+1}{(v/v_e)^{h_cd}-1}\right]   \quad  (v_e<v).
\end{eqnarray}
\end{subequations}
In contrast to the models with $0<n<1$, the null geodesics approach
the attractive separatrix with a power-law behavior, $x/x_{\rm A}-1\propto 1/v^{h_cd}$.
Depending on the value of $h_cd$, the strength of the attractive property of the separatrix 
$x=x_{\rm A}(v)$ can be very weak.

In the case of $a_-<0$ (i.e., in Case~2), the positivity of $D$ is guaranteed and
the resultant Penrose diagram is Type~1.
We have to be careful in the case of $a_->0$ (i.e., in Case~3), 
since there are two possibilities to realize $D>0$:
\begin{equation}
\sqrt{a_+}-\sqrt{a_-}>\sqrt{\frac{2}{h_c}}, \qquad \textrm{and} \qquad
\sqrt{a_+}+\sqrt{a_-}<\sqrt{\frac{2}{h_c}}.
\label{condition-positiveD}
\end{equation}
In these two cases, the property of the spacetime changes a lot.
In the former case, we have $b>\sqrt{a_+a_-}>a_-$, and hence,
the event horizon exists outside the inner AH 
since $x_{\rm EH} = (b+d)/v > a_-/v = x_-(v)$ holds.
On the other hand, in the latter case, we have $b<-\sqrt{a_+a_-}<0$,
and hence, $x_{\rm EH} = (b+d)/v <0$ since $|b|>d>0$.
Therefore, the event horizon exists inside the inner AH.

 \begin{figure}[t]
   \centering
   \includegraphics[width=0.4\linewidth]{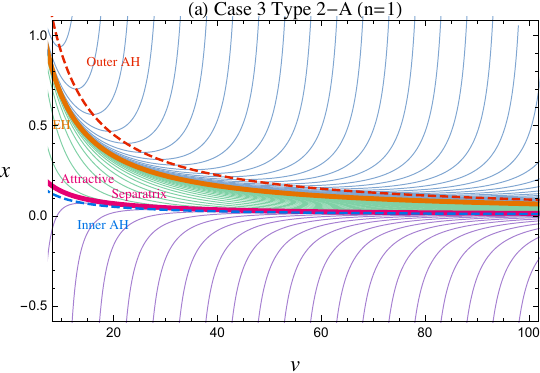}
   \includegraphics[width=0.4\linewidth]{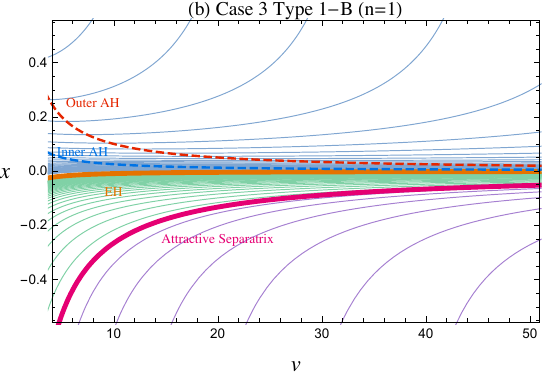}
   \includegraphics[width=0.4\linewidth]{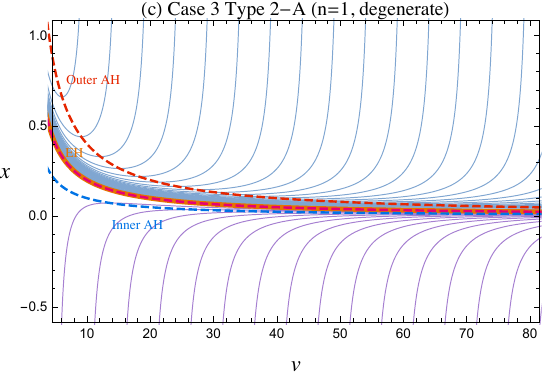}
   \includegraphics[width=0.4\linewidth]{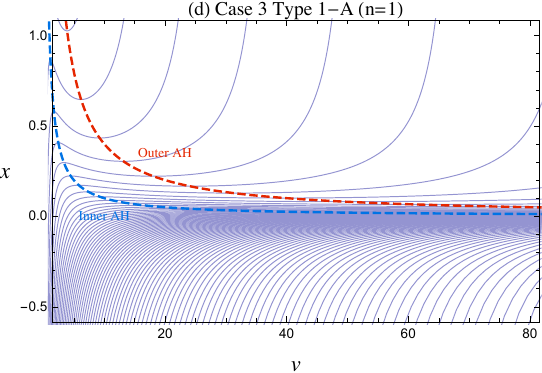}
   \caption{The behavior of outgoing radial null geodesics
   in the models with $n=1$ in the phase space $(v,x)$. 
   (a) The model with $a_+ = 9$, $a_-= 1$ and $h_c=2$. This corresponds to Type~2-A spacetime of Case~3.
   (b) The model with $a_+ = 1$, $a_-=1/4$, and $h_c=1/2$. This corresponds to Type~1-B spacetime of Case~3.
   (c) The model with $a_+ = 4$, $a_-= 1$ and $h_c=2$ where the event horizon becomes a degenerate separatrix. This corresponds to Type~2-A spacetime of Case~3.
   (d) The model with $a_+ = 4$, $a_-= 1$ and $h_c=1/2$. This corresponds to Type~1-A spacetime of Case~3.
   }
     \label{n-one-Case3}
 \end{figure}

 Figure~\ref{n-one-Case3} shows the behavior of outgoing null geodesics
 in the phase space $(v,\,x)$  for the choice
 $n=1$, $a_+ = 9$, $a_-=1$, and $h_c=2$ (top-left panel), 
 and $n=1$, $a_+ = 1$, $a_-=1/4$, and $h_c=1/2$ (top-right panel).
 These top-left and top-right panels correspond to the former and latter cases of Eq.~\eqref{condition-positiveD}.
 In contrast to the case $0<n<1$, 
 the locations of the outer AH and the event horizon do not approximately
 coincide in the sense that $\lim_{v\to\infty}x_+/x_{\rm EH}$ is not unity.
 Similarly, the locations of the inner AH and the attractive separatrix do not approximately coincide, either.
 In the top-left panel, all outgoing null geodesics inside the inner AH
 eventually cross the inner AH, and the event horizon exists
 outside the inner AH. This indicates that the spacetime is of Type~2-A of Case~3.
 In the top-right panel, the event horizon is confirmed to exist inside the inner AH,
 and hence, the Penrose diagram is of Type~1-B of Case~3.

\subsubsection{Models with $D=0$}
\label{Sec:Solvable-Deq0}

We now consider the models with $D=0$, which is possible only in Case~3.
Introducing $z(v)$ with
\begin{equation}
y(v)\, = \, z(v)+b
\end{equation}
and $T$ with
\begin{equation}
\frac{v}{v_0} \, = \, \exp\left[\frac{2}{h_c}(T-T_0)\right]
\end{equation}
for some constants $v_0$ and $T_0$, 
Eq.~\eqref{Eq:null-geodesic-y} is reduced to the second equation
of Sec.~\ref{Sec:Solvable-preparation}, i.e.,  Eq.~\eqref{Equation-z-T-2}.

Equation~\eqref{Equation-z-T-2} has the saddle point solution, $z=0$, and the
corresponding solution is given by $x_{\rm EH}(v)=b/v$.
Here, the event horizon becomes the degenerate separatrix:
The outgoing null geodesics outside the event horizon escape to infinity,
while those inside the event horizon approach it.
In fact, the solutions that correspond to the regions I and III are
\begin{subequations}
\begin{eqnarray}
x & = &  \frac{1}{v}\left[b-\frac{2/h_c}{\log(v/v_d)}\right]  \quad  (v<v_d),\\
x & = &  \frac{1}{v}\left[b-\frac{2/h_c}{\log(v/v_e)}\right]   \quad  (v_e<v).\label{Solution-D=0-insideBH}
\end{eqnarray}
\end{subequations}
Since the null geodesics inside the event horizon approach
the event horizon with a logarithmic behavior, $x/x_{\rm EH}-1\propto 1/\log v$,
the strength of the attractive property of the degenerate separatrix 
$x=x_{\rm A}(v)$ is extremely weak.

Similarly to the models with $D>0$,  
there are two possibilities to realize $D=0$:
\begin{equation}
\sqrt{a_+}\mp \sqrt{a_-}=\sqrt{\frac{2}{h_c}}.
\label{condition-zeroD}
\end{equation}
In the cases of upper and lower signs, the values of $b$ are positive and negative,
respectively. The type of the Penrose diagram is not changed from the models with $D>0$, and
the cases of upper and lower signs correspond to the Types 2-A and 1-B, respectively. 
The bottom-left panel of Fig.~\ref{n-one-Case3}
shows the case of $a_+=4$, $a_-=1$, and $h_c=2$, which corresponds to the upper sign 
of Eq.~\eqref{condition-zeroD}.

\subsubsection{Models with $D<0$}

We now focus on the models with $D<0$, which is possible only in Case~3.
The condition $D<0$ is equivalent to the one presented in Eq.~\eqref{condition-Type-1A-xpm-eq-apm-db-v}.
With the same transformation as Eqs.~\eqref{y-z-relation} and $\eqref{v-T-relation}$, 
Eq.~\eqref{Eq:null-geodesic-y} is reduced to the third equation
of Sec.~\ref{Sec:Solvable-preparation}, i.e.,  
Eq.~\eqref{Equation-z-T-3}.
Since Eq.~\eqref{Equation-z-T-3} has no fixed point solution, 
there is no event horizon in these models.
In fact, the solutions are
\begin{subequations}
\begin{eqnarray}
x & = &  \frac{1}{v}\left\{b+d\tan\left[\frac{h_cd}{2}\log\left(\frac{v}{v_c}\right)\right]\right\}  \quad 
\left(e^{-\pi/h_cd}<\frac{v}{v_c}<e^{\pi/h_cd}\right),
\end{eqnarray}
\end{subequations}
each of which evolves from $-\infty$ to $\infty$. 
The bottom-right panel of Fig.~\ref{n-one-Case3}
shows the case of $a_+=4$, $a_-=1$, and $h_c=1/2$.
The worldline $x=x(v)$ of each photon has a local maximum and a local minimum, and
they correspond to the positions of the inner and outer AHs, respectively.

\subsection{Models of $x_\pm(v)=O(1/v^n)$ with $n>1$}
\label{Sec:Solvable-n>1}

Next, we consider the models where the decay of $x_\pm (v)$ is relatively fast,
i.e., $x_\pm(v)=O(1/v^n)$ with $n>1$. We choose 
\begin{equation}
x_\pm(v) \, = \,
\pm\frac{a}{v^n}
+\frac{h_ca^2}{2(2n-1)v^{2n-1}}.
\end{equation}
Since $x_-(v)<0$, this spacetime is of Case~2.
By the transformation
\begin{equation}
x(v) \, = \, z(v)+\frac{h_ca^2}{2(2n-1)v^{2n-1}}
\end{equation}
and 
\begin{equation}
T \ = \ \frac{h_c}{2}v,
\end{equation}
Eq.~\eqref{Eq:outgoing-null-geodesics-constant-hc} is reduced to the second equation
of Sec.~\ref{Sec:Solvable-preparation}, i.e.,  Eq.~\eqref{Equation-z-T-2}.
When $x_\pm (v)$ decay rapidly, the effects of $x_\pm(v)$
in Eq.~\eqref{Eq:outgoing-null-geodesics-constant-hc} are weak, and
this would be the reason why the equation is reduced to the second equation
of Sec.~\ref{Sec:Solvable-preparation}.

 \begin{figure}[t]
   \centering
   \includegraphics[width=0.5\linewidth]{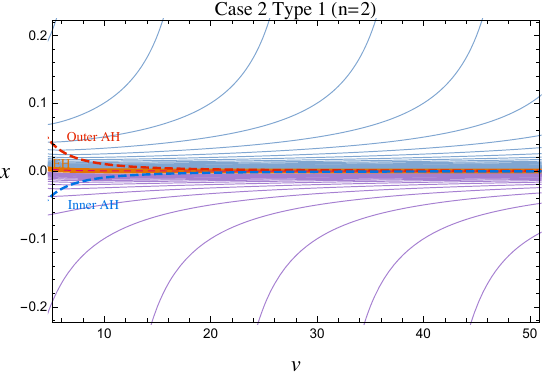}
   \caption{The behavior of null geodesic congruence 
   in the models with $n=2$, $a=1$, $h_c=2$ in the phase space $(v,x)$. 
   This corresponds to Type~1 spacetime of Case~2.}
     \label{n-two-Case2-Type1-degenerate}
 \end{figure}

Equation~\eqref{Equation-z-T-2} has the saddle point solution, $z=0$, and the corresponding solution is given by
\begin{equation}
x_{\rm EH}\,=\,\frac{h_ca^2}{2(2n-1)v^{2n-1}}.
\end{equation}
The event horizon is located between the outer and inner AHs,
and $x_{\rm EH}$ decays much faster than $x_\pm (v)$. 
Similarly to the case of $D=0$ and $n=1$, the event horizon becomes a degenerate separatrix.
The solutions that correspond to the regions I and III are
\begin{subequations}
\begin{eqnarray}
x & = &  \frac{2/h_c}{v_d-v}+\frac{h_ca^2}{2(2n-1)v^{2n-1}}  \quad  (v<v_d),\\
x & = &  \frac{2/h_c}{v_e-v}+\frac{h_ca^2}{2(2n-1)v^{2n-1}}   \quad  (v_e<v).
\end{eqnarray}
\end{subequations}
The outgoing null geodesics inside the event horizon
approach the event horizon as $x_{\rm EH}-x \approx 2/h_cv$.
The decay rate of $x(v)$ is much slower than those of $x_{\rm EH}(v)$ or $x_\pm(v)$. 
This means that the value of $x(v)-x_-(v)$ of any outgoing null geodesic
between the event and the inner AH 
 eventually becomes negative as $v$ is increased, and thus, it crosses the inner AH.
Therefore, the spacetime belongs to Type~1 of Case~2.
Figure~\ref{n-two-Case2-Type1-degenerate} shows the behavior of outgoing radial null geodesics
in the models with $n=2$, $a=1$, $h_c=2$ in the phase space $(v,x)$.

\subsection{Models realizing Type~2 of Case~2 and Type~2-B of Case~3}
\label{Sec:Solvable-Type2Case2-Type2BCase3}

Our last model is the one that realizes Type~2 of Case~2 and Type~2-B of Case~3.
We choose
\begin{equation}
x_\pm(v) \ = \ \frac{a}{2v}\left[1-\frac{4b}{a(k-2)v^{k-1}}\pm \sqrt{1-\frac{4kb}{a(k-2)v^{k-1}}}\right],
\end{equation}
where $k>2$ is assumed. 
For large $v$, the behavior of $x_\pm(v)$ are approximately given as
\begin{equation}
x_+(v) \, = \,
\frac{1}{v}
\left[
a - \frac{(k+2)b}{(k-2)v^{k-1}}+O\left(\frac{1}{v^{2k-2}}\right)
\right],
\end{equation} 
\begin{equation}
x_-(v) \, = \,
\frac{1}{v}
\left[
\frac{b}{v^{k-1}} +O\left(\frac{1}{v^{2k-2}}\right)
\right].
\end{equation} 
The formula for $x_-(v)$ indicates that if $b$ is negative and positive, the spacetime is of Cases 2 and 3, respectively.
We also require $h_c$ to be given by
\begin{equation}
h_c=\frac{4}{a}.
\end{equation}
By the transformation
\begin{equation}
x(v) \, = \, \frac{a}{4v}\left[z(v)+1-\frac{8b}{a(k-2)v^{k-1}}\right]
\end{equation}
and 
\begin{equation}
\frac{v}{v_0} \, = \, \exp[2(T-T_0)]
\end{equation}
for some constants $v_0$ and $T_0$, 
Eq.~\eqref{Eq:outgoing-null-geodesics-constant-hc} is reduced to the first equation
of Sec.~\ref{Sec:Solvable-preparation}, i.e.,  Eq.~\eqref{Equation-z-T-1}.

 \begin{figure}[t]
   \centering
   \includegraphics[width=0.4\linewidth]{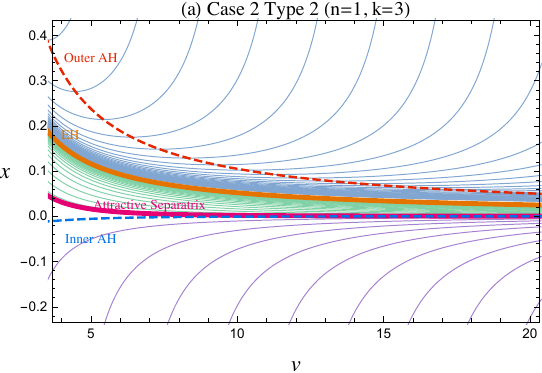}
   \includegraphics[width=0.4\linewidth]{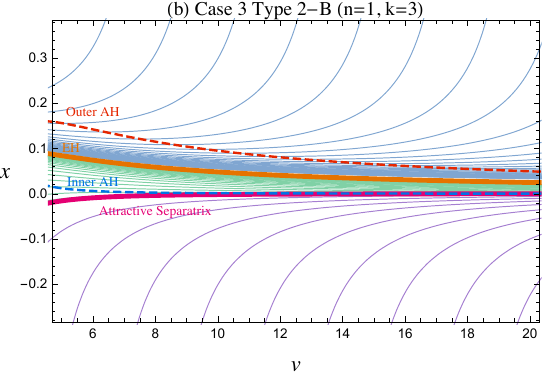}
   \caption{The behavior of null geodesic congruence 
   in the models that realize the spacetimes of Type~2 of Case~2 [left panel (a)]
   and of Type~2-B of Case~3 [right panel (b)].
   The parameter choices are: (a) $a=1$, $b=-1$, and $k=3$; and (b) $a=1$, $b=1$, and $k=3$.}
     \label{n-two-Case2}
 \end{figure}

The solution that corresponds to the repeller $z=+1$ is given by 
\begin{equation}
x_{\rm EH}(v)\,=\,\frac{1}{2v}\left[a-\frac{4b}{(k-2)v^{k-1}}\right],
\end{equation}
which is the event horizon. 
The solution that corresponds to the attractor $z=-1$ is given by 
\begin{equation}
x_{\rm A}(v)\,=\, -\frac{2b}{(k-2)v^k},
\end{equation}
which gives the attractive separatrix.
The solutions that correspond to the regions I, II, and III are
\begin{subequations}
\begin{eqnarray}
x & = &  \frac{1}{2v}\left[\frac{av_d}{v_d-v}-\frac{4b}{(k-2)v^{k-1}}\right]  \quad  (v<v_d),\\
x & = &  \frac{1}{2v}\left[\frac{av_c}{v_c+v}-\frac{4b}{(k-2)v^{k-1}}\right]    \quad  (0<v<\infty),\\
x & = &  \frac{1}{2v}\left[\frac{av_e}{v_e-v}-\frac{4b}{(k-2)v^{k-1}}\right]   \quad  (v_e<v).
\end{eqnarray}
\end{subequations}
In the case that $b$ is negative, $x_{\rm A}(v)$ is positive while $x_-(v)$ is negative.
Therefore, all outgoing null geodesics between the event horizon
and the attractive separatrix never cross the inner AH.
This indicates that the spacetime belongs to Type~2 of Case~2.
Interestingly, the attractive separatrix coincides with the null geodesic $U=U^{(-)}_\infty$
defined in Sec.~\ref{Subsec:compactified-retarded}. 
Similarly, if $b$ is positive, $x_{\rm A}(v)$ is negative while $x_-(v)$ is positive, 
and hence, all outgoing null geodesics inside the attractive separatrix
never arrive at the inner AH. 
Since the inner AH is located inside the event horizon, the spacetime belongs to Type~2-B of Case~3.
Again, the attractive separatrix coincides with the null geodesic $U=U^{(-)}_\infty$. 
The left and right panels of Fig.~\ref{n-two-Case2} show the examples of the phase space $(v,x)$
for $a=1$, $b=-1$, and $k=3$ (i.e., Type~2 of Case~2)
and for $a=1$, $b=1$, and $k=3$ (i.e., Type~2-B of Case~3).
As shown in these examples, the relative locations of the attractive separatrix and the inner AH are also an important factor to determine the type of the Penrose diagram.

\subsection{On the structural stability of these models}

From the models presented here, 
we have seen the various important factors to determine the types of Penrose diagrams:
(i) the existence, nonexistence, and degeneracy of the event horizon (the repulsive separatrix)
and the attractive separatrix; 
(ii) the relative positions of the separatrices and the inner AH;
and (iii) the strength of the attractive property of the attractive or degenerate separatrix.
Since the discussed models are based on the particular choices of $x_\pm(v)$ and 
the constancy of $h(v,x)$, 
whether the obtained view here is not changed when these assumptions are violated
must be questioned.
Although we do not expect that modifications in the subleading orders in $x_\pm(v)$ and $h(v,x)$
would change these structures completely,
it is not clear whether all equations for $x(v)$ can be reduced to the three equations
introduced in Sec.~\ref{Sec:Solvable-preparation} particularly
in the cases where $h(v,x)$ depends both on $v$ and $x$.
Without this transformation, the definition of the attractive separatrix
may become vague, although there would be a clear definition
of the repulsive separatrix, i.e., the event horizon.
Therefore, there are many things to be discussed concerning the generality
of the obtained view in this section, but here, we postpone these issues as a future work.

 \newpage
 \section{\label{sec-extendibility}Extendibility of spacetimes}

In this section, we briefly discuss the extendibility of spacetimes using the
solvable models in the previous section. 
 As we have shown in Secs.~\ref{sec-case1}, \ref{sec-case2} and \ref{sec-case3}, spacetimes of all types except Type~1-A 
 in Case 3  have the event horizons, and 
 there exist the future boundaries $\mathcal{H}_{\rm I}$ given by $v=\infty$ and $r=r_c$ which 
 all outgoing null geodesics in the black hole region asymptotically approach as $v\rightarrow\infty$.  
 Here, we discuss whether $\mathcal{H}_{\rm I}$ is extendible or not, which 
 is equivalent to whether $\mathcal{H}_{\rm I}$ is a Cauchy horizon or future null infinity. 
 For this purpose, we calculate the affine parameters $\lambda$ of outgoing null geodesics.
 The spacetime is extendible if $\lambda$ remains finite in the limit $v\to\infty$, 
 while the spacetime is not extendible if $\lambda$ diverges.
 
 Supposing that the outgoing null geodesics are parametrized as $(v,\,x)=(v(\lambda),\,x(\lambda))$,
 we derive the formula for calculating the affine parameter $\lambda$ of an outgoing null geodesic given by $x=x(v)$. 
 In the metric of Eq.~\eqref{eq-metric}, the $v$-component of the geodesic equations is
 \begin{equation}
 \ddot{v}\,=\,-\left(\frac{A_{,v}}{A}+fA_{,x}+\frac12f_{,x}A\right)\dot{v}^2,
 \label{Eq:geodesic-v-component}
 \end{equation}
 where dot denotes the derivative with respect to $\lambda$. 
 Because the equation for the outgoing null geodesics is given by Eq.~\eqref{eq-outgoing}, we have
 \begin{equation}
 \frac{dA}{dv}\, = \, A_{,v}+A_{,x}\frac{dx}{dv}\, = \, A_{,v}+\frac12 AA_{,x}f,
 \end{equation}
 and thus, Eq.~\eqref{Eq:geodesic-v-component} is rewritten as
 \begin{equation}
 \frac{\ddot{v}}{\dot{v}}\, = \, -\frac{\dot{A}}{A} -\frac12H_{,x}\dot{v},
 \end{equation}
 with
 \begin{equation}
 H(v,x)\, :=\, A(v,x)f(v,x) \, = \, h(v,x)\left(x-x_+(v)\right)\left(x-x_-(v)\right).
 \label{Def:H}
 \end{equation}
 Integrating this equation, we have
 \begin{equation}
 \dot{v} \, = \, \dot{v}(\lambda_0) \frac{A(\lambda_0)}{A(\lambda)} \exp\left[-\frac12\int_{v_0}^{v} H_{,x}(v^{\prime\prime},x(v^{\prime\prime}))dv^{\prime\prime}\right],
 \end{equation}
 where $A(\lambda)$ abbreviates $A(v(\lambda),x(\lambda))$, the initial position is $(v,\,x)=(v_0,\,x_0)$,
 and the affine parameter at the initial position is $\lambda_0$. 
 Taking the inverse of both sides and integrating again, we obtain
 \begin{equation}
 \lambda-\lambda_0 \, = \, \frac{1}{\dot{v}(\lambda_0)A(\lambda_0)}
 \int_{v_0}^{v}A(\lambda) \exp\left[\frac12\int_{v_0}^{v^\prime} H_{,x}(v^{\prime\prime},x(v^{\prime\prime}))dv^{\prime\prime}\right]dv^\prime.
 \label{lambda-formula}
 \end{equation}
 Since all outgoing null geodesics converge as $x\to 0$ in the black hole, the factor $A(\lambda)$
 also converges to a finite value $A_\infty = \lim_{v\to\infty}A(v,x(v))$, 
 and hence, is bounded as $(1-\varepsilon)A_\infty<A(\lambda)<(1+\varepsilon)A_\infty$ for an arbitrary positive $\varepsilon$ 
 in the range $v\ge v_0$ if we adopt a sufficiently large $v_0$. This means that $A(\lambda)$ 
 does not affect whether $\lambda$ diverges or not.
 In order for the affine parameter $\lambda$ to remain finite in the limit $v\to\infty$, 
 the integral in the argument of the exponential function must diverge to $-\infty$ sufficiently rapidly.
 
Below, we derive a sufficient condition for the finiteness of $\lambda$.
Since the increase in $\lambda$ is bounded as
 \begin{equation}
 \lambda-\lambda_0 \, < \, \frac{(1+\varepsilon)A_\infty}{\dot{v}(\lambda_0)A(\lambda_0)}
 \int_{v_0}^{v} \exp\left[\frac12\int_{v_0}^{v^\prime} H_{,x}(v^{\prime\prime},x(v^{\prime\prime}))dv^{\prime\prime}\right]dv^\prime,
 \label{lambda-bound}
 \end{equation}
 we discuss the condition for the integral in the right-hand side to be finite. 
From Eq.~\eqref{Def:H}, $H_{,x}$ is expressed as
\begin{multline}
H_{,x}  \,=\, h_c(2x-x_+-x_-)
+
2\left[h+\frac12h_{,x}x-h_c\right]x
\\
-\left[h+h_{,x}x-h_c\right]x_+
-\left[h+h_{,x}(x-x_+)-h_c\right]x_-.
\end{multline}
Since all three functions in the square brackets converge to zero, it is possible to satisfy the inequality
\begin{eqnarray}
H_{,x}& \,<\, & h_c(2x-x_+-x_-)
+h_c\left(2\varepsilon|x|+\varepsilon_+x_++\varepsilon_-|x_-|\right)
\nonumber\\
&\,=\,&
h_c\left[2(1+\sigma \varepsilon) x -\left(1-\varepsilon_+\right) x_+ -\left(1-\sigma_-\varepsilon_-\right)x_-\right],
\end{eqnarray}
for arbitrary positive constants $\varepsilon$ and
$\varepsilon_\pm$ in the range $v>v_0$ by adopting a sufficiently large $v_0$, 
where we introduced $\sigma=\mathrm{sgn}(x)$ and $\sigma_-=\mathrm{sgn}(x_-)$ in the second line. 
Integrating this inequality, we find
\begin{equation}
\exp\left[\frac12\int_{v_0}^{v^\prime} 
H_{,x}dv^{\prime\prime}\right]
\, < \,
\exp\left[S(v^\prime)-S(v_0)\right],
\end{equation}
where
\begin{equation}
S(v)\, := \,h_c\left[(1+\sigma \varepsilon) X(v) -\frac{1-\varepsilon_+}{2} X_+(v) -\frac{1-\sigma_-\varepsilon_-}{2}X_-(v)\right].
\label{Def:S(v)}
\end{equation}
Here, we introduced the primitive function of $x(v)$,
\begin{equation}
X(v)\,:=\, \int x(v)dv,
\end{equation}
in addition to $X_\pm(v)$ defined in Eq.~\eqref{Primitive-function-xpm}.
From this inequality,
the affine parameter remains finite if $\int\exp[S(v)]dv$ remains finite. This leads to the following Lemma:
\begin{lemma}
\label{Lemma:finiteness-of-lambda}
The affine parameter $\lambda$ is finite in the limit $v\to\infty$ if
there exist positive constants $\varepsilon$ and $\varepsilon_\pm$ for which 
$S(v)$ defined in Eq.~\eqref{Def:S(v)} satisfies
\begin{equation}
S(v) \,\approx\, \begin{cases}
-Av^\alpha & (A>0,\, \alpha>0),\\
-B\log v & (B>1).
\end{cases}
\end{equation}
Here, the symbol $\approx$ indicates that the leading order is equivalent. In other words,
$S(v)$ can have subleading order corrections since they do not affect whether $\lambda$ diverges or not.
\end{lemma}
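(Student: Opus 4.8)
The plan is to reduce the finiteness of the affine parameter to the convergence of a single one-dimensional integral, and then dispatch that convergence by elementary estimates in each of the two regimes for $S(v)$. The genuine analytic content — bounding $\tfrac12 f_{,x}A$ from above by $h_c\bigl[(1+\sigma\varepsilon)x-\tfrac{1-\varepsilon_+}{2}x_+-\tfrac{1-\sigma_-\varepsilon_-}{2}x_-\bigr]$ and integrating this to produce the factor $\exp[S(v')-S(v_0)]$ — has already been carried out just above the statement, so what remains is essentially calculus.

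First I would feed the inequality $\exp\!\left[\int_{v_0}^{v'}\tfrac12 f_{,x}A\,dv''\right]<\exp[S(v')-S(v_0)]$ into the formula \eqref{lambda-formula}. Along any outgoing null geodesic trapped in the black hole region we have $x\to0$, so the metric function $A(\lambda)=A(v(\lambda),x(\lambda))$ tends to a finite limit and is in particular bounded, say $A(\lambda)\le M$ for $v$ large; moreover $\dot v>0$ for a future-directed geodesic, so $\lambda$ is a strictly increasing function of $v$. Substituting these facts into \eqref{lambda-formula} gives the bound
\[
\lambda-\lambda_0\;<\;\mathrm{(finite\ piece)}\;+\;\frac{M\,e^{-S(v_0)}}{\dot v(\lambda_0)A(\lambda_0)}\int^{\infty}\exp[S(v')]\,dv',
\]
in which the prefactor is a finite positive constant because $A>0$ and $\dot v(\lambda_0)>0$. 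Hence $\lambda$ is bounded above, and being monotone in $v$ it converges as $v\to\infty$, provided the integral $\int^{\infty}\exp[S(v')]\,dv'$ is finite. (A change of the integration constants in $X$ and $X_\pm$ merely rescales $\exp[S(v')]$ by a constant factor and is therefore irrelevant.)

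It then remains to verify $\int^{\infty}\exp[S(v')]\,dv'<\infty$ in the two cases. If $S(v)\approx-Av^{\alpha}$ with $A>0$, $\alpha>0$, then since the leading order dominates the subleading corrections there is $v_1$ with $S(v)\le-\tfrac{A}{2}v^{\alpha}$ for $v\ge v_1$; substituting $u=v'^{\alpha}$ shows $\int_{v_1}^{\infty}e^{-(A/2)v'^{\alpha}}dv'$ is a finite multiple of a convergent incomplete Gamma integral. If instead $S(v)\approx-B\log v$ with $B>1$, I would pick any $B'$ with $1<B'<B$; then $S(v)\le-B'\log v$ for $v$ large, so $\exp[S(v)]\le v^{-B'}$, and $\int^{\infty}v'^{-B'}dv'<\infty$ because $B'>1$. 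Adding the manifestly finite contribution from the compact interval on which the above estimates do not yet hold closes the argument in both regimes. I do not expect any real obstacle here; the only points to handle with a little care are passing from the leading form of $S$ to a strictly weaker but still integrable bound (so that subleading corrections are absorbed, exactly as the symbol $\approx$ in the statement is meant to allow), and checking that the prefactor coming from $A(\lambda_0)$ and $\dot v(\lambda_0)$ in \eqref{lambda-formula} is nonzero and finite, which follows from $A>0$ and $\dot v>0$.
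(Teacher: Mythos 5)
Your proposal is correct and follows essentially the same route as the paper: the paper's own argument consists precisely of the bound $\exp[\int \frac12 f_{,x}A\,dv'']<\exp[S(v')-S(v_0)]$ fed into Eq.~\eqref{lambda-formula} together with the boundedness of $A(\lambda)$, reducing everything to the convergence of $\int^\infty \exp[S(v')]\,dv'$, which the paper treats as immediate and you verify explicitly via the $e^{-(A/2)v^\alpha}$ and $v^{-B'}$ comparisons. Your explicit handling of the subleading corrections, the compact-interval contribution, and the finiteness of the prefactor are exactly the details the paper leaves implicit.
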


We now apply this lemma to the solvable models studied in Sec.~\ref{Sec:Solvable-models}
(but note that a separate treatment is required in the model of Sec.~\ref{Sec:Solvable-Deq0}).
As we present the detailed calculations in Appendix~\ref{Finiteness-of-affine-parameters}, 
the affine parameters of outgoing null geodesics in the black hole region turn out to remain finite in the limit $v\to\infty$  in all models. 
Our calculations show that if the attractive separatrix is present, the finiteness condition is easily satisfied.
On the other hand, if there is a degenerate separatrix,
the finiteness of the affine parameter is not so obvious because
outgoing null geodesics approach the event horizon whose affine parameter diverges in the limit $v\to\infty$.
However, in this case, outgoing null geodesics 
approach the event horizon very slowly, and this behavior realizes the finiteness of the affine parameters. 
To summarize, $\mathcal{H}_{\rm I}$ is a Cauchy horizon in all models of Sec.~\ref{Sec:Solvable-models}, 
and we have not discovered any example of inextendible spacetimes up to now.

The shortcoming of Lemma~\ref{Lemma:finiteness-of-lambda} is that $S(v)$ includes $X(v)$, and hence, 
the solution of $x(v)$ must be obtained in advance. 
It would be worth challenging to derive sufficient conditions for the finiteness of $\lambda$
only in terms of $x_\pm(v)$. In particular, we have derived several sufficient conditions
for realizing various types in Cases 1, 2, and 3, and it is of interest whether those sufficient conditions
guarantee the extendibility of spacetimes beyond $\mathcal{H}_{\rm I}$. 
The study in such a direction is ongoing and we plan to discuss this issue 
in a separate paper. 
It is also interesting to look for the configuration where the spacetime is not extendible,
i.e., the affine parameter diverges in the limit $v\to\infty$.

 \newpage

%
%
 \section{\label{sec-conclusions}Conclusions}

In this paper, we have studied the global structures of spherically symmetric regular black holes that evaporate over infinite periods of time, supposing that the regular black holes asymptote to the extremal states. 
Each spacetime has the inner and outer AHs, and we have not imposed strong restrictions on their radii,
and hence, have carried out fairly general analyses. 
The radius of the outer AH is supposed to be a decreasing function of the ingoing null coordinate $v$, and we have considered the three cases, Cases 1, 2, and 3, where the radius of the inner AH is constant, increases, or decreases
with respect to $v$, respectively. 
By studying the possible behavior of outgoing null geodesics in each case, we have established the classification of resultant Penrose diagrams without loss of generality. 
Here, we have taken into account not only of the existence/nonexistence of the event horizon,
but also of the relative positions of the outer and inner AHs.
Then, we have derived some sufficient conditions to judge the type of a given spacetime. 
Our analyses have revealed that the velocity of the outer and inner AHs in approaching $r=r_c$ is a significant factor in determining the spacetime structures. 

 In Case~1 spacetimes, two distinct types, Type~1 and Type~2, have been defined in Definition~\ref{Def:Case1-Type1-Type2} depending on whether there is an outgoing null geodesic that does not intersect the outer AH between the two AHs 
  other than the event horizon. 
 The Penrose diagrams of the two types are shown in Fig.~\ref{pic-case1-diagrams}, and 
 some examples of Case~1 spacetimes are summarized in Table~\ref{table-Case1}.

 In Case~2 spacetimes, similarly to Case 1, two distinct types, Type~1 and Type~2, have been defined in Definition~\ref{Def:Case2-Type1-Type2} depending on whether there is an outgoing null geodesic that does not intersect one of the two AHs other than the event horizon. 
 The Penrose diagrams of the two types are shown in Fig.~\ref{pic-case2-diagrams}, and 
 some examples of Case~2 spacetimes are summarized in Table~\ref{table-Case2}.

 In Case~3 spacetimes, four distinct types have been defined in Definitions~\ref{Def:Case3-Class1-Class2},
 \ref{Def:Case3-ClassA-ClassB} and \ref{Def:Case3-Types}. 
 In Definition~\ref{Def:Case3-Class1-Class2}, Class 1 and Class 2 spacetimes have been defined depending on whether there is an outgoing null geodesic emitted from the inner AH that does not reach the outer AH.
 In Definition~\ref{Def:Case3-ClassA-ClassB}, Class A and Class B spacetimes have been defined depending on whether there is an outgoing null geodesic inside the inner AH that does not reach the inner AH.
 Then, the four types, Types 1-A, 1-B, 2-A, and 2-B have been defined depending on which two Classes the spacetime belongs to.
 The Penrose diagrams of the four types are shown in Fig.~\ref{pic-case3_diagrams}, and 
  the examples are summarized in Table \ref{table-Case3}. 

In addition, in order to clarify the concrete behavior of outgoing null geodesics, we have studied some models
for which the equation for outgoing null geodesics is solvable.
We have found that there appears an attractive separatrix, in addition to the repulsive separatrix which coincides with the event horizon.
The existence/nonexistence and relative positions of the two separatrices, whether the two separatrices are degenerate or not,
and the strength of the attractive property of the attractive or degenerate separatrix
have been found to be the important factors to determine the spacetime structure.

Using these solvable models, we have also examined the extendibility of the future edge inside the black hole region given by
$v=\infty$ and $r=r_c$ (denoted as $\mathcal{H}_{\rm I}$). 
The result is that in all models, the null geodesics are extendible at $\mathcal{H}_{\rm I}$, indicating the formation of a Cauchy horizon. The Hayward spacetime studied in Appendix~\ref{sec-Hayward} is also extendible. 
However, the analysis in this paper is restricted to the specific solvable models.
The study on the sufficient conditions for the extendibility only in terms of the behavior of the outer and inner AH is ongoing, and we hope to report this work in a separate paper.

 Our research is expected to make an important contribution to studies on the information loss problem and the stability related to the formation of a Cauchy horizon, -- the areas where the global structures of black hole spacetimes play a crucial role.  In particular, the condition for the nonexistence of the event horizon (Type 1-A of Case 3) clarified in this paper would play an important role because it suggests the possibility that information may not be trapped in the black hole.
 Furthermore, our study has possibility to be related to the area of primordial black holes, since if black holes in our universe are regular black holes, then the property of the evaporation must be changed a lot at its final stage.

\bigskip
\bigskip
\bigskip

\begin{flushleft}
{\bf Acknowlegments}
\end{flushleft}

We thank Ken-ichi Nakao for helpful comments.
K.S. is in part supported by JST, the establishment of university fellowships towards the creation of science technology innovation, Grant Number JPMJFS2138. H. Y. is in part supported by JSPS KAKENHI Grant Numbers JP22H01220 and
JP21H05189,
and is partly supported by MEXT Promotion of Distinctive Joint Research Center Program  JPMXP0723833165.

 \newpage
 \appendix
 \def\thesection{\Alph{section}}

%
%
 \section{\label{sec-Hayward}Geometry and properties of the Hayward black hole} 
 
 In this Appendix, we examine the spacetime structure of an evaporating regular black hole in a specific model, i.e., the Hayward black hole. 
 First, we shall review the geometry and properties of the Hayward black hole. 
 The static Hayward black hole is described by the metric of Eq.~\eqref{eq-metric} with 
 \begin{align}
   f(r)\,=\,1-\frac{2mr^2}{r^3+2m\ell ^2}, \qquad A(v,r)\,=\,1,
 \end{align}
 where $m$ is the mass of the black hole and $\ell $ is a length parameter introduced to smear out the curvature singularity.
 The equation $f(r)=0$ gives three solutions. The largest one is the radius of the event horizon, $r_+$,
 and the second largest one is the radius of the Cauchy horizon, $r_-$.
 These two solutions exist in the extremal or sub-extremal state.
 Other than these two solutions, there always exists a negative solution, which we denote as $r=-q\ell$ where $q$ is a positive number. 
 We normalize the physical quantities with $\ell$ as $\mu=m/\ell$, $\rho=r/\ell$, and $\rho_\pm=r_\pm/\ell $.
 Then, $\mu$ is parametrically given as
 \begin{align}
  \label{eq-Hayward-mass}
   \mu\,=\,\frac{q^3}{2(1-q^2)},
 \end{align}
 and $f(\rho)$ is expressed as $f(\rho)=h(\rho)(\rho-\rho_+)(\rho-\rho_-)$ with
 \begin{equation}
 \rho_\pm \, = \, \frac{q}{2(1-q^2)}\left[1\pm\sqrt{4q^2-3}\right],
 \end{equation}
 and 
 \begin{equation}
 h(\rho)\,=\, \frac{(1-q^2)(\rho+q)}{(1-q^2)\rho^3+q^3}.
 \end{equation}
 The Hayward black hole becomes extremal when $q=\sqrt{3}/2$, and correspondingly, the mass becomes
 $\mu\,=\,\mu_c:=\frac{3\sqrt{3}}{4}$ and the radius of the extremal horizon is $\rho_c=\sqrt{3}$.
 In the extremal case, the function $h(\rho)$ takes the value $h_c=1/3$  on the extremal horizon.
 On the other hand, $q\to 1$ gives the large-mass limit.

 Here, we introduce a more useful parameter $p$ compared to $q$ via
 \begin{equation}
 p\,=\, \sqrt{4q^2-3},
 \end{equation}
 for which the extremal limit is given as $p\to 0$ while the large-mass limit is $p\to 1$. 
 We introduce the shifted radial coordinate $x=\rho-\rho_c$
 and the shifted radii of the two AHs, $x_\pm = \rho_\pm-\rho_c$ in the same manner as 
 Eqs.~\eqref{shifted-radial-coordinate} and \eqref{horizons-shifted-radial-coordinate}.
 Rewriting $f(\rho(x))$ and $h(\rho(x))$ as $f(x)$ and $h(x)$, respectively, we obtain
 \begin{equation}
 f(x)\, = \, h(x)(x-x_+)(x-x_-)
 \end{equation}
 with
 \begin{equation}
 x_\pm \, = \, \frac{\pm2p(3\mp p)}{(1\mp p)\left[\sqrt{3+p^2}+\sqrt{3}(1\mp p)\right]},
 \end{equation}
 and 
 \begin{equation}
 h(x)\, = \, \frac{(1-p^2)(2\sqrt{3}+2x+\sqrt{3+p^2})}{2(1-p^2)(\sqrt{3}+x)^3+(3+p^2)^{3/2}}.
 \end{equation}
 The mass parameter is rewritten as
 \begin{equation}
 \mu \, = \, \frac{(3+p^2)^{3/2}}{4(1-p^2)}.
 \end{equation}
 The merit of introducing the parameter $p$ is that the approximate formulas near the extremal state
 become apparent, e.g., $x_\pm\approx \pm \sqrt{3}p$.

 We now consider the evaporation of a Hayward black hole due to Hawking radiation. 
  To describe black hole evaporation, we extend the Hayward black hole metric to a dynamical one by requiring the mass to be dependent on the ingoing null coordinate as $m= m(v)$ in the same way as the Hayward's original paper~\cite{Hayward:2005gi}. 
 This is equivalent to that $p$ is promoted from a constant to a dynamical variable, $p(v)$.
 We impose the following two conditions to describe the black hole evaporation:
 \begin{enumerate}
     \item The energy flux emitted by evaporating black holes obeys the Stefan-Boltzmann law throughout the evolution;
     \item The rate of decrease in the black hole mass with respect to the ingoing null coordinate $\frac{dm}{dv}$ is identical to the energy flux by the Stefan-Boltzmann law.
 \end{enumerate}
 Based on these two assumptions, the evolution equation of $m(v)$ is determined by the equation,
 \begin{align}
   c^2\frac{dm}{dv}&\,=\,-\sigma T_{\rm H}^4 A_{\rm H},
   \label{mass-evolution-Hayward}
 \end{align}
 where $A_{\rm H}=4\pi r_+^2$ is the area of the outer AH and $\sigma=\pi^2k^4/60\hbar^3c^2$ is the Stefan-Boltzmann constant. Here, we explicitly show the speed of light $c$, the reduced Planck constant $\hbar$,
 and Boltzmann constant $k$ for clarity. 
 $T_{\rm H}$ is the Hawking temperature given in terms of the surface gravity $\kappa$ as $kT_{\rm H}=(\kappa/2\pi)\hbar c$. 
 In terms of quantities normalized with $\ell$, Eq.~\eqref{mass-evolution-Hayward} is rewritten as
 \begin{equation}
   \label{eq-Hayward-mass-decay}
 \frac{d\mu}{dV}\, = \, -\frac{1}{240\pi}\left(\frac{\ell_{\rm P}}{\ell}\right)^2\hat{\kappa}^4\rho_+^2,
 \end{equation}
 where  $V=v/\ell$, $\hat{\kappa}=\ell\kappa$, and $\ell_{\rm P}$ is the Planck length.
 The normalized surface gravity $\hat{\kappa}$ is calculated as
 \begin{equation}
 \hat{\kappa} \, = \, \frac12f^\prime (x_+)\, = \,\frac{p(3-p)(1-p)}{(3+p^2)^{3/2}}.
 \end{equation}
 Rewriting Eq.~\eqref{eq-Hayward-mass-decay} as the equation for $p(V)$, we have
  \begin{equation}
 \frac{dp}{dV}\, = \, -\frac{3}{2a^2}\cdot
 \frac{p^3(1+p)^2(1-p)^4(1-p/3)^3}{(1+p^2/3)^{11/2}(1+p/3)},
 \label{Equation-for-p}
 \end{equation}
 where $a$ is determined in terms of the Planck length $\ell_{\rm P}=\sqrt{G\hbar/c^3}$ as
 \begin{equation}
 a \, = \, 3^{11/4}\sqrt{{10\pi}}\left(\frac{\ell}{\ell_{\rm P}}\right).
 \end{equation}

 \begin{figure}[t]
   \centering
   \includegraphics*[width=0.5\linewidth]{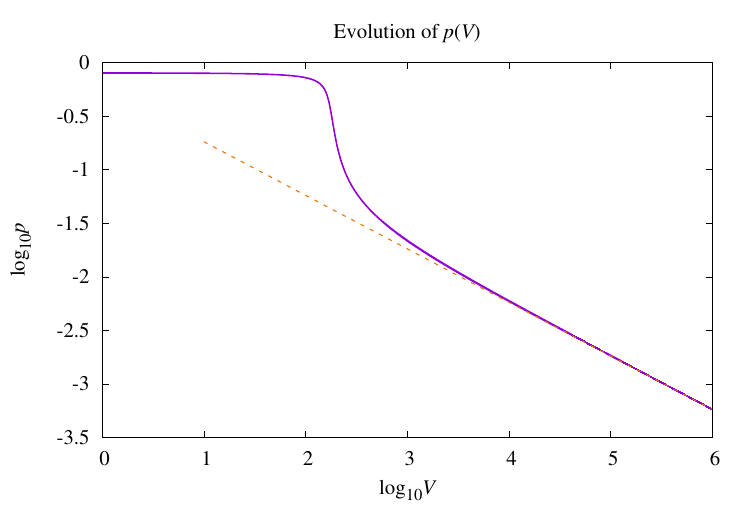}
   \caption{The evolution of $p(V)$ for the choice $a=1$ with the initial condition $p(0)=0.8$. The axes for $V$ and $p$ are shown in the logarithmic scale. The line that corresponds to $p=a/\sqrt{3V}$ is drawn by a dashed line for comparison.}
   \label{pic-Hayward-p}
 \end{figure}

 \begin{figure}[t]
   \centering
   \includegraphics*[width=0.45\linewidth]{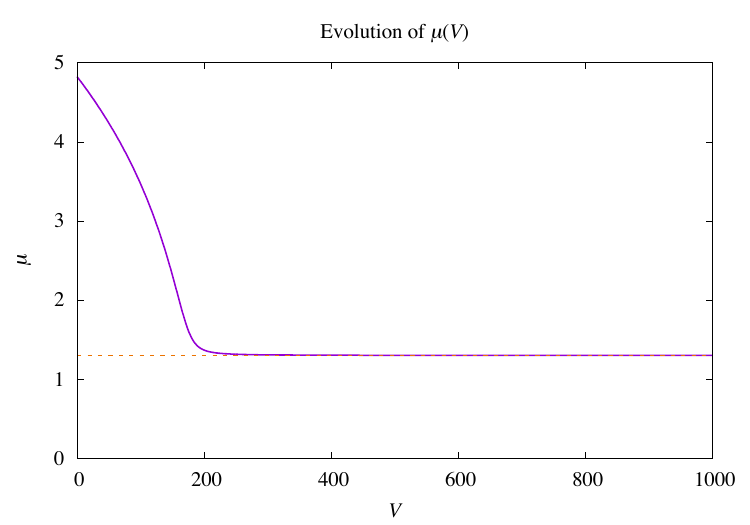}
   \includegraphics*[width=0.45\linewidth]{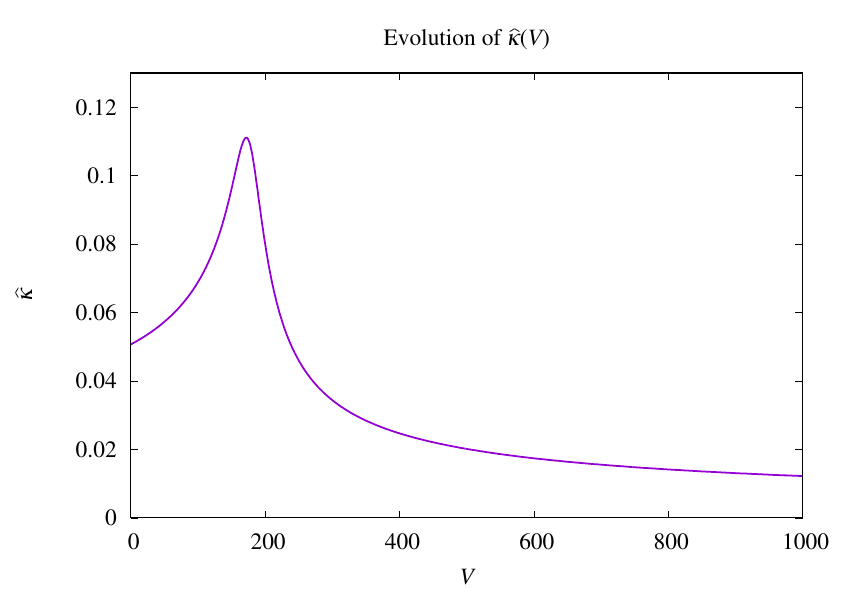}
   \caption{Time evolutions of the normalized mass of an evaporating Hayward black hole, $\mu(V)$ (left) and the normalized surface gravity $\hat{\kappa}(V)$ (right) with $a=1$ and $p(0)=0.8$. In the left panel, the value of $\mu=\mu_c=3\sqrt{3}/4$ is shown by a dashed line for comparison.}
   \label{pic-Hayward-some-variables}
 \end{figure}

 We study the asymptotic behavior of $p(V)$ in the range of large $v$. In this case, 
 $p(V)$ is expected to asymptote to zero and hence, is very small because the black hole approaches the extremal state.
 For this reason, Eq.~\eqref{Equation-for-p} is approximated as
 \begin{equation}
 \frac{dp}{dV}\,\approx\, -\frac{3}{2a^2}p^3.
 \end{equation}
 This is immediately solved as
 \begin{equation}
 p \,\approx\, \frac{a}{\sqrt{3(V-V_0)}} \,\approx\, \frac{a}{\sqrt{3V}},
 \label{p(V)-approximate}
 \end{equation}
 where $V_0$ is the integral constant, and we omitted $V_0$ in the second approximate equality since its effect is a subdominant order. 
 This means 
 \begin{equation}
 x_\pm \, \approx \,\pm\frac{a}{\sqrt{V}}.
 \end{equation}
 Then, the inner horizon is spacelike (i.e., Case 2) and Table \ref{table-Case2} shows that the evaporating Hayward black hole is of Type~1 in Case 2.

 The numerical calculation is also performed.
 Here, the parameters are chosen as $a=1$ and  $p(0)=0.8$, which corresponds to $\rho_+(0)\approx 9.539$.
 This choice is mainly due to the numerical convenience, and we have to keep in mind that
 the choice $a=1$ may not be very realistic because $\ell\sim 10^{-2} \ell_{\rm P}$ holds in this case. 
 The numerical solution to Eq.~\eqref{Equation-for-p} is presented in Fig.~\ref{pic-Hayward-p}. 
 The both axes are shown in the logarithmic scale, and the line that corresponds to the approximate formula
 of Eq.~\eqref{p(V)-approximate} is shown for comparison. 
 It is found that the behavior of $p(V)$ is well approximated by
 Eq.~\eqref{p(V)-approximate} for $V\gtrsim 10^4$.
 The time evolutions of the normalized mass $\mu$ and the normalized surface gravity $\hat{\kappa}$ 
are shown in the left and right panels of Fig.~\ref{pic-Hayward-some-variables}, respectively.
The mass decreases and asymptotes to the value $\mu_c=3\sqrt{3}/4$.
The surface gravity (and hence, the Hawking temperature) increases first, but at $V\approx 172$, starts to decrease
reflecting the fact that the black hole approaches the extremal state.

From the above calculations, the asymptotic behavior of the evaporating Hayward black hole has been understood.
We then analyze the behavior of the outgoing null geodesics in the region $x_-\lesssim x \lesssim x_+$, and assess the extendibility of this spacetime. As we have shown in Sec.~\ref{sec-case2}, this spacetime has an inner boundary $\mathcal{H}_{\rm I}$ which the outgoing null geodesics asymptotically approach in the limit $v\rightarrow\infty$. 
In order to examine the extendibility, it is necessary to check the finiteness of the affine parameter of outgoing null geodesics inside the black hole region using the formula of Eq.~\eqref{lambda-formula}. For this purpose, setting $T=\sqrt{V}$, 
analytic solutions to Eq.~\eqref{eq-null-x} for the outgoing null geodesics 
will be obtained under the approximation where the orders up to $O(1/T)$ are taken into account.
In this approximation, $x$ and $x_\pm\approx \pm a/T$ are considered to be $O(1/T)$,
and the leading-order of $h(v,x)=h_c=1/3$ is adopted for $h(v,x)$. Expressing
the approximate formula for $x(T)$ as
 \begin{align}
    x(T)=a\frac{\chi(T)}{T},
 \end{align}
 where $\chi(T)$ is $O(1)$ quantity, 
 the equation for $\chi(T)$ is written as
 \begin{equation}
 \frac{d\chi}{dT}\, = \, \frac{1}{3}a(\chi^2-1)+\frac{\chi}{T}.
 \end{equation}
 Here, the second term on the right-hand side must be ignored because it is a subdominant order.
 In this approximation, $\chi=\pm 1$ are the constant solutions, which correspond to the repulsive
 and attractive separatrices discussed in Sec.~\ref{Sec:Solvable-models} (the repulsive separatrix is identical to the event horizon).
Although these separatrices coincide with the outer and inner AHs in the current approximation,
 they must deviate from each other if the higher-order terms are taken into account.
 Other than the separatrices, the solutions are
 \begin{align}
     \chi(V)=
     \begin{cases}
         -{\coth{\left[\frac{a}{3}\left(\sqrt{V}-\sqrt{V_d}\right)\right]}} \quad (V<V_d),\\
         -{\tanh{\left[\frac{a}{3}\left(\sqrt{V}-\sqrt{V_c}\right)\right]}} \quad (V_0<V<\infty),\\
         -{\coth{\left[\frac{a}{3}\left(\sqrt{V}-\sqrt{V_e}\right)\right]}} \quad (V_e<V).
     \end{cases}
     \label{Hayward-outgoing-null-approximate-solution}
 \end{align}
 Note that, for the outgoing null geodesics outside the repulsive separatrix, this approximation is immediately broken because $\chi(V)$ grows rapidly, but the solution inside the black hole region can be used for the range where $\chi(V)$ approaches $-1$.
 The obtained solution has a close resemblance to the ones in Sec.~\ref{Sec:Solvable-0<n<1}.
 Similarly to that case, all of the outgoing null geodesics are attracted to the
 attractive separatrix at $x\approx-a/\sqrt{V}$ located near the inner AH.
 Then, it is possible to apply Lemma~\ref{Lemma:finiteness-of-lambda} in Sec.~\ref{sec-extendibility}, and by the similar calculation in Appendix~\ref{Finiteness-Solvable-0<n<1}, the finiteness of the affine parameters of outgoing null geodesics at $\mathcal{H}_{\rm I}$ is guaranteed.

 \begin{figure}[t]
 \centering
\includegraphics*[width=0.75\linewidth]{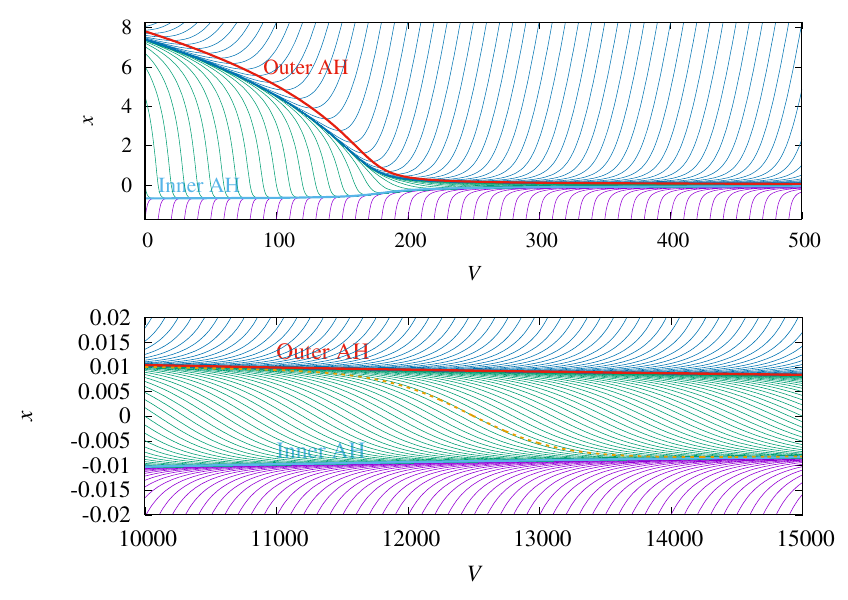}
         \caption{The behavior of the outgoing null geodesics (thin curves) and the positions of the outer and inner AHs (thick curves) in the $(V,x)$-plane for the range $0<V<500$ (top panel) and for $10000<V<15000$ (bottom panel). 
         In the bottom panel, an outgoing null geodesic given by the approximate formula of Eq.~\eqref{Hayward-outgoing-null-approximate-solution} is shown by a dotted curve for comparison.}
   \label{pic-Hayward-outgoing-null}
 \end{figure}

 As a check, we have performed a numerical analysis and obtained  
 the numerical solutions of Eq.\eqref{eq-outgoing} as shown in Figure \ref{pic-Hayward-outgoing-null}.
  The top panel shows the range $0<V<500$, i.e., the initial stage of the evaporation.
 The two AHs approach each other and the system asymptotes to the extremal state.
 Throughout the evolution, outgoing null geodesics inside the black hole are attracted to the neighborhood of the inner AH.
 The bottom panel highlights the behavior near the two AHs in a relatively late time period,
 $10000<V<15000$, where the approximation of Eq.~\eqref{Hayward-outgoing-null-approximate-solution}
 is expected to hold with the error of $O(\%)$.
 One of the approximate solutions of Eq.~\eqref{Hayward-outgoing-null-approximate-solution} is shown
 by a dotted curve, and it coincides with the numerical solution within the expected error.
 These numerical results support our conclusion on the extendibility of the spacetime
 based on the approximate analysis.

 \begin{figure}[t]
     \centering
     \includegraphics[width=0.25\linewidth]{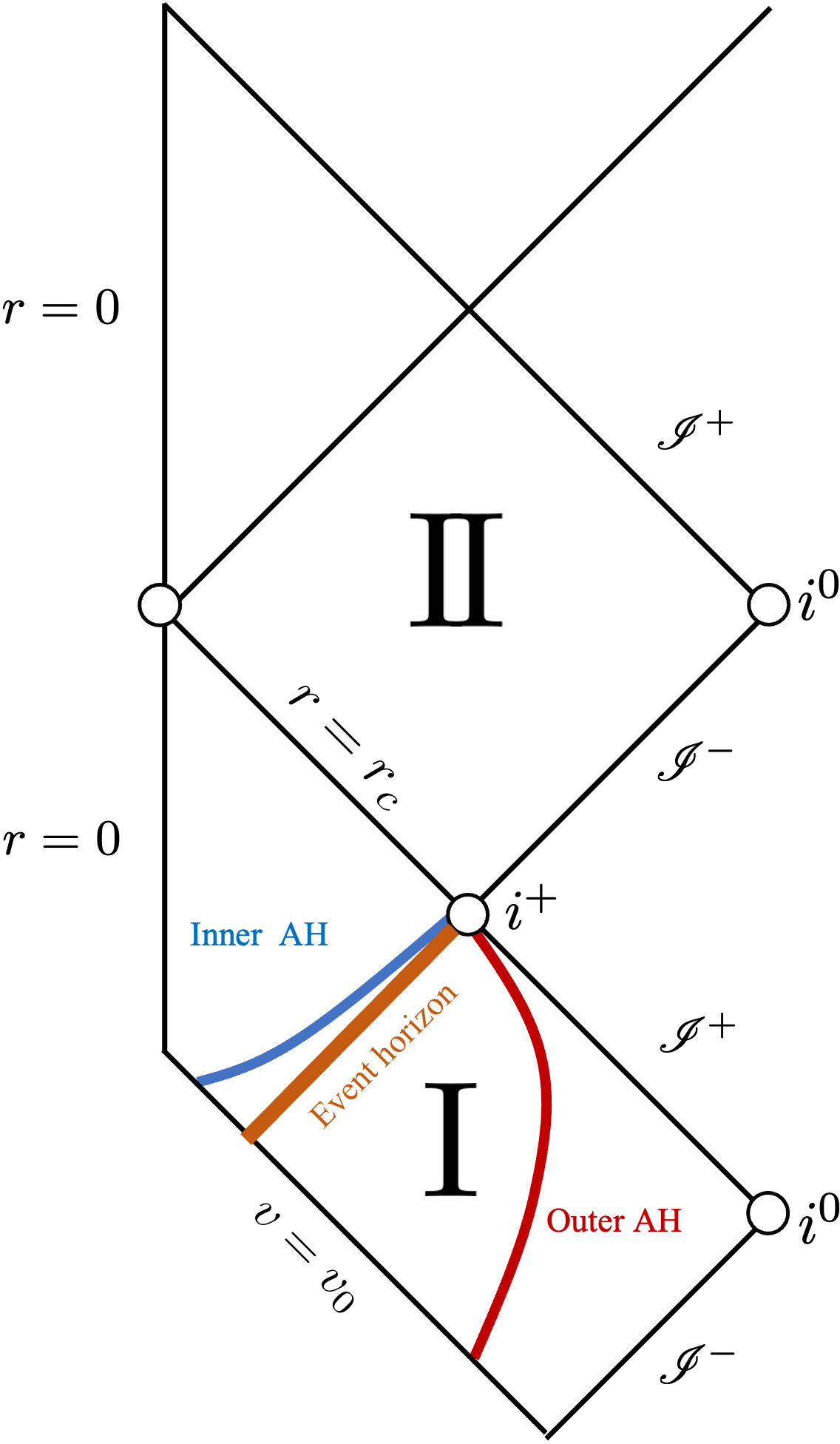}
     \caption{Maximally extended spacetime structure of an evaporating Hayward black hole.}
     \label{pic-full-diagram}
 \end{figure}

We now discuss the maximal extension of the evaporating Hayward black hole spacetime.
We denote the spacetime region represented by Eq.~\eqref{eq-metric} as the region I, and the region extended from the inner boundary $\mathcal{H}_{\rm I}$ as the region \two. 
The region \two\ is a spacetime region beyond $v=\infty$, and therefore, the region \two\ should be expressed with the extremal Hayward metric. 
The maximal Penrose diagram is shown in Figure \ref{pic-full-diagram}.

\newpage

%
%
\section{Examples for various types of Penrose diagrams}
\label{Examples-proof}

In this appendix, we present the proofs for the examples that lead to various types of the Penrose diagrams
in Cases 1, 2, and 3 (which are summarized in Tables~\ref{table-Case1}, \ref{table-Case2}, and \ref{table-Case3}).

\subsection{Case~1 spacetimes}
\label{Examples:case1}

We begin with the Case~1 spacetimes, focusing attention on the power-law functions.

\subsubsection{Examples of Type~1 in Case~1}
Here, we present some examples of the functions of $x_+(v)$ that lead to Type~1.
\begin{example}
  Case~1 spacetimes with $x_+=\frac{a_+}{v^n}$ satisfying $n>1$ are of Type~1.
\end{example}
\begin{proof}
  Since these spacetimes satisfy
  \begin{align}
    \lim_{v\rightarrow\infty}x_+(v)\exp\left[\frac{1+\varepsilon}{2}h_c X_+(v)\right]=
    \lim_{v\rightarrow \infty}\frac{a_+}{v^n}\exp{\left[-\frac{(1+\varepsilon)a_+h_c}{2(n-1)v^{n-1}}\right]}=0,
  \end{align}
  the condition of Proposition \ref{prop-Case1-Type1} is satisfied.
\end{proof}
\begin{example}
  Case~1 spacetimes with $x_+=\frac{a_+}{v}$ satisfying $a_+<\frac{2}{h_c}$ are of Type~1.
\end{example}
\begin{proof}
  Since $\frac{a_+h_c}{2}<1$ holds, choosing a sufficiently small $\varepsilon$ leads to $\frac{(1+\varepsilon)a_+h_c}{2}<1$. Then, the function $x_+(v)$ satisfies
  \begin{align}
    \lim_{v\rightarrow\infty}x_+(v)\exp\left[\frac{1+\varepsilon}{2}h_c X_+(v)\right]=\lim_{v\rightarrow \infty}a_+v^{\frac{(1+\varepsilon)a_+h_c}{2}-1}=0,
  \end{align}
  and thus, the condition of Proposition \ref{prop-Case1-Type1} is satisfied.
\end{proof}

\subsubsection{Examples of Type~2 in Case~1}
Here, we present some examples of the functions of $x_+(v)$ that lead to Type~2.

\begin{example}
  \label{example-Case1-Type2-1}
  Case~1 spacetimes with $x_+=\frac{a_+}{v^n}$ satisfying $0<n<1$ are of Type~2.
\end{example}
\begin{proof}
  Since Eq.~\eqref{eq-Case1-sufficient-2-1} is evaluated as
  \begin{align}
     \lim_{v\rightarrow \infty}x_+(v)\exp{\left[\frac{1-\alpha}{2}h_c X_+(v)\right]}
     \,=\,\lim_{v\rightarrow \infty}\frac{a_+}{v^n}\exp{\left[\frac{(1-\alpha)a_+h_c}{2(1-n)}v^{1-n}\right]}\,=\,\infty,
  \end{align}
  the condition of Proposition \ref{prop-Case1-prop2} is satisfied.
\end{proof}

\begin{example}
  \label{example-Case1-Type2-2}
  Case~1 spacetimes with $x_+=\frac{a_+}{v}$ satisfying $a_+>\frac{2}{h_c}$ are of Type~2.
\end{example}
\begin{proof}
  Since $\frac{a_+h_c}{2}>1$ holds, choosing sufficiently small $\alpha$ leads to $\frac{(1-\alpha)a_+h_c}{2}>1$.
  Then,  Eq.~\eqref{eq-Case1-sufficient-2-1} is evaluated as
  \begin{align}
     \lim_{v\rightarrow \infty}x_+(v)\exp{\left[\frac{1-\alpha}{2}h_c X_+(v)\right]}
     \,=\,\lim_{v\rightarrow\infty}a_+v^{\frac{(1-\alpha)h_ca_+}{2}-1}
     \,=\,\infty,
  \end{align}
  and therefore, the condition of Proposition \ref{prop-Case1-prop2} is satisfied.
\end{proof}
It is also possible to confirm that Examples 3 and 4 are of Type~2 using Corollary~\ref{corollary-Case1-1}.

\subsection{Case~2 spacetimes}
\label{Examples:case2}

Next, we turn our attention to Case~2 spacetimes.

\subsubsection{Examples of Type~1 in Case~2}

We present some examples of the functions of $x_+(v)$ and $x_-(v)$ that lead to Type~1 spacetimes. 
Although the power-law functions are primarily considered, the exponential behavior is also taken into account in some cases. \begin{example}
  Case~2 spacetimes with $x_+=\frac{a_+}{v^n}$ and $x_-=\frac{a_-}{v^k}$ satisfying $0<k<n$ are of Type~1.
\end{example} 
\begin{proof}
  Since these spacetimes satisfy
  \begin{align}
    \lim_{v\rightarrow\infty}\frac{x_+}{|x_-|}=\frac{a_+}{|a_-|}\frac{1}{v^{n-k}}=0,
  \end{align}
   the condition of Corollary \ref{corollary-case2-Type1-2} is satisfied.
\end{proof}
\begin{example}
  Case~2 spacetimes with $x_+=\frac{a_+}{v^n}$ and $x_-=\frac{a_-}{v^n}$ satisfying $a_+<|a_-| $ are of Type~1.
\end{example}
\begin{proof}
  Since these spacetimes satisfy
  \begin{align}
    \lim_{v\rightarrow\infty} \frac{x_+}{|x_-|}=\frac{a_+}{|a_-|}<1,
  \end{align}
  the condition of Corollary \ref{corollary-case2-Type1-2} is satisfied.
\end{proof}
\begin{example}
  Case~2 spacetimes with $x_+=\frac{a_+}{v^n}$  and $x_-=\frac{a_-}{v^k}$ satisfying $n>1$ and $k>1$ are of Type~1.
\end{example}
\begin{proof}
  Since we have $X_+=-a_+/(n-1)v^{n-1}$ and $X_-=|a_-|/(k-1)v^{k-1}$, we have
  \begin{align}
    \lim_{v\rightarrow \infty}\left[(1+\varepsilon_+)X_++(1-\varepsilon_-)X_-\right]
    \,=\,\lim_{v\rightarrow \infty}\left[-\frac{(1+\varepsilon_+)a_+}{(n-1)v^{n-1}}+\frac{(1-\varepsilon_-)|a_-|}{(k-1)v^{k-1}}\right]
    \,=\, 0 \,\neq\, \infty.
  \end{align}
  Thus, the condition of Corollary \ref{corollary-case2-Type1-1} is satisfied.
\end{proof}
\begin{example}
  Case~2 spacetimes with $x_+=\frac{a_+}{v}$ and $x_-=\frac{a_-}{v^k}$ satisfying $a_+h_c<2<2k$ are of Type~1.
\end{example}
\begin{proof}
From the assumption, it is possible to adopt a sufficiently small positive constant $\varepsilon_+$ satisfying $h_ca_+(1+\varepsilon_+)<2$. 
Substituting $X_+=a_+\log v$ and $X_-=|a_-|/(k-1)v^{k-1}$, the left-hand side of Eq.~\eqref{eq-case2-Type1-2-1} is evaluated as
  \begin{align}
    &\lim_{v\rightarrow\infty}x_+\exp{\left[\frac{h_c}{2}\left\{(1+\varepsilon_+)X_++(1-\varepsilon_-)X_-\right\}\right]}\nonumber\\
    &\,=\,\lim_{v\rightarrow\infty}a_+v^{\frac12(1+\varepsilon)h_ca_+-1}\exp{\left[\frac{h_c|a_-|(1-\varepsilon_-)}{2(k-1)v^{k-1}}\right]}
    \,=\,0.
  \end{align}
  Then these spacetimes satisfy the condition of Proposition \ref{prop-case2-Type1-2}.
\end{proof}

\begin{example}
  Case~2 spacetimes with $x_-=\frac{a_-}{v^k}$ satisfying $0<k\le 1$ and an arbitrary decreasing function $x_+$ are of Type~1.
\end{example}
\begin{proof}
  Since the primitive function of $x_-$ diverges to $-\infty$ in the limit $v\rightarrow\infty$, Eq.~\eqref{eq-Case2-Type1-3} of Proposition \ref{prop-Case2-Type1-3} is trivially satisfied.
\end{proof}

\subsubsection{Examples of Type~2 in Case~2}

Next, we introduce some examples of Type~2 spacetimes in Case~2.
\begin{example}
  Case~2 spacetimes with $x_+=\frac{a_+}{v^n}$ and $x_-=a_-\exp{\left(-\frac{v}{L}\right)}$ satisfying $ 0<n<1$ are of Type~2.
\end{example}
\begin{proof}
  For these functions $x_+$ and $x_-$, the left-hand sides of Eqs.~\eqref{eq-case2-Type2-2-1} and \eqref{eq-case2-Type2-2-2} of Corollary \ref{corollary-case2-Type2-1} are evaluated as
  \begin{subequations}
  \begin{align}
      \lim_{v\rightarrow\infty}\frac{dx_+/dv}{x_+^2}&\,=\,\lim_{v\rightarrow\infty}-\frac{n}{a_+}v^{n-1}=0,\\
      \lim_{v\rightarrow\infty}\frac{dx_-/dv}{x_+x_-}&\,=\,\lim_{v\rightarrow\infty}-\frac{1}{a_+L}v^n=-\infty.
  \end{align}
  \end{subequations}
  Then these spacetimes satisfy the conditions of Corollary \ref{corollary-case2-Type2-1}.
\end{proof}
\begin{example}
  Case~2 spacetimes with $x_+=\frac{a_+}{v}$ and $x_-=\frac{a_-}{v^k} $ satisfying $ 2<a_+h_c<2k $ are of Type~2.
\end{example}
\begin{proof}
  From the assumption $2<a_+h_c$, it is possible to take a sufficiently small $\alpha$ that satisfies $2<a_+h_c(1-\alpha)$, and from the assumption $a_+h_c<2k$, it is possible to realize the condition $a_+h_c<\frac{2k\beta}{1+\beta}$ by taking a sufficiently large $\beta$. 
  Evaluating the left-hand sides of Eqs.~\eqref{eq-case2-Type2-2-1} and \eqref{eq-case2-Type2-2-2}, we have
  \begin{subequations}
  \begin{align}
      \lim_{v\rightarrow\infty}\frac{dx_+/dv}{x_+^2}&\,=\,-\frac{1}{a_+}>-\frac{h_c}{2}(1-\alpha),\\
      \lim_{v\rightarrow \infty}\frac{dx_-/dv}{x_+x_-}&\,=\,-\frac{k}{a_+}<-\frac{h_c}{2}\frac{1+\beta}{\beta}.
        \end{align}
        \end{subequations}
  Then the spacetimes satisfy the conditions of Corollary \ref{corollary-case2-Type2-1}.
\end{proof}

\subsection{Case~3 spacetimes}
\label{Examples:case3}

Finally, we present some examples of the functions of $x_+(v)$ and $x_-(v)$ that lead to Type~1-A, Type~1-B, Type~2-A and Type~2-B, respectively. 
Although the power-law functions are primarily considered, the exponential behavior is also taken into account in some cases.

\subsubsection{Examples of Type~1-A in Case~3}

We begin with the Type~1-A spacetimes in Case~3.
\begin{example}
  Case~3 spacetimes with $x_{\pm}=\frac{a_\pm}{v}$ satisfying
  \begin{align}
    \frac{2}{h_{c}}<a_+<a_-+\frac{2}{h_c}
  \end{align}
  are of Type~1-A.
\end{example}
\begin{proof}
  We evaluate Eq.~\eqref{eq-Case3-Type1-1} of Proposition \ref{prop-Case3-Type1}. 
  Since $(a_+-a_-)h_c<2$ holds, taking sufficiently small $\varepsilon$ leads to $(1+\varepsilon)(a_+-a_-)h_c<2$. Then,
  \begin{align}
    \lim_{v\rightarrow\infty}(x_+(v)-x_-(v))\exp{\left[\frac{1+\varepsilon}{2}h_c (X_+(v)-X_-(v))\right]}
    &=\lim_{v\rightarrow\infty}(a_+-a_-)v^{\frac{1+\varepsilon}{2}(a_+-a_-)h_c-1}\nonumber\\
    &=0
  \end{align}
  is satisfied. Therefore, these spacetimes are of Class~1. 
  Next, we evaluate Eq.~\eqref{eq-Case3-TypeA-1} of Proposition \ref{prop-Case3-TypeA}.
  Since $a_+h_c>2$ holds, the condition 
  $a_+h_c(1-\varepsilon)>2$ is also satisfied by
  adopting a sufficiently small $\varepsilon$. Then, 
  \begin{align}
    \lim_{v\rightarrow\infty}x_-(v)\exp{\left[\frac{h_{c}}{2}(1-\varepsilon)X_+(v)\right]} \,=\,\lim_{v\rightarrow\infty}a_-v^{\frac{a_+h_{c}}{2}(1-\varepsilon)-1}
    \,=\,\infty.
  \end{align}
  Therefore, these spacetimes are of Class~A. 
  Consequently, these spacetimes are of Type~1-A.
\end{proof}

\begin{example}
  Case~3 spacetimes with $x_+=\frac{a_+}{v^n}$ satisfying $0<n<1$ and $x_-=x_+-\frac{b}{v^k}$ satisfying $b>0$ and $k>1$ are of Type~1-A.
\end{example}
\begin{proof}
  We evaluate Eq.~\eqref{eq-Case3-Type1-1} of Proposition \ref{prop-Case3-Type1};
  \begin{align}
    \lim_{v\rightarrow\infty}(x_+(v)-x_-(v))\exp{\left[\frac{1+\varepsilon}{2}h_c (X_+(v)-X_-(v))\right]}&=\lim_{v\rightarrow\infty}\frac{b}{v^k}\exp{\left[-\frac{(1+\varepsilon)h_c}{2}\frac{b}{(k-1)v^{k-1}}\right]}\nonumber\\
    &=0.
  \end{align}
  Therefore, these spacetimes are of Class~1. Next, we evaluate Eq.~\eqref{eq-Case3-TypeA-1} of Proposition \ref{prop-Case3-TypeA};
  \begin{align}
    \lim_{v\rightarrow\infty}x_-(v)\exp{\left[\frac{h_{c}}{2}(1-\varepsilon)X_+(v)\right]} &=\lim_{v\rightarrow\infty}\left(\frac{a_+}{v^n}-\frac{b}{v^k}\right)\exp{\left[\frac{h_{c}a_+(1-\varepsilon)}{2(1-n)}v^{1-n}\right]}\nonumber\\
    &=\infty.
  \end{align}
  Therefore, these spacetimes are of Class~A. Consequently, these spacetimes are of Type~1-A.
\end{proof}

\subsubsection{Examples of Type~1-B in Case~3}

Next, we would show some examples of Type~1-B spacetimes in Case~3.
\begin{example}
  Case~3 spacetimes with $x_+=\frac{a_+}{v^n}$ and $x_-=\frac{a_-}{v^k}$ satisfying $1<n\le k$ are of Type~1-B.
  Here, $a_-<a_+$ holds in the case $n=k$.
\end{example}
\begin{proof}
  We evaluate Eq.~\eqref{eq-Case3-Type1-3} of Corollary~\ref{corollary-Case3-Type1}:
    \begin{equation}
\lim_{v\to\infty}x_+(v)
\exp\left[\frac{(1+\varepsilon)h_c}{2}X_+(v)\right]
\ = \
\frac{a_+}{v^n}\exp\left[-\frac{(1+\varepsilon)h_c}{2}\ \frac{a_+}{(n-1)v^{n-1}}\right]
\ = \ 0.
    \end{equation}
  Therefore, these spacetimes are of Class~1. Next, we evaluate Eq.~\eqref{eq-Case3-TypeB-3} of Corollary \ref{corollary-Case3-TypeB};
  \begin{align}
    \lim_{v\rightarrow \infty}\frac{dx_-/dv}{(x_++\alpha x_-)x_-}\,=\,\lim_{v\rightarrow\infty}\frac{-k}{a_++\alpha a_-/v^{k-n}}v^{n-1}
    \,=\,-\infty.
  \end{align}
  Therefore, these spacetimes are of Class~B. Consequently, these spacetimes are of Type~1-B.
\end{proof}
\begin{example}
  Case~3 spacetimes with $x_\pm=\frac{a_\pm}{v}$ satisfying $\sqrt{a_+}+\sqrt{a_-}<\sqrt{\frac{2}{h_c}}$
  are of Type~1-B.
\end{example}
\begin{proof}
  We evaluate Eq.~\eqref{eq-Case3-Type1-3} of Corollary \ref{corollary-Case3-Type1}. 
  Since $a_+<(\sqrt{a_+}+\sqrt{a_-})^2<\frac{2}{h_c}$ holds, taking a sufficiently small $\varepsilon$ leads to $\frac{1+\varepsilon}{2}h_ca_+<1$. Then,
  \begin{align}
    \lim_{v\rightarrow\infty}x_+(v)\exp{\left[\frac{(1+\varepsilon)h_c}{2}X_+(v)\right]}
    \,=\, \lim_{v\rightarrow\infty}a_+v^{\frac{1+\varepsilon}{2}h_c a_+-1}
    \,=\,0.
  \end{align}
  Therefore, these spacetimes are of Class~1. Next, we evaluate the inequality of Eq.~\eqref{eq-Case3-TypeB-3}
  of Corollary \ref{corollary-Case3-TypeB};
  \begin{align}
    \lim_{v\rightarrow\infty}\frac{dx_-/dv}{(x_++\alpha x_-)x_-} 
    \,=\,\lim_{v\rightarrow\infty}\frac{-a_-/v^2}{(a_++\alpha a_-)a_-/v^2}
    \,=\,-\frac{1}{a_++\alpha a_-}.
  \end{align}
Since choosing $\alpha=\sqrt{a_+a_-}$ leads to
  \begin{align}
    -\frac{1+\alpha}{2\alpha}h_c-\left(-\frac{1}{a_++\alpha a_-}\right)
    \,=\,\frac{h_c}{2\sqrt{a_+}(\sqrt{a_+}+\sqrt{a_-})}\left[\frac{2}{h_c}-(\sqrt{a_+}+\sqrt{a_-})^2\right]\,>\,0,
  \end{align}
  we have
  \begin{align}
    \lim_{v\rightarrow\infty}\frac{dx_-/dv}{(x_++\alpha x_-)x_-}\,<\,-\frac{1+\alpha}{2\alpha}h_c.
  \end{align}
  Therefore, these spacetimes are of Class~B. Consequently, these spacetimes are of Type~1-B.
\end{proof}

\subsubsection{Examples of Type~2-A in Case~3}

Next, we show some examples of Type~2-A spacetimes.
\begin{example}
  Case~3 spacetimes with $x_+=\frac{a_+}{v^n}$ and $x_-=\frac{a_-}{v^k}$ satisfying $ 0<n<1$ and $k\ge n$ are of Type~2-A.
  Here, $a_+>a_-$ holds in the case $k=n$.
\end{example}
\begin{proof}
  We evaluate Eq.~\eqref{eq-Case3-Type2-2} of Corollary \ref{corollary-Case3-Type2};
  \begin{align}
    \lim_{v\rightarrow\infty}\frac{\alpha dx_+/dv+(1-\alpha)dx_-/dv}{(x_+-x_-)^2}
    \,=\,\lim_{v\rightarrow\infty}\frac{-\alpha n a_+-(1-\alpha)ka_-/v^{k-n}}{(a_+-a_-/v^{k-n})^2} v^{n-1}
    \,=\,0.
  \end{align}
  Therefore, these spacetimes are of Class~2. 
  Next, we evaluate Eq.~\eqref{eq-Case3-TypeA-1} of Proposition \ref{prop-Case3-TypeA};
  \begin{align}
    \lim_{v\rightarrow\infty}x_-(v)\exp{\left[\frac{h_{c}}{2}(1-\varepsilon)X_+(v)\right]}
    \,=\,\lim_{v\rightarrow\infty}\frac{a_-}{v^k}\exp{\left[\frac{h_ca_+(1-\varepsilon)}{2(1-n)}v^{1-n}\right]}
    \,=\,\infty.
  \end{align}
  Therefore, these spacetimes are of Class~A. Consequently, these spacetimes are of Type~2-A.
\end{proof}
\begin{example}
  Case~3 spacetimes with $x_\pm=\frac{a_\pm}{v}$ satisfying $\sqrt{\frac{2}{h_c}}<\sqrt{a_+}-\sqrt{a_-}$
  are of Type~2-A.
\end{example}
\begin{proof}
  We evaluate Eq.~\eqref{eq-Case3-Type2-2} of Corollary \ref{corollary-Case3-Type2};
  \begin{align}
    \lim_{v\rightarrow\infty}\frac{\alpha dx_+/dv+(1-\alpha)dx_-/dv}{(x_+-x_-)^2}\,=\,-\frac{\alpha a_++(1-\alpha)a_-}{(a_+-a_-)^2}.
  \end{align}
  Since choosing $\alpha=\frac{\sqrt{a_-}}{\sqrt{a_+}+\sqrt{a_-}}$ leads to
  \begin{align}
    -\frac{\alpha a_++(1-\alpha)a_-}{(a_+-a_-)^2}-\left[-\frac{h_c}{2}\alpha(1-\alpha)\right]
    \,=\,\frac{\sqrt{a_+a_-}}{(\sqrt{a_+}+\sqrt{a_-})^2}\left[\frac{h_c}{2}-\frac{1}{(\sqrt{a_+}-\sqrt{a_-})^2}\right]
    \,>\,0,
  \end{align}
  we have
  \begin{align}
    \lim_{v\rightarrow\infty}\frac{\alpha dx_+/dv+(1-\alpha)dx_-/dv}{(x_+-x_-)^2}
        \,>\,-\frac{h_c}{2}\alpha (1-\alpha).
  \end{align}
  Therefore, these spacetimes are of Class~2. Next, we evaluate Eq.~\eqref{eq-Case3-TypeA-1} of Proposition \ref{prop-Case3-TypeA}.
  Since $\frac{2}{h_c}<(\sqrt{a_+}-\sqrt{a_-})^2<a_+$ holds, the condition $2<h_ca_+(1-\varepsilon)$
  also holds by choosing a sufficiently small $\varepsilon$. Then, we have
  \begin{align}
    \lim_{v\rightarrow\infty}x_-(v)\exp{\left[\frac{h_{c}}{2}(1-\varepsilon)X_+(v)\right]}
    \,=\,\lim_{v\rightarrow\infty}a_-v^{\frac{h_{c}a_+(1-\varepsilon)}{2}-1}\,=\,\infty.
  \end{align}
  Therefore, these spacetimes are of Class~A. Consequently, these spacetimes are of Type~2-A.
\end{proof}
\begin{example}
  Case~3 spacetimes with $x_+=\frac{a_+}{v}$ and $x_-=\frac{a_-}{v^k}$ satisfying $2<2k<{h_{c}}a_+$ are of Type~2-A.
\end{example}
\begin{proof}
  We evaluate Eq.~\eqref{eq-Case3-Type2-2} of  Corollary \ref{corollary-Case3-Type2}. 
  Since $2<a_+h_c$ holds, taking a sufficiently small $\alpha$ leads to $\frac{1}{a_+}<\frac{1-\alpha}{2}h_c$, and then we have
  \begin{align}
    \lim_{v\rightarrow\infty}\frac{\alpha dx_+/dv+(1-\alpha)dx_-/dv}{(x_+-x_-)^2}
    \,=\,-\frac{\alpha}{a_+}
    \,>\,-\frac{h_c}{2}\alpha(1-\alpha).
  \end{align}
  Therefore, these spacetimes are of Class~2. Next, we evaluate Eq.~\eqref{eq-Case3-TypeA-1} of Proposition \ref{prop-Case3-TypeA}. Since $2k<h_ca_+$ holds, choosing a sufficiently small $\varepsilon$ leads to 
  the condition $2k<h_ca_+(1-\varepsilon)$. Then, we have
  \begin{align}
    \lim_{v\rightarrow\infty}x_-(v)\exp{\left[\frac{h_{c}}{2}(1-\varepsilon)X_+(v)\right]}
    \,=\,\lim_{v\rightarrow\infty}a_-v^{\frac{h_{c}a_+(1-\varepsilon)}{2}-k}
    \,=\,\infty.
  \end{align}
Therefore, these spacetimes are of Class~A. Consequently, these spacetimes are of Type~2-A.
\end{proof}

\subsubsection{Examples of Type~2-B in Case~3}

Finally, we show some examples of Type~2-B spacetimes in Case~3.  
\begin{example}
  Case~3 spacetimes with $x_+=\frac{a_+}{v^n}$ and $x_-=a_-\exp{\left(-\frac{v}{L}\right)}$ satisfying $0<n<1$ and $L>0$ are of Type~2-B.
\end{example}
\begin{proof}
  We evaluate Eq.~\eqref{eq-Case3-Type2-2} of Corollary \ref{corollary-Case3-Type2};
  \begin{align}
    \lim_{v\rightarrow\infty}\frac{\alpha dx_+/dv+(1-\alpha)dx_-/dv}{(x_+-x_-)^2}
    \,=\,\lim_{v\rightarrow\infty}\frac{-\alpha n a_+-(1-\alpha)(a_-/L)v^{n+1}\exp{(-v/L)}}{(a_+-a_-v^n\exp{(-v/L)})^2}v^{n-1}
    \,=\,0.
  \end{align}
  Therefore, these spacetimes are of Class~2. Next, we evaluate Eq.~\eqref{eq-Case3-TypeB-3} of Corollary \ref{corollary-Case3-TypeB};
  \begin{align}
    \lim_{v\rightarrow\infty}\frac{dx_-/dv}{(x_++\alpha x_-)x_-}
    \,=\,\lim_{v\rightarrow\infty}\frac{-v^n/L}{a_++\alpha a_-v^n\exp{(-v/L)}}=-\infty.
  \end{align}
  Therefore, these spacetimes are of Class~B. Consequently, these spacetimes are of Type~2-B.
\end{proof}
\begin{example}
  Case~3 spacetimes with $x_+=\frac{a_+}{v}$ and $x_-=\frac{a_-}{v^k}$ satisfying $2<a_+h_c<2k$ are of Type~2-B.
\end{example}
\begin{proof}
  We evaluate Eq.~\eqref{eq-Case3-Type2-2} of Corollary \ref{corollary-Case3-Type2};
  \begin{align}
    \lim_{v\rightarrow\infty}\frac{\alpha dx_+/dv+(1-\alpha)dx_-/dv}{(x_+-x_-)^2}
    \,=\,\lim_{v\rightarrow\infty}\frac{-\alpha a_+-(1-\alpha)ka_-/v^{k-1}}{(a_+-a_-/v^{k-1})^2}
    \,=\,-\frac{\alpha}{a_+}.
  \end{align}
  Since $-\frac{1}{a_+}>-\frac{h_c}{2}$ holds, choosing a sufficiently small $\alpha$ leads to $-\frac{\alpha}{a_+}>-\frac{h_c}{2}\alpha(1-\alpha)$. Then for such an $\alpha$, the condition of Eq.~\eqref{eq-Case3-Type2-2} is satisfied, and
    therefore, these spacetimes are of Class~2. Next, we evaluate Eq.~\eqref{eq-Case3-TypeB-3}
  of Corollary \ref{corollary-Case3-TypeB};
  \begin{align}
    \lim_{v\rightarrow\infty}\frac{dx_-/dv}{(x_++\alpha x_-)x_-}\,=\,-\frac{k}{a_+}.
  \end{align}
  Since $-\frac{k}{a_+}<-\frac{h_c}{2}$ holds, choosing sufficiently large $\alpha$ leads to $-\frac{k}{a_+}<-\frac{1+\alpha}{2\alpha}h_c$. Then, for such $\alpha$, the condition of Eq.~\eqref{eq-Case3-TypeB-3} is satisfied,
  and
  therefore, these spacetimes are of Class~B. Consequently, these spacetimes are of Type~2-B.
\end{proof}

\pagebreak

\section{Finiteness of affine parameters in solvable models}
\label{Finiteness-of-affine-parameters}

In this appendix, we prove the finiteness of outgoing null geodesics inside the event horizon
in solvable models of Sec.~\ref{Sec:Solvable-models}. 
Our analyses will be done by applying Lemma~\ref{Lemma:finiteness-of-lambda},
except for the model of Sec.~\ref{Sec:Solvable-Deq0} for which a direct calculation is required.

\subsection{Models of Sec.~\ref{Sec:Solvable-0<n<1}}
\label{Finiteness-Solvable-0<n<1}

In Sec.~\ref{Sec:Solvable-0<n<1}, the models with $x_\pm \approx a_\pm/v^n$ with $0<n<1$ are considered.
Since all outgoing null geodesics inside the event horizon
approach the attractive separatrix located at a close position to the inner AH, $x(v)$ behaves as $x\approx a_-/v^n$.
Integration of $x_\pm$ and $x$ gives
\begin{equation}
X_+\,\approx\, \frac{a_+}{1-n}v^{1-n}, \qquad \textrm{and}\qquad X\,\approx\, X_-\,\approx\, \frac{a_-}{1-n}v^{1-n},
\end{equation}
and then, we have
\begin{equation}
S(v)\,\approx\, \left[-\frac{h_c(a_+-a_-)}{2(1-n)}+O(\varepsilon,\varepsilon_\pm)\right]v^{1-n}.
\end{equation}
Thus, the finiteness of the affine parameter is guaranteed from Lemma~\ref{Lemma:finiteness-of-lambda}.

\subsection{Models of Sec.~\ref{Sec:Solvable-n=1}}

In Sec.~\ref{Sec:Solvable-n=1}, the models with $x_\pm \approx a_\pm/v$ are considered. 
Since the solutions to this configuration are obtained in the cases $D>0$ and $D=0$ separately,
where $D$ is defined in Eq.~\eqref{Def:b-and-D}, 
we discuss the extendibility in these two cases one by one.

\subsubsection{Models of $D>0$}

Models of $D>0$ are discussed in Sec.~\ref{Sec:Solvable-D>0}, and 
there, it is found that there exists an attractive separatrix $x=(b-d)/v$ which all outgoing null 
geodesics in the black hole region are attracted, where $b$ and $d$ are defined in Eqs.~\eqref{Def:b-and-D} and \eqref{Def:d}. 
For this reason, we have $x\approx (b-d)/v$, since the deviation from the attractive separatrix is subdominant.
Then, we have $X_\pm \approx a_\pm\log v$ and $X\,\approx\, (b-d)\log v$, and hence,
\begin{equation}
S(v)\,\approx\, \left[-1-h_cd+O(\varepsilon,\varepsilon_\pm)\right]\log v.
\end{equation}
Thus, the finiteness of the affine parameter is guaranteed from Lemma~\ref{Lemma:finiteness-of-lambda}.

\subsubsection{Models of $D=0$}

Models of $D=0$ are discussed in Sec.~\ref{Sec:Solvable-Deq0}, and there,  it is found that there exists a degenerate separatrix $x=b/v$ (that is, the event horizon) which all outgoing null geodesics in the black hole region are attracted.
Since it turns out that Lemma~\ref{Lemma:finiteness-of-lambda} is not applicable to this situation,
we recalculate the bound on $\lambda$ using Eq.~\eqref{lambda-bound}. 
Since models in Sec.~\ref{Sec:Solvable-models} have assumed $h(v,x)=h_c$, 
Eq.~\eqref{lambda-bound} leads to the inequality,
\begin{equation}
\lambda-\lambda_0 \,<\, \frac{(1+\varepsilon)A_\infty\exp\left[-\tilde{S}(v_0)\right]}{\dot{v}(\lambda_0)A(\lambda_0)}
\int_{v_0}^v \exp\left[\tilde{S}(v^\prime)\right]dv^\prime 
\end{equation}
with 
\begin{equation}
\tilde{S}(v)\,:=\, h_c\left[X(v)-\frac{1}{2}(X_+(v)+X_-(v))\right].
\end{equation}
Using $x_\pm=a_\pm/v$ and the solution for $x(v)$ given in Eq.~\eqref{Solution-D=0-insideBH},
$\tilde{S}(v)$ is calculated as
\begin{equation}
\tilde{S}(v)\,=\, \log\left[v^{-1}\left(\log\frac{v}{v_e}\right)^{-2}\right].
\end{equation}
Then, integrating $\exp\left[\tilde{S}(v^\prime)\right]$, we have 
\begin{equation}
\int_{v_0}^v \exp\left[\tilde{S}(v^\prime)\right]dv^\prime
\, = \, \frac{1}{\log(v_0/v_e)}-\frac{1}{\log(v/v_e)}.
\end{equation}
This demonstrates the finiteness of $\lambda$ in the limit $v\to\infty$.

\subsection{Models of Sec.~\ref{Sec:Solvable-n>1}}

In Sec.~\ref{Sec:Solvable-n>1}, the models with $x_\pm \approx \pm a/v^n$ with $n>1$ are considered.
All outgoing null geodesics inside the event horizon
approach the degenerate separatrix that is the event horizon, and their behavior is given as $x\approx (2/h_c)/(v_e-v)$.
Therefore, $X_\pm$ give subdominant contribution to $S(v)$, and the leading-order behavior
is determined by $X(v)\approx\, -({2}/{h_c})\log v$. Then, we have
\begin{equation}
S(v)\,\approx\, \left[-2+O(\varepsilon)\right]\log v,
\end{equation}
and thus, the finiteness of the affine parameter is guaranteed from Lemma~\ref{Lemma:finiteness-of-lambda}.

\subsection{Models of Sec.~\ref{Sec:Solvable-Type2Case2-Type2BCase3}}

In Sec.~\ref{Sec:Solvable-Type2Case2-Type2BCase3}, the models with $x_+(v) \approx a/v$ with $a=4/h_c$ 
and $x_-\approx b/v^k$ with $k>2$ are considered.
In those models, all outgoing null geodesics inside the event horizon
approach the attractive separatrix with the behavior  $x=O(1/v^2)$.
Therefore, $X_-$ and $X$ give subdominant contribution to $S(v)$, and the leading-order behavior
is determined by $X_+(v)\approx (4/h_c)\log v$. Then, we have
\begin{equation}
S(v)\,\approx\, \left[-2+O(\varepsilon_+)\right]\log v,
\end{equation}
and thus, the finiteness of the affine parameter is guaranteed from Lemma~\ref{Lemma:finiteness-of-lambda}.

\pagebreak


\printbibliography
\end{document}